\long\def\sidebyside#1#2{%
 \hbox to\textwidth{\vtop{\hsize=.6\textwidth%

 \advance\hsize by -.32\columnsep
\parindent=0pt
\centering

 #1\vskip1sp}\hskip\columnsep\vtop{\hsize=.6\textwidth%
 \advance\hsize by -.32\columnsep
\parindent=0pt
\centering
#2

}\hfill}}
\theoremstyle{definition}
 \newtheorem{ttheorem}{Theorem}
 \newtheorem{rremark}[ttheorem]{Remark}
 \newtheorem{eexample}[ttheorem]{Example}
 \newtheorem{llemma}[ttheorem]{Lemma}
 \newtheorem{ccorollary}[ttheorem]{Corollary}
 \newtheorem{ddefinition}[ttheorem]{Definition}
\begin{document}
 \newcommand {\Ab}{\mathbf{A}}  
\newcommand {\Bb}{\mathbf{B}}
\newcommand {\Cb}{\mathbb{C}}
\newcommand {\Cbf}{\mathbf{C}}
\newcommand {\Db}{\mathbb{D}}
\newcommand {\Hbb}{\mathbf{H}}
\newcommand {\Mb}{\mathbf{M}}
\newcommand {\Pb}{\mathbf{P}}
\newcommand {\Tb}{\mathbb{T}}
\newcommand {\mb}{\mathbf{m}}
\newcommand {\kb}{\mathbf{k}}
\newcommand {\op}{\mathbf{\oplus}} 
\newcommand {\om}{\mathbf{\ominus}} 
\newcommand {\od}{\mathbf{\otimes}}   
\newcommand {\oD}{\mathbf{\odot}}   
\newcommand {\sqp}{\boxplus}
\newcommand {\sqm}{\boxminus}
\newcommand {\sqd}{\boxdot}
\newcommand {\sonu}{\begin{pmatrix} s \\ u \end{pmatrix}}
\newcommand {\tonv}{\begin{pmatrix} t \\ v \end{pmatrix}}
\newcommand {\tonw}{\begin{pmatrix} t \\ w \end{pmatrix}}
\newcommand {\tonub}{\begin{pmatrix} t \\ \ub \end{pmatrix}}
\newcommand {\tonvb}{\begin{pmatrix} t \\ \vb \end{pmatrix}}
\newcommand {\tonwb}{\begin{pmatrix} t \\ \wb \end{pmatrix}}
\newcommand {\tonxb}{\begin{pmatrix} t \\ \xb \end{pmatrix}}
\newcommand {\tonxbprime}{\begin{pmatrix}t^\prime\\ \xb^\prime\end{pmatrix}}
\newcommand {\tonyb}{\begin{pmatrix} t \\ \yb \end{pmatrix}}
\newcommand {\tonzb}{\begin{pmatrix} t \\ \zb \end{pmatrix}}
\newcommand {\tauonvb}{\begin{pmatrix} \tau \\ \vb \end{pmatrix}}
\newcommand {\tauonxb}{\begin{pmatrix} \tau \\ \xb \end{pmatrix}}
\newcommand {\gamao}{\gamma_{_{1}}}
\newcommand {\gamat}{\gamma_{_{2}}}
\newcommand {\asubM}{\|\ab\|_{\lower.1ex \hbox {\scriptsize {M}}}}
\newcommand {\hsubM}{\|\hb\|_{\lower.1ex \hbox {\scriptsize {M}}}}
\newcommand {\hsubMs}{\|\hb\|_{\lower.1ex \hbox {\scriptsize {M}}}^2}
\newcommand {\hasubM}{\|\hb_{\ab}\|_{\lower.1ex \hbox {\scriptsize {M}}}}
\newcommand {\hbsubM}{\|\hb_{\bb}\|_{\lower.1ex \hbox {\scriptsize {M}}}}
\newcommand {\hcsubM}{\|\hb_{\cb}\|_{\lower.1ex \hbox {\scriptsize {M}}}}
\newcommand {\AsubM}{\|\Ab\|_{\lower.1ex \hbox {\scriptsize {M}}}}
\newcommand {\AsubMs}{\|\Ab\|_{\lower.1ex \hbox {\scriptsize {M}}}^2}
\newcommand {\AosubM}{\|A_1\|_{\lower.1ex \hbox {\scriptsize {M}}}}
\newcommand {\AtsubM}{\|A_2\|_{\lower.1ex \hbox {\scriptsize {M}}}}
\newcommand {\BsubM}{\|\Bb\|_{\lower.1ex \hbox {\scriptsize {M}}}}
\newcommand {\CsubM}{\|\Cb\|_{\lower.1ex \hbox {\scriptsize {M}}}}
\newcommand {\CsubMf}{\|\Cbf\|_{\lower.1ex \hbox {\scriptsize {M}}}}
\newcommand {\CsubMs}{\|\Cb\|_{\lower.1ex \hbox {\scriptsize {M}}}^2}
\newcommand {\asupM}{\|\ab\|^{\lower.1ex \hbox {\scriptsize {M}}}}
\newcommand {\aspM}{\ab^{\lower.1ex \hbox {\scriptsize {M}}}}
\newcommand {\asbM}{\ab_{\lower.1ex \hbox {\scriptsize {M}}}}
\newcommand {\AsupM}{\|\Ab\|^{\lower.1ex \hbox {\scriptsize {M}}}}
\newcommand {\BsupM}{\|\Bb\|^{\lower.1ex \hbox {\scriptsize {M}}}}
\newcommand {\CsupM}{\|\Cb\|^{\lower.0ex \hbox {\scriptsize {M}}}}
\newcommand {\CsupMf}{\|\Cbf\|^{\lower.0ex \hbox {\scriptsize {M}}}}
\newcommand {\asubP}{\|\ab\|_{\lower.1ex \hbox {\scriptsize {P}}}}
\newcommand {\asbP}{\ab_{\lower.1ex \hbox {\scriptsize {P}}}}
\newcommand {\AsubP}{\|\Ab\|_{\lower.1ex \hbox {\scriptsize {P}}}}
\newcommand {\BsubP}{\|\Bb\|_{\lower.1ex \hbox {\scriptsize {P}}}}
\newcommand {\CsubP}{\|\Cb\|_{\lower.1ex \hbox {\scriptsize {P}}}}
\newcommand {\dsbs}{ds_{\rm B}^2}
\newcommand {\dsdbs}{ds_{\rm DB}^2}
\newcommand {\phue}{\lower.3ex \hbox {\scriptsize {UE}}}
\newcommand {\pheu}{\lower.3ex \hbox {\scriptsize {EU}}}
\newcommand {\phum}{\lower.3ex \hbox {\scriptsize {UM}}}
\newcommand {\phmu}{\lower.3ex \hbox {\scriptsize {MU}}}
\newcommand {\phme}{\lower.3ex \hbox {\scriptsize {ME}}}
\newcommand {\phem}{\lower.3ex \hbox {\scriptsize {EM}}}
\newcommand {\pue}{\phi_{\pheu}} 
\newcommand {\peu}{\phi_{\phue}} 
\newcommand {\pum}{\phi_{\phmu}} 
\newcommand {\pmu}{\phi_{\phum}} 
\newcommand {\pme}{\phi_{\phem}} 
\newcommand {\pem}{\phi_{\phme}} 
%
%
\newcommand {\ph}{\phantom{K}}
\newcommand {\lowerkluma}{\lower1.5ex \hbox {\phantom{K}}}
\newcommand {\lowerklumb}{\lower3.5ex \hbox {\phantom{K}}}
\newcommand {\lowerklumc}{\lower7.0ex \hbox {\phantom{K}}}
\newcommand {\pp}{\lower.6ex \hbox {\footnotesize {$P_{1} P_{2}$}}}
\newcommand {\mH}{m_{^H}^{\phantom{O}}\!} 
\newcommand {\mAzp}{m_{^{A_0^\prime}}^{\phantom{O}}\!} 
\newcommand {\mI}{m_{^I}^{\phantom{O}}\!} 
\newcommand {\mA}{m_{^A}^{\phantom{O}}\!} 
\newcommand {\mAone}{m_{^{A_1}}^{\phantom{O}}\!} 
\newcommand {\mAtwo}{m_{^{A_2}}^{\phantom{O}}\!} 
\newcommand {\mO}{m_{^O}^{\phantom{O}}\!} 
\newcommand {\mOp}{m_{^{O^\prime}}^{\phantom{O}}\!} 
\newcommand {\mF}{m_{^F}^{\phantom{O}}\!} 
\newcommand {\mP}{m_{^P}^{\phantom{O}}\!} 
\newcommand {\mPpm}{m_{^{P_{\pm}}}^{\phantom{O}}\!} 
\newcommand {\mPtet}{m_{^{P(\theta)}}^{\phantom{O}}\!} 
\newcommand {\mQ}{m_{^Q}^{\phantom{O}}\!} 
\newcommand {\mPpr}{m_{ {P^\prime}}^{\phantom{O}}\!} 
\newcommand {\mPt}{m_{^{P_3}}^{\phantom{O}}\!} 
\newcommand {\mE}{m_{^E}^{\phantom{O}}\!} 
\newcommand {\mEu}{m_{^{E_1}}^{\phantom{O}}\!} 
\newcommand {\mEt}{m_{^{E_3}}^{\phantom{O}}\!} 
\newcommand {\mMab}{m_{^{M_{12}}}^{\phantom{O}}\!} 
\newcommand {\mpp}{\mb_{\pp}\!} 
\newcommand {\mpps}{\mb_{\pp}^s\!} 
 \newcommand {\lowAA}{\lower.3ex \hbox {\scriptsize {$\Ab$}}}
 \newcommand {\lowBB}{\lower.3ex \hbox {\scriptsize {$\Bb$}}}
 \newcommand {\lowCC}{\lower.3ex \hbox {\scriptsize {$\Cb$}}}
 \newcommand {\lowA}{\lower.3ex \hbox {\scriptsize {$A$}}}
 \newcommand {\lowB}{\lower.3ex \hbox {\scriptsize {$B$}}}
 \newcommand {\lowC}{\lower.3ex \hbox {\scriptsize {$C$}}}
 \newcommand {\lowE}{\lower.3ex \hbox {\tiny       {$\rm E$}}}
 \newcommand {\lowM}{\lower.3ex \hbox {\tiny       {$\rm M$}}}
 \newcommand {\lowtM}{\lower.01ex \hbox {\tiny       {$\rm M$}}}
 \newcommand {\lowtE}{\lower.01ex \hbox {\tiny       {$\rm E$}}}
 \newcommand {\lowtU}{\lower.01ex \hbox {\tiny       {$\rm U$}}}
\newcommand {\dsms}{ds_{\lowtM}^2}
\newcommand {\dsus}{ds_{\lowtU}^2}
 \newcommand {\lowEC}{\lower.3ex \hbox {\scriptsize {$EC$}}}
 \newcommand {\lowER}{\lower.3ex \hbox {\scriptsize {$ER$}}}
 \newcommand {\lowU}{\lower.3ex \hbox {\tiny       {\rm U}}}
 \newcommand {\lowDU}{\lower.3ex \hbox {\scriptsize {\rm DU}}}
 \newcommand {\lowCU}{\lower.3ex \hbox {\scriptsize {\rm CU}}}
 \newcommand {\lowDM}{\lower.3ex \hbox {\scriptsize {\rm DM}}}
 \newcommand {\lowCM}{\lower.3ex \hbox {\scriptsize {\rm CM}}}
 \newcommand {\lowo}{\lower.3ex \hbox {\scriptsize {\rm 0}}}
 \newcommand {\lowf}{\lower.3ex \hbox {\scriptsize {\rm f}}}
\newcommand {\betaA}{\beta_{\!\lowAA}\!} 
\newcommand {\betaB}{\beta_{\lowBB}\!} 
\newcommand {\betaC}{\beta_{\lowCC}\!} 
\newcommand {\gammaA}{\gamma_{\lowAA}\!} 
\newcommand {\gammaB}{\gamma_{\lowBB}\!} 
\newcommand {\gammaC}{\gamma_{\lowCC}\!} 
\newcommand {\gammaR}{\gamma_{_{R}}^{\phantom{1}}} 
\newcommand {\gammaRs}{\gamma_{_{R}}^{2}} 
\newcommand {\betaAs}{\beta_{\!\lowAA}^2\!} 
\newcommand {\betaBs}{\beta_{\lowBB}^2\!} 
\newcommand {\betaCs}{\beta_{\lowCC}^2\!} 
\newcommand {\gammaAs}{\gamma_{\lowAA}^2\!} 
\newcommand {\gammaBs}{\gamma_{\lowBB}^2\!} 
\newcommand {\gammaCs}{\gamma_{\lowCC}^2\!} 
\newcommand {\Ff}{F_{\lowf}} 
\newcommand {\Fo}{F_{\lowo}} 
\newcommand {\rhof}{\rho_{\lowf}} 
\newcommand {\rhos}{\rho_{s}^{\phantom{O}}} 
\newcommand {\lowmbpa}{\lower.6ex \hbox {\footnotesize {$\ome\bb\ope\ab$}}}
\newcommand {\gmbpa}{\gamma_{\lowmbpa}\!} 
\newcommand {\FE}{F_{\!\lowE}}
\newcommand {\FM}{F_{\!\lowM}}
\newcommand {\FU}{F_{\!\lowU}}
\newcommand {\FEC}{F_{\!\lowEC}}
\newcommand {\FER}{F_{\!\lowER}}
\newcommand {\uvc}{\displaystyle\frac{\lower.6ex \hbox {$\ub\ccdot\vb$}}{c^2}}
\newcommand {\uvs}{\displaystyle\frac{\lower.6ex \hbox {$\ub\ccdot\vb$}}{s^2}}
\newcommand {\unpuvc}{ \lower.6ex \hbox {$1 + \uvc$} }
\newcommand {\unpuvs}{ \lower.6ex \hbox {$1 + \uvs$} }
\newcommand {\uvcbar}{\displaystyle\frac{\lower.6ex \hbox
            {$\ubar\ccdot\vb$}}{c^2}}
\newcommand {\unpuvcbar}{ \lower.6ex \hbox {$1 + \uvcbar$} }
\newcommand {\vwc}{\displaystyle\frac{\lower.6ex\hbox{$\vb\ccdot\wb$}}{c^2}}
\newcommand {\unpvwc}{ \lower.6ex \hbox {$1 + \vwc$} }
\newcommand {\subE}{\!\lower.1ex \hbox {\tiny E}}
\newcommand {\subG}{\!\lower.1ex \hbox {\tiny G}}
\newcommand {\subH}{\!\lower.1ex \hbox {\tiny H}}
\newcommand {\subEs}{\!\lower.1ex \hbox {\tiny {E,S}}}
\newcommand {\subEt}{\!\lower.1ex \hbox {\tiny {E,2}}}
\newcommand {\subEC}{\!\lower.1ex \hbox {\tiny EC}}
\newcommand {\subU}{\!\lower.1ex \hbox {\tiny U}}
\newcommand {\subM}{\!\lower.1ex \hbox {\tiny M}}
\newcommand {\subC}{\!\lower.1ex \hbox {\tiny C}}
\newcommand {\subbE}{\!\lower.01ex \hbox {\tiny E}}
\newcommand {\subbU}{\!\lower.01ex \hbox {\tiny U}}
\newcommand {\subbC}{\!\lower.01ex \hbox {\tiny C}}
\newcommand {\subbM}{\!\lower.01ex \hbox {\tiny M}}
\newcommand {\subbG}{\!\lower.01ex \hbox {\tiny G}}
\newcommand {\subbH}{\!\lower.01ex \hbox {\tiny H}}
\newcommand {\ope}{\op_{_{\subE}}\!\,} 
\newcommand {\opg}{\op_{\subG}\!\,} 
\newcommand {\oph}{\op_{\subH}\!\,} 
\newcommand {\odg}{\od_{\subG}\!\,} 
\newcommand {\odh}{\od_{\subH}\!\,} 
\newcommand {\opec}{\op_{\subEC}\!\,} 
\newcommand {\omec}{\om_{\subEC}\!\,} 
\newcommand {\opes}{\op_{\subEs}\!\,} 
\newcommand {\opet}{\op_{\subEt}\!\,} 
\newcommand {\odes}{\od_{\subEs}\!\,} 
\newcommand {\ome}{\om_{_{\subE}}\!\,} 
\newcommand {\ode}{\od_{_{\subE}}\!\,} 
\newcommand {\sqpe}{\sqp_{_{\,\subbE}}\!\,} 
\newcommand {\sqpg}{\sqp_{\,\subG}\!\,} 
\newcommand {\sqph}{\sqp_{\,\subH}\!\,} 
\newcommand {\sqme}{\sqm_{_{\,\subbE}}\!\,} 
\newcommand {\gyrg}{\gyr_{\subG}\!\,} 
\newcommand {\gyrh}{\gyr_{\subH}\!\,} 
\newcommand {\opu}{\op_{_{\subU}}\!\,} 
\newcommand {\omu}{\om_{_{\subU}}\!\,} 
\newcommand {\odu}{\od_{_{\subU}}\!\,} 
\newcommand {\sqpu}{\sqp_{_{\,\subbU}}\!\,} 
\newcommand {\sqmu}{\sqm_{_{\,\subbU}}\!\,} 
\newcommand {\opc}{\op_{_{\subC}}\!\,} 
\newcommand {\omc}{\om_{_{\subC}}\!\,} 
\newcommand {\odc}{\od_{_{\subC}}\!\,} 
\newcommand {\sqpc}{\sqp_{_{\,\subbC}}\!\,} 
\newcommand {\sqmc}{\sqm_{_{\,\subbC}}\!\,} 
\newcommand {\opm}{\op_{_{\subM}}\!\,} 
\newcommand {\omm}{\om_{_{\subM}}\!\,} 
\newcommand {\odm}{\od_{_{\subM}}\!\,} 
\newcommand {\sqpm}{\sqp_{_{\,\subbM}}\!\,} 
\newcommand {\sqmm}{\sqm_{_{\,\subbM}}\!\,} 
\newcommand {\ccdot}{\mathbf{\cdot }} 
\newcommand {\kcdot}{\mathbf{\!\cdot\!}} 
\newcommand {\XX}{\mathbf{X}}
\newcommand {\ab}{\mathbf{a}}
\newcommand {\bb}{\mathbf{b}}
\newcommand {\cb}{\mathbf{c}}
\newcommand {\db}{\mathbf{d}}
\newcommand {\abp}{\mathbf{a^\prime}}
\newcommand {\bbp}{\mathbf{b^\prime}}
\newcommand {\cbp}{\mathbf{c^\prime}}
\newcommand {\dbp}{\mathbf{d^\prime}}
\newcommand {\eb}{\mathbf{e}}
\newcommand {\fb}{\mathbf{f}}
\newcommand {\gb}{\mathbf{g}}
\newcommand {\hb}{\mathbf{h}}
\newcommand {\ob}{\mathbf{o}}
\newcommand {\obp}{\mathbf{o^\prime}}
\newcommand {\pb}{\mathbf{p}}
\newcommand {\pbp}{\mathbf{p^\prime}}
\newcommand {\qb}{\mathbf{q}}
\newcommand {\rb}{\mathbf{r}}
\newcommand {\sbd}{\mathbf{s}}
\newcommand {\ub}{\mathbf{u}}
\newcommand {\vb}{\mathbf{v}}
\newcommand {\Ob}{\mathbf{O}}
\newcommand {\nb}{\mathbf{n}}
\newcommand {\ubhat}{\hat{\ub}}
\newcommand {\vbhat}{\hat{\vb}}
\newcommand {\wbhat}{\hat{\wb}}
\newcommand {\nbhat}{\hat{\nb}}
\newcommand {\vbo}{\vb_{\!{_0}}}
\newcommand {\wbo}{\wb_{\!{_0}}}
\newcommand {\vbom}{\vb_{\!{_\om}}\!}
\newcommand {\vbsqm}{\vb_{\!{_\sqm}}\!}
\newcommand {\wb}{\mathbf{w}}
\newcommand {\bz}{\mathbf{0}}
\newcommand {\tb}{\mathbf{t}}
\newcommand {\xb}{\mathbf{x}}
\newcommand {\yb}{\mathbf{y}}
\newcommand {\zb}{\mathbf{z}}
\newcommand {\pib}{\mathbf{\pi}}
\newcommand {\zerb}{\mathbf{0}}
\newcommand {\oneb}{\mathbf{1}}
\newcommand {\sigmab}{\boldsymbol{\sigma}}
\newcommand {\omegab}{\pmb{\omega}}
\newcommand {\ubar}{\bar{\ub}}
\newcommand {\vbar}{\bar{\vb}}
\newcommand {\wbar}{\bar{\wb}}
\newcommand {\BB}{\mathbb{B}}
\newcommand {\CC}{\mathbb{C}}
\newcommand {\CCbar}{\bar{\CC}}
\newcommand {\DD}{\mathbb{D}}
\newcommand {\NN}{\mathbb{N}}
\newcommand {\Ib}{\mathbb{I}}
\newcommand {\Dcu}{{\DD}_{c=1}}
\newcommand {\Fb}{\mathbb{F}}
\newcommand {\Fbold}{\mathbf{F}}
\newcommand {\Nb}{\mathbb{N}}
\newcommand {\AAb}{\mathbb{A}}
\newcommand {\AAbt}{\mathbb{A}_3}
\newcommand {\AAbf}{\mathbb{A}_4}
\newcommand {\AAbN}{\mathbb{A}_N}
\newcommand {\Rb}{\mathbb{R}}
\newcommand {\Sb}{\mathbb{S}}
\newcommand {\Hb}{\mathbb{H}}
\newcommand {\Lb}{\mathbb{L}}
\newcommand {\Rp}{\mathbb{R}^+}
\newcommand {\Rge}{\Rb^{\ge0}}
\newcommand {\Bn}{\Bb^n}
\newcommand {\Rn}{\Rb^n}
\newcommand {\Rnpu}{\Rb^{n+1}}
\newcommand {\Rc}{{\Rb}_{c}}
\newcommand {\Rcu}{{\Rb}_{c=1}}
\newcommand {\Rctwou}{{\Rb}_{c=1}^2}
\newcommand {\Rstwou}{{\Rb}_{s=1}^2}
\newcommand {\Rstwo}{{\Rb}_{s}^2}
\newcommand {\Rcthreeu}{{\Rb}_{c=1}^3}
\newcommand {\Rsthreeu}{{\Rb}_{s=1}^3}
\newcommand {\Rcn}{{\Rb}_{c}^{n}}
\newcommand {\Tsn}{{\Tb}_{s}^{n+1}}
\newcommand {\Tcn}{{\Tb}_{c}^{n+1}}
\newcommand {\Rsn}{{\Rb}_{s}^{n}}
\newcommand {\Bcn}{{\BB}_{c}^{n}}
\newcommand {\Rct}{{\Rb}_{c}^{3}}
\newcommand {\Rst}{{\Rb}_{s}^{3}}
\newcommand {\Rctt}{{\Rb}_{c,t}^{3}}
\newcommand {\Rctn}{{\Rb}_{c,t}^{n}}
\newcommand {\Rctwo}{{\Rb}_{c}^{2}}
\newcommand {\Rt}{\Rb^3}
\newcommand {\Cplus}{\CC^+}
\newcommand {\Dplus}{\Db^+}
\newcommand {\Fplus}{\Fb^+}
\newcommand {\Rplus}{\Rb^+}
\newcommand {\Rcplus}{\Rb_c^+}
\newcommand {\cpg}{(\Cplus\!\times G,\ccdot)}
\newcommand {\fpg}{(\Fplus\!\times G,\ccdot)}
\newcommand {\rpg}{(\Rplus \!\! \times \! G,\ccdot)}
\newcommand {\rpgrho}{(\Rplus \!\! \times \! G,\ccdot\, ;+,\rho)}
\newcommand {\rtg}{(\Rplus \!\! \times \! G       )}
\newcommand {\rpgg}{(\Rplus \!\! \times \! G,\ccdot ;+,\rho)}
\newcommand {\Rtwo}{\Rb^2}
\newcommand {\gmab}{\gamma_{\ab}}
\newcommand {\gmbb}{\gamma_{\bb}}
\newcommand {\gmcb}{\gamma_{\cb}}
\newcommand {\gma}{\gamma_{a}^{\phantom{O}}}
\newcommand {\gmaz}{\gamma_{a_0}^{\phantom{O}}}
\newcommand {\gmb}{\gamma_{b}^{\phantom{O}}}
\newcommand {\gmc}{\gamma_{c}^{\phantom{O}}}
\newcommand {\gmd}{\gamma_{d}^{\phantom{O}}}
\newcommand {\gmf}{\gamma_{f}^{\phantom{O}}}
\newcommand {\gmP}{\gamma_{P}^{\phantom{O}}}
\newcommand {\gs}{\gamma_{s}}
\newcommand {\gu}{\gamma_{u}}
\newcommand {\gv}{\gamma_{v}}
\newcommand {\gw}{\gamma_{w}}
\newcommand {\gupv}{\gamma_{u+v}}
\newcommand {\gumt}{\gamma_{u}^{-2}}
\newcommand {\gvmt}{\gamma_{v}^{-2}}
\newcommand {\gupvmt}{\gamma_{u+v}^{-2}}
\newcommand {\gub}{\gamma_{\ub}^{\phantom{1}}}
\newcommand {\gvb}{\gamma_{\vb}^{\phantom{1}}}
\newcommand {\frgvb}{\frac{\gvb}{1+\gvb}}
\newcommand {\gvbu}{\gamma_{\vb_1}^{\phantom{1}}}
\newcommand {\gvbe}{\gamma_{\vb_e}^{\phantom{1}}}
\newcommand {\gvbone}{\gamma_{\vb_1}^{\phantom{1}}}
\newcommand {\gvbzer}{\gamma_{\vb_0}^{\phantom{1}}}
\newcommand {\gvbtwo}{\gamma_{\vb_2}^{\phantom{1}}}
\newcommand {\gvbt}{\gamma_{\vb_3}^{\phantom{1}}}
\newcommand {\gvbthree}{\gamma_{\vb_3}^{\phantom{1}}}
\newcommand {\gabone}{\gamma_{\ab_1}^{\phantom{O}}}
\newcommand {\gabtwo}{\gamma_{\ab_2}^{\phantom{O}}}
\newcommand {\gvbi}{\gamma_{\vb_i}^{\phantom{O}}}
\newcommand {\gvbj}{\gamma_{\vb_j}^{\phantom{O}}}
\newcommand {\gvbk}{\gamma_{\vb_k}^{\phantom{O}}}
\newcommand {\gvbie}{\gamma_{\vb_{i,e}}^{\phantom{O}}}
\newcommand {\gvbim}{\gamma_{\vb_{i,m}}^{\phantom{O}}}
\newcommand {\gvbonetwo}{\gamma_{\om\vb_1\op\vb_2}^{\phantom{O}}}
\newcommand {\gvbonet}{\gamma_{\om\vb_1\op\vb_3}^{\phantom{O}}}
\newcommand {\gvbtwot}{\gamma_{\om\vb_2\op\vb_3}^{\phantom{O}}}
\newcommand {\gvbij}{\gamma_{\om\vb_i\op\vb_j}^{\phantom{O}}}
\newcommand {\gabotw}{\gamma_{\ab_{12}}^{\phantom{O}}}
\newcommand {\gvbn}{\gamma_{\vb_n}^{\phantom{1}}}
\newcommand {\gwb}{\gamma_{\wb}^{\phantom{1}}}
\newcommand {\gxb}{\gamma_{\xb}^{\phantom{1}}}
\newcommand {\gzb}{\gamma_{\zb}^{\phantom{1}}}
\newcommand {\gpb}{\gamma_{\pb}^{\phantom{1}}}

\newcommand {\spga}{\gamma_{\ab}^{\phantom{1}}}
\newcommand {\spgb}{\gamma_{\bb}^{\phantom{1}}}
\newcommand {\spgc}{\gamma_{\cb}^{\phantom{1}}}
\newcommand {\spgd}{\gamma_{\db}^{\phantom{1}}}

\newcommand {\spgap}{\gamma_{\ab^\prime}^{\phantom{1}}}
\newcommand {\spgbp}{\gamma_{\bb^\prime}^{\phantom{1}}}
\newcommand {\spgcp}{\gamma_{\cb^\prime}^{\phantom{1}}}
\newcommand {\spgdp}{\gamma_{\db^\prime}^{\phantom{1}}}

\newcommand {\spgba}{\gamma_{\bb\om\ab}^{\phantom{1}}}
\newcommand {\spgca}{\gamma_{\cb\om\ab}^{\phantom{1}}}
\newcommand {\spgda}{\gamma_{\db\om\ab}^{\phantom{1}}}

\newcommand {\spgaap}{\gamma_{\ab\om\ab^\prime}^{\phantom{1}}}
\newcommand {\spgbap}{\gamma_{\bb\om\ab^\prime}^{\phantom{1}}}
\newcommand {\spgcap}{\gamma_{\cb\om\ab^\prime}^{\phantom{1}}}
\newcommand {\spgdap}{\gamma_{\db\om\ab^\prime}^{\phantom{1}}}
\newcommand {\gve}{\gamma_{\vb_e}^{\phantom{1}}}
\newcommand {\gvm}{\gamma_{\vb_m}^{\phantom{1}}}

\newcommand {\gubs}{\gamma_{\ub}^2}
\newcommand {\gvbs}{\gamma_{\vb}^2}
\newcommand {\gvbsm}{\gamma_{\vb_m}^2}
\newcommand {\gwbs}{\gamma_{\wb}^2}
\newcommand {\gabs}{\gamma_{\ab}^2}
\newcommand {\gbbs}{\gamma_{\bb}^2}
\newcommand {\gcbs}{\gamma_{\cb}^2}
\newcommand {\gdbs}{\gamma_{\db}^2}
\newcommand {\gas}{\gamma_{a}^2}
\newcommand {\gbs}{\gamma_{b}^2}
\newcommand {\gcs}{\gamma_{c}^2}
\newcommand {\gfs}{\gamma_{f}^2}
\newcommand {\ggub}{\sqrt{\gamma_{\ub}^2-1}}
\newcommand {\ggvb}{\sqrt{\gamma_{\vb}^2-1}}
\newcommand {\ggX}{\sqrt{\gamma_{X}^2-1}}
\newcommand {\ggY}{\sqrt{\gamma_{Y}^2-1}}
\newcommand {\ggupvb}{\sqrt{\gamma_{\ub\op\vb}^2-1}}
\newcommand {\gupvb}{\gamma_{\ub\op\vb}^{\phantom{1}}}
\newcommand {\gmupvb}{\gamma_{\om\ub\op\vb}^{\phantom{1}}}
\newcommand {\gupvbs}{\gamma_{\ub\op\vb}^2}
\newcommand {\gmupvbs}{\gamma_{\om\ub\op\vb}^2}
\newcommand {\gumvb}{\gamma_{\ub\om\vb}^{\phantom{1}}}
\newcommand {\gupvbe}{\gamma_{\ub\ope\vb}^{\phantom{1}}}
\newcommand {\gumvbe}{\gamma_{\ub\ome\vb}^{\phantom{1}}}
\newcommand {\gupvbec}{\gamma_{\ub\opec\vb}^{\phantom{1}}}
\newcommand {\gupvbm}{\gamma_{\ub\opm\vb}^{\phantom{1}}}
\newcommand {\bupvbu}{\beta_{\ub\opu\vb}}
\newcommand {\bupvb}{\beta_{\ub\op\vb}}
\newcommand {\bupvbs}{\beta_{\ub\op\vb}^2}
\newcommand {\bab}{\beta_{\ab}}
\newcommand {\babs}{\beta_{\ab}^2}
\newcommand {\bbb}{\beta_{\bb}}
\newcommand {\bcb}{\beta_{\cb}}
\newcommand {\bdb}{\beta_{\db}}
\newcommand {\bub}{\beta_{\ub}}
\newcommand {\bvb}{\beta_{\vb}}
\newcommand {\bvbu}{\beta_{\vb_u}}
\newcommand {\bwb}{\beta_{\wb}}
\newcommand {\bubs}{\beta_{\ub}^2}
\newcommand {\bvbs}{\beta_{\vb}^2}
\newcommand {\bwbs}{\beta_{\wb}^2}
\newcommand {\gyr}{{\rm gyr}}
\newcommand {\rmspan}{{\rm Span}}
\newcommand {\rmadj}{{\rm Adj}}
\newcommand {\rmcof}{{\rm Cof}}
\newcommand {\rmdet}{{\rm Det}}
\newcommand {\Gyr}{{\rm Gyr}}
\newcommand {\sgyr}{{\rm sgyr}}
\newcommand {\trace}{{\rm trace}}
\newcommand {\tr}{{\textstyle tr}}
\newcommand {\dett}{{\textstyle det}}
\newcommand {\Aut}{{\rm Aut}}
\newcommand {\Auto}{{\Aut_0}}
\newcommand {\Autg}{{\Aut_g}}
\newcommand {\Rot}{{\rm Rot}}
\newcommand {\Hol}{{\rm Hol}}
\newcommand {\gyrab}{\gyr[a,b]}
\newcommand {\gab}{g_{a,b}}
\newcommand {\gamb}{g_{a,-b}}
\newcommand {\gba}{g_{b,a}}
\newcommand {\gmba}{g_{-b,a}}
\newcommand {\gbc}{g_{b,c}}
\newcommand {\gcb}{g_{c,b}}
\newcommand {\gax}{g_{a,x}}
\newcommand {\gPT}{g_{P,\tb}^{\phantom{O}}}

\newcommand {\gabb}{\gamma_{\ab}^{\phantom{1}}}
\newcommand {\gbbb}{\gamma_{\bb}^{\phantom{1}}}
\newcommand {\gcbb}{\gamma_{\cb}^{\phantom{1}}}
\newcommand {\gdbb}{\gamma_{\db}^{\phantom{1}}}

\newcommand {\gabbp}{\gamma_{\abp}^{\phantom{1}}}
\newcommand {\gbbbp}{\gamma_{\bbp}^{\phantom{1}}}
\newcommand {\gcbbp}{\gamma_{\cbp}^{\phantom{1}}}
\newcommand {\gdbbp}{\gamma_{\dbp}^{\phantom{1}}}

\newcommand {\gammaza}{\gamma_{01}^{\phantom{O}}}
\newcommand {\gammazb}{\gamma_{02}^{\phantom{O}}}
\newcommand {\gammazc}{\gamma_{03}^{\phantom{O}}}
\newcommand {\gammaaa}{\gamma_{11}^{\phantom{O}}}
\newcommand {\gammaab}{\gamma_{12}^{\phantom{O}}}
\newcommand {\gammaac}{\gamma_{13}^{\phantom{O}}}
\newcommand {\gammaad}{\gamma_{14}^{\phantom{O}}}
\newcommand {\gammabb}{\gamma_{22}^{\phantom{O}}}
\newcommand {\gammabc}{\gamma_{23}^{\phantom{O}}}
\newcommand {\gammabd}{\gamma_{24}^{\phantom{O}}}
\newcommand {\gammacc}{\gamma_{33}^{\phantom{O}}}
\newcommand {\gammacd}{\gamma_{34}^{\phantom{O}}}
\newcommand {\gammadd}{\gamma_{34}^{\phantom{O}}}
\newcommand {\gammaaN}{\gamma_{1N}^{\phantom{O}}}
\newcommand {\gammaak}{\gamma_{1k}^{\phantom{O}}}
\newcommand {\gammaij}{\gamma_{ij}^{\phantom{O}}}

\newcommand {\gmbbb}{\gamma_{\mb}^{\phantom{1}}}
\newcommand {\gyrba}{\gyr[b,a]}
\newcommand {\gyrabb}{\gyr[\ab,\bb]}
\newcommand {\gyrbab}{\gyr[\bb,\ab]}
\newcommand {\gyruv }{\gyr[u,v]}
\newcommand {\gyrvu }{\gyr[v,u]}
\newcommand {\Gyruv }{\Gyr[u,v]}
\newcommand {\Gyrvu }{\Gyr[v,u]}
\newcommand {\gyruvb}{\gyr[\ub,\vb]}
\newcommand {\gyrxyb}{\gyr[\xb,\yb]}
\newcommand {\gyrvwb}{\gyr[\vb,\wb]}
\newcommand {\gyrvub}{\gyr[\vb,\ub]}
\newcommand {\vi}{\mathbb{V}}
\newcommand {\rc}{\mathbb{R}_c}
\newcommand {\vc}{\mathbb{V}_{\!c}}
\newcommand {\vcu}{\mathbb{V}_{c=1}}
\newcommand {\vs}{\mathbb{V}_s}
\newcommand {\VS}{\mathbb{V}_s}
\newcommand {\vsu}{\mathbb{V}_{s=1}}
\newcommand {\timess}{\!\times\!}
\newcommand {\ep}{\varepsilon}
\newcommand {\cmt}{c^{-2}}
\newcommand {\bta}{\beta_1}
\newcommand {\btb}{\beta_2}
\newcommand {\bti}{\beta_i}
\newcommand {\ro}{r_{_1}}
\newcommand {\rt}{r_{_2}}
\newcommand {\mab}{m_{\ab\bb}}
\newcommand {\mabb}{\mb_{\ab\bb}}
\newcommand {\kabb}{\kb_{\ab\bb}}
\newcommand {\muvb}{\mb_{\ub\vb}}
\newcommand {\mabeb}{\mb_{\ab\bb}^{_E}}
\newcommand {\muveb}{\mb_{\ub\vb}^{_E}}
\newcommand {\mabmb}{\mb_{\ab\bb}^{_M}}
\newcommand {\muvmb}{\mb_{\ub\vb}^{_M}}
\newcommand {\half}{\textstyle\frac{1}{2}}
\newcommand {\psiu}{\psi_{\ub}}
\newcommand {\psiv}{\psi_{\vb}}
\newcommand {\sltc}{\textstyle{SL}(2,C)}
\newcommand {\son}{\textstyle{SO}(n)}
\newcommand {\sot}{\textstyle{SO}(3)}
\newcommand {\sutwo}{\textstyle{SU}(2)}
\newcommand {\psutwo}{\textstyle{PSU}(2)}
\newcommand {\sonun}{\textstyle{SO}(n,1)}
\newcommand {\dispsinfu}{\displaystyle\frac{\sinh\phi_{\ub}}{f(\phi_{\ub})}}
\newcommand {\dispsinfv}{\displaystyle\frac{\sinh\phi_{\vb}}{f(\phi_{\vb})}}
\newcommand {\sinfu}{             \frac{\sinh\phi_{\ub}}{f(\phi_{\ub})}}
\newcommand {\sinfv}{             \frac{\sinh\phi_{\vb}}{f(\phi_{\vb})}}
\newcommand {\defect}{{\rm defect}}
\newcommand {\xpa}{x\hspace{-0.025cm} + \hspace{-0.025cm}a}
\newcommand {\inn}{\hspace{-0.1cm}\in\hspace{-0.1cm}}
\newcommand {\dpp}{\,\diamondsuit \hspace{-0.386cm}+}
\newcommand {\dppa}{\,\diamondsuit \hspace{-0.286cm}+}
\newcommand {\dppws}{\,\diamondsuit \hspace{-0.354cm}+}
\newcommand {\dppaws}{\,\diamondsuit \hspace{-0.266cm}+}
\newcommand {\Qp }{Q^{\prime      }}
\newcommand {\Pp }{P^{\prime      }}
\newcommand {\Qpp}{Q^{\prime\prime}}
\newcommand {\Ppp}{P^{\prime\prime}}
\newcommand {\mapb}{\om\ab\op\bb}
\newcommand {\mapc}{\om\ab\op\cb}
\newcommand {\mapd}{\om\ab\op\db}
\newcommand {\mbpc}{\om\bb\op\cb}
\newcommand {\mbpd}{\om\bb\op\db}
\newcommand {\mcpd}{\om\cb\op\db}
\newcommand {\gmapb}{\gamma_{\mapb}^{\phantom{O}}}
\newcommand {\gmapc}{\gamma_{\mapc}^{\phantom{O}}}
\newcommand {\gmapd}{\gamma_{\mapd}^{\phantom{O}}}
\newcommand {\gmbpc}{\gamma_{\mbpc}^{\phantom{O}}}
\newcommand {\gmbpd}{\gamma_{\mbpd}^{\phantom{O}}}
\newcommand {\gmcpd}{\gamma_{\mcpd}^{\phantom{O}}}
\newcommand {\mxpub}{\om\xb\op\ub}
\newcommand {\mxpvb}{\om\xb\op\vb}
\newcommand {\mxpwb}{\om\xb\op\wb}
\newcommand {\gmxpub}{\gamma_{\mxpub}^{\phantom{O}}}
\newcommand {\gmxpvb}{\gamma_{\mxpvb}^{\phantom{O}}}
\newcommand {\gmxpwb}{\gamma_{\mxpwb}^{\phantom{O}}}
\newcommand {\rmtr}{{\rm tr}}
 \newcommand {\LAB}{L_{^{AB}}^{\phantom{o}}}
\newcommand {\LAaAb}{L_{^{A_1A_2}}^{\phantom{o}}}
 \newcommand {\LAaAc}{L_{^{A_1A_3}}^{\phantom{o}}}
 \newcommand {\LAbAc}{L_{^{A_2A_3}}^{\phantom{o}}}
\newcommand {\gX}{\gamma_{_{\!X}}^{\phantom{1}}}
\newcommand {\gXs}{\gamma_{X}^2}
\newcommand {\gYs}{\gamma_{Y}^2}
\newcommand {\gY}{\gamma_{_{\!Y}}^{\phantom{1}}}
\newcommand {\gXpY}{\gamma_{_{X\op Y}}^{\phantom{1}}}
\newcommand {\gmXpY}{\gamma_{_{\om X\op Y}}^{\phantom{1}}}
\newcommand {\gyrXY}{\gyr[X,Y]}
\newcommand {\gyrYX}{\gyr[Y,X]}
\newcommand {\opea}{\op_{\subEA}\!\,}
\newcommand {\omea}{\om_{\subEA}\!\,}
\newcommand {\subEA}{\!\lower.1ex \hbox {\tiny EA}}
\newcommand {\gWpAa}{\gamma_{_{W\op A_1}}^{\phantom{1}}}
\newcommand {\gWpAb}{\gamma_{_{W\op A_2}}^{\phantom{1}}}
\newcommand {\gP}{\gamma_{_{\!P}}^{\phantom{O}}}
\newcommand {\Rsnu}{{\Rb}_{s=1}^{n}}
 \newcommand {\Lab}{L_{^{A_1A_2}}^{\phantom{o}}}
 \newcommand {\MAB}{M_{^{AB}}^{\phantom{o}}}
 \newcommand {\MAD}{M_{^{AD}}^{\phantom{o}}}
 \newcommand {\MBC}{M_{^{BC}}^{\phantom{o}}}
 \newcommand {\PAB}{P_{^{AB}}^{\phantom{o}}}
 \newcommand {\MABDC}{M_{^{ABDC}}^{\phantom{o}}}
 \newcommand {\SABC}{S_{^{ABC}}^{\phantom{O}}}
 \newcommand {\gmA}{\gamma_{_A}}
 \newcommand {\gmB}{\gamma_{_B}}
 \newcommand {\gmC}{\gamma_{_C}}
 \newcommand {\gmD}{\gamma_{_D}}
 \newcommand {\gAz}{\gamma_{_{A_0}}^{\phantom{1}}}
 \newcommand {\gAa}{\gamma_{_{A_1}}^{\phantom{1}}}
 \newcommand {\gAb}{\gamma_{_{A_2}}^{\phantom{1}}}
 \newcommand {\gAc}{\gamma_{_{A_3}}^{\phantom{O}}}
 \newcommand {\gAd}{\gamma_{_{A_4}}^{\phantom{O}}}
 \newcommand {\gAi}{\gamma_{_{\!{A_i}}}^{\phantom{O}}}
 \newcommand {\gAk}{\gamma_{_{\!{A_k}}}^{\phantom{O}}}
 \newcommand {\gAN}{\gamma_{_{A_N}}^{\phantom{O}}}

 \newcommand {\gAzs}{\gamma_{_{A_0}}^{2}}
 \newcommand {\gAas}{\gamma_{_{A_1}}^{2}}
 \newcommand {\gAbs}{\gamma_{_{A_2}}^{2}}
 \newcommand {\gAcs}{\gamma_{_{A_3}}^{2}}
 \newcommand {\gAds}{\gamma_{_{A_4}}^{2}}

 \newcommand {\gAap}{\gamma_{_{A_1^\prime}}}
 \newcommand {\gAbp}{\gamma_{_{A_2^\prime}}}
 \newcommand {\gAcp}{\gamma_{_{A_3^\prime}}}
 \newcommand {\gAdp}{\gamma_{_{A_4^\prime}}}

\newcommand {\gAj}{\gamma_{_{\!{A_j}}}^{\phantom{O}}}
\newcommand {\gA}{\gamma_{_{\!A}}^{\phantom{O}}}
\newcommand {\gB}{\gamma_{_{\!B}}^{\phantom{O}}}
\newcommand {\gvi}{\gamma_{\vb_k}^{\phantom{O}}}
\newcommand {\gvj}{\gamma_{\vb_j}^{\phantom{O}}}
\newcommand {\gvk}{\gamma_{\vb_k}^{\phantom{O}}}
\newcommand {\gvo}{\gamma_{\vb_0}^{\phantom{O}}}
\newcommand {\gAo}{\gamma_{_{\!{A_0}}}^{\phantom{O}}}
\newcommand {\gApB}{\gamma_{_{A\op B}}^{\phantom{1}}}
\newcommand {\gmApB}{\gamma_{_{\om A\op B}}^{\phantom{1}}}
\newcommand {\gWpAk}{\gamma_{_{W\op A_k}}^{\phantom{1}}}
\newcommand {\sumless}{\sum_{\substack{j,k=1\\j<k}}^N}
\newcommand {\sumneq}{\sum_{\substack{k=1\\k\ne j}}^N}
\newcommand {\myproof}{$\square$}
\newcommand {\pprime}{{\prime\prime}}
\newcommand {\sF}{\mbox{\tiny F}}
\newcommand {\sO}{\mbox{\tiny O}}

\newcommand {\aonz}{\begin{pmatrix} a \\ 0 \end{pmatrix}}
\newcommand {\maonz}{\begin{pmatrix} -a \\ 0 \end{pmatrix}}
\newcommand {\xony}{\begin{pmatrix} x \\ y \end{pmatrix}}
\newcommand {\fonz}{\begin{pmatrix} f \\ 0 \end{pmatrix}}
\newcommand {\mfonz}{\begin{pmatrix} -f \\ 0 \end{pmatrix}}
\newcommand {\zonb}{\begin{pmatrix} 0 \\ b \end{pmatrix}}
\newcommand {\xonb}{\begin{pmatrix} x \\ b \end{pmatrix}}
\newcommand {\xonz}{\begin{pmatrix} x \\ 0 \end{pmatrix}}

\baselineskip = 13pt
\textwidth = 5in
\textheight = 7.8in

\pagenumbering{arabic}
\begin{center}
\huge{
{
An Introduction to\\
Hyperbolic Barycentric Coordinates \\
and their Applications \\
}
     }
\end{center}
\begin{center}
Abraham A. Ungar\\
Department of Mathematics\\
North Dakota State University\\
Fargo, ND 58105, USA\\
Email: abraham.ungar@ndsu.edu\\[12pt]
\end{center}

\begin{quotation}
{\bf ABSTRACT}\phantom{OO}
Barycentric coordinates are commonly used in Euclidean geometry.
The adaptation of barycentric coordinates for use in hyperbolic geometry
gives rise to hyperbolic barycentric coordinates, known as
{\it gyrobarycentric coordinates}.
The aim of this article is to present the path from
Einstein's velocity addition law of relativistically admissible velocities
to hyperbolic barycentric coordinates, along with applications.
\end{quotation}


\section{Introduction} \label{secint}

A barycenter in astronomy is the point between two objects where they balance
each other. It is the center of gravity where two or more celestial bodies
orbit each other.
In 1827 M\"obius published a book whose title,
{\it Der Barycentrische Calcul},
translates as
{\it The Barycentric Calculus}. The word {\it barycenter} means
center of gravity, but the book is entirely geometrical and, hence, called
by Jeremy Gray \cite{gray93},
{\it M\"obius's Geometrical Mechanics}.
The 1827 M\"obius book is best remembered for introducing a new system
of coordinates, the {\it barycentric coordinates}.
The historical contribution of M\"obius' barycentric coordinates to vector analysis
is described in \cite[pp.~48--50]{crowe94}.

The M\"obius idea, for a triangle as an illustrative example,
is to attach masses, $m_1$, $m_2$, $m_3$,
respectively, to three non-collinear points, $A_1$, $A_2$, $A_3$,
in the Euclidean plane $\Rtwo$,
and consider their center of mass, or momentum, $P$, called
{\it barycenter},
given by the equation
\begin{equation} \label{eq13njf01}
P = \frac{m_1 A_1 + m_2 A_2 + m_3 A_3} {m_1 + m_2 + m_3} 
\,.
\end{equation}
The barycentric coordinates of the point $P$ in \eqref{eq13njf01} in the plane of
triangle $A_1A_2A_3$ relative to this triangle may be considered as
weights, $m_1,m_2,m_3$, which if placed at vertices $A_1,A_2,A_3$, cause $P$ to
become the balance point for the plane.
The point $P$ turns out to be the center of mass when the points of $\Rtwo$
are viewed as position vectors, and the center of momentum
when the points of $\Rtwo$ are viewed as relative velocity vectors.

In the transition from Euclidean to hyperbolic barycentric coordinates
we partially replace vector addition by Einstein addition of relativistically
admissible velocities, and replace masses by relativistic masses.
Barycentric coordinates are commonly used in
Euclidean geometry \cite{yiu00},
convex analysis \cite{rockafellar70}, and
non-relativistic quantum mechanics \cite{bengtsson06}.
Evidently, Einstein addition is tailor made for the adaptation of
barycentric coordinates for use in
hyperbolic geometry \cite{mybook06,mybook05},
hyperbolic convex analysis and, perhaps,
relativistic quantum mechanics \cite{introducing02}.
Our journey to hyperbolic barycentric coordinates thus begins with
the presentation of Einstein addition, revealing its
intrinsic beauty and harmony.

\section{Einstein Addition} \label{secein02}

Let $c>0$ be an arbitrarily fixed
positive constant and let $\Rn=(\Rn,+,\ccdot)$ be the Euclidean $n$-space,
$n=1,2,3,\ldots,$
equipped with the common vector addition, +, and inner product, $\ccdot$.
The home of all $n$-dimensional Einsteinian velocities is the $c$-ball
\begin{equation} \label{eqcball}
\Rcn    = \{\vb\in\Rn: \|\vb\| < c \}
\,.
\end{equation}
The $c$-ball $\Rcn$ is the open ball of radius $c$, centered at the
origin of $\Rn$, consisting of all vectors $\vb$
in $\Rn$ with magnitude $\|\vb\|$ smaller than $c$.

Einstein velocity addition is a binary operation, $\op$,
in the $c$-ball $\Rcn$ given by the equation
\cite{mybook01},
\cite[Eq.~2.9.2]{urbantkebookeng},\cite[p.~55]{moller52},\cite{fock},
\index{Einstein addition, coordinate free}
\begin{equation} \label{eq01}
{\ub}\op{\vb}=\frac{1}{\unpuvc}
\left\{ {\ub}+ \frac{1}{\gub}\vb+\frac{1}{c^{2}}\frac{\gamma _{{\ub}}}{%
1+\gamma _{{\ub}}}( {\ub}\ccdot{\vb}) {\ub} \right\}
\,,
\end{equation}
for all $\ub,\vb\in\Rcn   $,
where $\gub$ is the Lorentz gamma factor given by the equation
\index{Einstein addition}\index{gamma factor}
\begin{equation} \label{v72gs}
\gvb = \frac{1}{\sqrt{1-\displaystyle\frac{\|\vb\|^2}{c^2}}}
\,,
\end{equation}
where $\ub\ccdot\vb$ and $\|\vb\|$
are the inner product and the norm
in the ball, which the ball $\Rcn  $ inherits from its space $\Rn$,
$\|\vb\|^2=\vb\ccdot\vb$.
A nonempty set with a binary operation is called a {\it groupoid} so that,
accordingly, the pair $(\Rcn,\op)$ is an
{\it Einstein groupoid}.\index{Einstein groupoid}

In the Newtonian limit of large $c$, $c\rightarrow\infty$, the ball $\Rcn   $
expands to the whole of its space $\Rn$, as we see from \eqref{eqcball},
and Einstein addition $\op$ in $\Rcn$
reduces to the ordinary vector addition $+$ in $\Rn$,
as we see from \eqref{eq01} and \eqref{v72gs}.

When the nonzero vectors $\ub$ and $\vb$ in the ball $\Rcn$ of $\Rn$ are
parallel in $\Rn$, $\ub \| \vb$, that is, $\ub=\lambda\vb$
for some $\lambda\in\Rb$, Einstein addition \eqref{eq01}
reduces to the Einstein addition of parallel velocities,
\begin{equation} \label{eq1pfck03}
\ub\op \vb = \frac{\ub+\vb}{1+\frac{1}{c^2}\ub\ccdot\vb}, \qquad
\ub \| \vb
\,,
\end{equation}
which was partially confirmed experimentally by the Fizeau's 1851 experiment \cite{miller81}.
Following \eqref{eq1pfck03} we have, for instance,
\begin{equation} \label{eq1pfck04}
\|\ub\|\op \|\vb\| = \frac{\|\ub\|+\|\vb\|}{1+\frac{1}{c^2}\|\ub\|\|\vb\|}
\end{equation}
for all $\ub,\vb\inn\Rcn$.

The {\it restricted Einstein addition} in \eqref{eq1pfck03} and \eqref{eq1pfck04}
is both commutative and associative.
Accordingly, the restricted Einstein addition is a commutative group operation,
as Einstein noted in \cite{einstein05}; see \cite[p.~142]{einsteinfive}.
In contrast, Einstein made no remark about group properties of
his addition \eqref{eq01}
of velocities that need not be parallel. Indeed, the general Einstein
addition is not a group operation but, rather, a gyrocommutative
gyrogroup operation, a structure discovered more than
80 years later, in 1988 \cite{parametrization,formalism88,grouplike},
formally defined in Sect.~\ref{secdkemb}.

In physical applications, $\Rn=\Rt$ is the Euclidean 3-space,
which is the space of all classical, Newtonian velocities, and
$\Rcn  =\Rct\subset\Rt$ is the $c$-ball of $\Rt$ of all relativistically
admissible, Einsteinian velocities.
The constant $c$ represents in physical applications the
vacuum speed of light.
Since we are interested in both physics and geometry, we allow $n$ to be
any positive integer and, sometimes, replace $c$ by $s$.

Einstein addition \eqref{eq01} of relativistically
admissible velocities, with $n=3$, was introduced by Einstein in his 1905 paper
\cite{einstein05} \cite[p.~141]{einsteinfive}
that founded the special theory of relativity,
where the magnitudes
of the two sides of Einstein addition \eqref{eq01} are presented.
One has to remember here that the Euclidean 3-vector algebra was not so
widely known in 1905 and, consequently, was not used by Einstein.
Einstein calculated in \cite{einstein05} the behavior
of the velocity components parallel and orthogonal to the relative
velocity between inertial systems, which is as close as one can get
without vectors to the vectorial version \eqref{eq01} of Einstein addition.
Einstein was aware of the nonassociativity of his velocity addition law of
relativistically admissible velocities that need not be collinear.
He therefore emphasized in his 1905 paper that his velocity addition law of
relativistically admissible collinear velocities forms a group operation
\cite[p.~907]{einstein05}.

We naturally use the abbreviation
$\ub\om\vb=\ub\op(-\vb)$ for Einstein subtraction, so that,
for instance, $\vb\om\vb = \zerb$ and
\begin{equation} \label{eq01k}
\om\vb = \zerb\om\vb=-\vb
\,.
\end{equation}
Einstein addition and subtraction satisfy the equations
\begin{equation} \label{eq01a}
\om(\ub\op\vb) = \om\ub\om\vb
\end{equation}
and
\begin{equation} \label{eq01b}
\om\ub\op(\ub\op\vb) = \vb
\end{equation}
for all $\ub,\vb$ in the ball $\Rcn$,
in full analogy with vector addition and subtraction in $\Rn$. Identity
\eqref{eq01a} is called the {\it gyroautomorphic inverse property} of
Einstein addition, and Identity
\eqref{eq01b} is called the {\it left cancellation law} of Einstein addition.
We may note that
Einstein addition does not obey the naive right counterpart of the
left cancellation law \eqref{eq01b} since, in general,
\begin{equation} \label{eq01c}
(\ub\op\vb)\om\vb \ne \ub
\,.
\end{equation}
However, this seemingly lack of a {\it right cancellation law} of
Einstein addition is repaired, for instance, in \cite[Sect.~1.9]{mybook05}.

Einstein addition and the gamma factor are related by
the {\it gamma identity},\index{gamma identity}
\begin{equation} \label{grbsf09}
\gamma_{\ub\op\vb}^{\phantom{O}}
= \gub\gvb\left( 1 + \frac{\ub\ccdot\vb}{c^2}\right)
\,,
\end{equation}
which can be written, equivalently, as
\begin{equation} \label{grbsf09p1}
\gamma_{\om\ub\op\vb}^{\phantom{O}}
= \gub\gvb\left( 1 - \frac{\ub\ccdot\vb}{c^2}\right)
\end{equation}
for all $\ub,\vb\in \Rcn$.
Here, \eqref{grbsf09p1} is obtained from \eqref{grbsf09} by replacing $\ub$ by
$\om \ub=-\ub$ in \eqref{grbsf09}.

A frequently used identity that follows immediately from \eqref{v72gs} is
\begin{equation} \label{rugh1ds}
\frac{\vb^2}{c^2} =
\frac{\|\vb\|^2}{c^2} = \frac{\gamma_\vb^2 - 1}{\gamma_\vb^2}
\end{equation}
and useful identities that follow immediately from
\eqref{grbsf09}\,--\,\eqref{grbsf09p1} are
\begin{equation} \label{rugh2ds}
\frac{\ub\ccdot\vb}{c^2}
=-1 + \frac{\gamma_{   \ub\op\vb}^{\phantom{O}}}{\gub\gvb}
= 1 - \frac{\gamma_{\om\ub\op\vb}^{\phantom{O}}}{\gub\gvb}
\,.
\end{equation}

Einstein addition is noncommutative. Indeed, while Einstein addition
is commutative under the norm,
\begin{equation} \label{udg2}
\|\ub\op\vb\|=\|\vb\op\ub\|
\,,
\end{equation}
in general,
\begin{equation} \label{eqyt01}
\ub\op\vb\ne\vb\op\ub
\,,
\end{equation}
$\ub,\vb\in\Rcn   $. Moreover, Einstein addition is also nonassociative since,
in general,
\begin{equation} \label{eqyt02}
(\ub\op\vb)\op\wb\ne\ub\op(\vb\op\wb)
\,,
\end{equation}
$\ub,\vb,\wb\in\Rcn   $.

As an application of the gamma identity \eqref{grbsf09}, we prove
the Einstein gyrotriangle inequality.
\index{gyrotriangle inequality}
\begin{ttheorem}\label{thmtri747}
{\bf (The Gyrotriangle Inequality).}
\begin{equation} \label{rif01}
\|\ub\op\vb\| \le \|\ub\| \op \|\vb\|
\end{equation}
for all $\ub,\vb$ in an Einstein gyrogroup $(\Rsn,\op)$.
\end{ttheorem}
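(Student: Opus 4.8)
The plan is to convert the stated inequality on Euclidean norms into an inequality on Lorentz gamma factors, where the gamma identity \eqref{grbsf09} together with the Cauchy--Schwarz inequality finishes everything. The map $r\mapsto\gamma_r=(1-r^2/c^2)^{-1/2}$ is strictly increasing on $[0,c)$ by \eqref{v72gs}, and both $\|\ub\op\vb\|$ and $\|\ub\|\op\|\vb\|$ lie in $[0,c)$ --- the latter because the restricted addition \eqref{eq1pfck04} of two scalars in $[0,c)$ stays below $c$. Hence \eqref{rif01} is equivalent to
\begin{equation*}
\gamma_{\ub\op\vb}\le\gamma_{\|\ub\|\op\|\vb\|}\,,
\end{equation*}
and it suffices to prove this.

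First I would evaluate the right-hand side. Viewing the nonnegative scalars $\|\ub\|$ and $\|\vb\|$ as parallel vectors along a common direction, the gamma identity \eqref{grbsf09} applies and, since the gamma factor of a vector depends only on its norm, gives
\begin{equation*}
\gamma_{\|\ub\|\op\|\vb\|}=\gub\gvb\left(1+\frac{\|\ub\|\,\|\vb\|}{c^2}\right)\,;
\end{equation*}
this can equivalently be checked by substituting the explicit value of $\|\ub\|\op\|\vb\|$ from \eqref{eq1pfck04} into \eqref{v72gs}. Then I would apply \eqref{grbsf09} to $\gamma_{\ub\op\vb}$ itself, so that the target inequality reduces, after cancelling the positive factor $\gub\gvb$, to
\begin{equation*}
1+\frac{\ub\ccdot\vb}{c^2}\le 1+\frac{\|\ub\|\,\|\vb\|}{c^2}\,,
\end{equation*}
i.e.\ to the Cauchy--Schwarz inequality $\ub\ccdot\vb\le\|\ub\|\,\|\vb\|$, which holds in $\Rn$. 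Undoing the monotone change of variables recovers \eqref{rif01} (with equality precisely when $\ub\ccdot\vb=\|\ub\|\,\|\vb\|$).

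The routine calculations here are genuinely routine; the one place to be careful is the opening reduction. One must verify that $\gamma$ is strictly monotone on $[0,c)$ (so an inequality between gamma factors can be pulled back to the norms) and that $\|\ub\|\op\|\vb\|$ is again an element of $[0,c)$ (so this pullback is legitimate), and one must be sure that the gamma identity \eqref{grbsf09}, a priori stated for vectors, is being invoked correctly for the one-dimensional parallel sum $\|\ub\|\op\|\vb\|$. All three points are immediate from \eqref{v72gs}, \eqref{eq1pfck04}, and \eqref{grbsf09} respectively, so the proof is essentially Cauchy--Schwarz transported through the gamma identity.
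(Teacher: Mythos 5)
Your proof is correct and follows essentially the same route as the paper: apply the gamma identity \eqref{grbsf09} to both $\gamma_{\ub\op\vb}$ and $\gamma_{\|\ub\|\op\|\vb\|}$, reduce the comparison to the Cauchy--Schwarz inequality $\ub\ccdot\vb\le\|\ub\|\,\|\vb\|$, and pull the resulting gamma-factor inequality back to the norms via the monotonicity of $\gamma$. Your extra care in verifying that $\|\ub\|\op\|\vb\|$ stays in $[0,c)$ and that the gamma identity applies to the scalar (parallel) sum is a welcome tightening of details the paper leaves implicit.
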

\begin{proof}
By the gamma identity \eqref{grbsf09}
and by the Cauchy-Schwarz inequality \cite{marsden74},
we have\index{Cauchy-Schwarz inequality}
\begin{equation} \label{rif02}
\begin{split}
\gamma_{\|\ub\|\op\|\vb\|}^{\phantom{O}}
&= \gub\gvb\left( 1+\frac{\|\ub\|\|\vb\|}{s^2} \right) \\
& \ge \gub\gvb
\left( 1+\frac{\ub\ccdot\vb}{s^2} \right)
\\ &= \gamma_{\ub\op\vb}^{\phantom{O}}
\\ &= \gamma_{\|\ub\op\vb\|}^{\phantom{O}}
\end{split}
\end{equation}
for all $\ub,\vb$ in an Einstein gyrogroup $(\Rsn,\op)$.
But $\gamma_{\xb}^{\phantom{O}}=\gamma_{\|\xb\|}^{\phantom{O}}$
is a monotonically increasing function of
$\|\xb\|$, $0\le\|\xb\|<s$.
Hence \eqref{rif02} implies
\begin{equation} \label{rif03}
\|\ub\op\vb\| \le \|\ub\| \op \|\vb\|
\end{equation}
for all $\ub,\vb\inn\Rsn$.
\end{proof}

\index{Einstein addition domain extension}
\begin{rremark}\label{ogfnd}
{\bf (Einstein Addition Domain Extension).}
{\it
Einstein addition $\ub\op\vb$ in \eqref{eq01} involves the gamma factor $\gub$
of $\ub$, while it is free of  the gamma factor $\gvb$ of $\vb$.
Hence, unlike $\ub$, which must be restricted to the ball $\Rcn$ in order to
insure the reality of a gamma factor, $\vb$ need not be restricted to the ball.
Hence, the domain of $\vb$ can be extended from the ball $\Rcn$ to the whole of
the space $\Rn$. Moreover, also the gamma identity \eqref{grbsf09}
remains valid for all $\ub\in\Rcn$ and
$\vb\in\Rn$ under appropriate choice of the square root of negative numbers.
If $1+\ub\ccdot\vb/c=0$, then $\ub\op\vb$ is undefined, and, by \eqref{grbsf09},
$\gamma_{\ub\op\vb}^{\phantom{O}}=0$, so that $\|\ub\op\vb\|=\infty$.
}
\end{rremark}

\section{Einstein Addition Vs.~Vector Addition}
\label{secversus}

Vector addition, +, in $\Rn$ is both commutative and associative, satisfying
\begin{alignat}{2}\label{laws11}
\notag
\ub + \vb &= \vb + \ub &&\hspace{1.2cm}\text{Commutative Law}\\
\notag
\ub + (\vb + \wb) &= (\ub + \vb) + \wb &&\hspace{1.2cm}\text{Associative Law}\\
\end{alignat}
for all $\ub,\vb,\wb\in \Rn$.
In contrast, Einstein addition, $\op$, in $\Rcn$ is neither commutative nor associative.

In order to measure the extent to which Einstein addition deviates from associativity
we introduce {\it gyrations},
which are self maps of $\Rn$ that are {\it trivial} in the special
cases when the application of $\op$ is associative.
For any $\ub,\vb\in\Rcn$ the gyration $\gyruvb$ is a map of the
Einstein groupoid $(\Rcn,\op)$ onto itself.
Gyrations $\gyruvb\in\Aut(\Rcn,\op)$, $\ub,\vb\in\Rcn$, are defined in terms of
Einstein addition by the equation
\begin{equation} \label{jdhbw}
\gyruvb\wb = \om(\ub\op\vb)\op\{\ub\op(\vb\op\wb)\}
\end{equation}
for all $\ub,\vb,\wb\in\Rcn$,
and they turn out to be automorphisms of the Einstein groupoid $(\Rcn,\op)$,
$\gyruvb:\Rcn\rightarrow\Rcn$.

We recall that an automorphism of a groupoid $(S,\op)$ is a one-to-one map
$f$ of $S$ onto itself that respects the binary operation, that is,
$f(a\op b)=f(a)\op f(b)$ for all $a,b\in S$.
The set of all automorphisms of a groupoid $(S,\op)$ forms a group,
denoted $\Aut(S,\op)$.
To emphasize that the gyrations of an Einstein gyrogroup $(\Rcn,\op)$
are automorphisms of the gyrogroup, gyrations are also called
{\it gyroautomorphisms}.

A gyration $\gyruvb$, $\ub,\vb\in\Rcn$, is {\it trivial} if $\gyruvb\wb=\wb$
for all $\wb\in\Rcn$. Thus, for instance, the gyrations $\gyr[\zerb,\vb]$, $\gyr[\vb,\vb]$
and $\gyr[\vb,\om\vb]$ are trivial for all $\vb\in\Rcn$, as we see from \eqref{jdhbw}.

Einstein gyrations, which possess their own rich structure,
measure the extent to which Einstein addition deviates from
commutativity and from associativity, as we see from the gyrocommutative and the
gyroassociative laws of Einstein addition in the following identities
\cite{mybook01,mybook02,mybook03}:
\begin{alignat}{2}\label{laws00}
\notag
\ub\op\vb &= \gyruvb(\vb\op\ub) &&\hspace{0.8cm}\text{Gyrocommutative Law}\\
\notag
\ub\op(\vb\op \wb) &= (\ub\op \vb)\op\gyruvb \wb &&\hspace{0.8cm}\text{Left Gyroassociative Law}\\
\notag
(\ub\op \vb)\op \wb &=\ub\op(\vb\op\gyrvub \wb) &&\hspace{0.8cm}\text{Right Gyroassociative Law}\\
\notag
\gyr [\ub\op \vb,\vb] &= \gyruvb &&\hspace{0.8cm}\text{Gyration Left Reduction Property}\\
\notag
\gyr [\ub,\vb\op \ub] &= \gyruvb &&\hspace{0.8cm}\text{Gyration Right Reduction Property}\\
\notag
\gyr[\om\ub,\om\vb] &=\gyr [\ub,\vb] &&\hspace{0.8cm}\text{Gyration Even Property} \\
\notag
(\gyr [\ub,\vb])^{-1} &=\gyr [\vb,\ub] &&\hspace{0.8cm}\text{Gyration Inversion Law} \\
\end{alignat}
for all $\ub,\vb,\wb\in \Rcn$.

Einstein addition is thus regulated by gyrations to which it gives rise owing
to its nonassociativity, so that
Einstein addition and its gyrations are inextricably linked.
The resulting gyrocommutative gyrogroup structure of Einstein addition
was discovered in 1988 \cite{parametrization}.
Interestingly, gyrations are the mathematical abstraction of the
relativistic effect known as {\it Thomas precession}
\cite[Sec.~10.3]{mybook03} \cite{ungarsmale12}.
Thomas precession, in turn, is related to the
{\it mixed state geometric phase},
as L\'evay discovered in his work \cite{levay04a} which,
according to \cite{levay04a}, was motivated by
the author work in \cite{density02}.

The left and right reduction properties in \eqref{laws00}
present important gyration identities.
These two gyration identities are, however, just the tip of a giant iceberg.
The identities in \eqref{laws00} and many
other useful gyration identities are studied in
\cite{mybook01,mybook02,mybook03,mybook04,mybook06,mybook05}.

\section{Gyrations}\label{secgyrations73}

An explicit presentation of the gyrations,
$\gyruvb : \Rcn~\rightarrow~\Rcn$,
of Einstein groupoids $(\Rcn,\op)$ in \eqref{jdhbw}
in terms of vector addition rather than Einstein addition
is given by the equation
\begin{equation} \label{hdge1ein}
\gyruvb\wb = \wb + \frac{A\ub+B\vb}{D}
\,,
\end{equation}
where
\begin{equation} \label{hdgej2ein}
\begin{split}
 A &=-\frac{1}{c^2}\frac{\gubs}{(\gub+1)} (\gvb-1) (\ub\ccdot\wb)
 +
 \frac{1}{c^2}\gub\gvb (\vb\ccdot\wb)
\\[8pt] & \phantom{=} ~+
 \frac{2}{c^4} \frac{\gubs\gvbs}{(\gub+1)(\gvb+1)} (\ub\ccdot\vb) (\vb\ccdot\wb)
\\[8pt]
B &=- \frac{1}{c^2}
\frac{\gvb}{\gvb+1}
\{\gub(\gvb+1)(\ub\ccdot\wb) + (\gub-1)\gvb(\vb\ccdot\wb) \}
\\[8pt]
D &= \gub\gvb(1+ \frac{\ub\ccdot\vb}{c^2}) +1 = \gamma_{\ub\op\vb}^{\phantom{O}} + 1 > 1
\end{split}
\end{equation}
for all $\ub,\vb,\wb\in\Rcn$.

\index{gyration domain extension}
\begin{rremark}\label{ogfndi}
{\bf (Gyration Domain Extension).}
{\it
The domain of $\ub,\vb\inn\Rcn\subset\Rn$ in
\eqref{hdge1ein}\,--\,\eqref{hdgej2ein} is restricted to $\Rcn$ in order to
insure the reality of the gamma factors of $\ub$ and $\vb$ in \eqref{hdgej2ein}.
However, while the expressions in \eqref{hdge1ein}\,--\,\eqref{hdgej2ein}
involve gamma factors of $\ub$ and $\vb$, they involve no
gamma factors of $\wb$.
Hence, the domain of $\wb$ in \eqref{hdge1ein}\,--\,\eqref{hdgej2ein}
can be extended from $\Rcn$ to $\Rn$. Indeed,
extending in \eqref{hdge1ein}\,--\,\eqref{hdgej2ein}
the domain of $\wb$ from $\Rcn$ to $\Rn$,
gyrations $\gyr[\ub,\vb]$
are expanded from maps of $\Rcn$ to linear maps of $\Rn$ for any $\ub,\vb\in\Rcn$,
$\gyruvb:\Rn\rightarrow\Rn$.
}
\end{rremark}

In each of the three special cases when
(i) $\ub=\zerb$, or
(ii) $\vb=\zerb$, or
(iii) $\ub$ and $\vb$ are parallel
in $\Rn$, $\ub\|\vb$, we have
$A\ub+B\vb=\zerb$ so that $\gyr[\ub,\vb]$ is trivial. Thus, we have
\begin{equation} \label{sprdhein}
\begin{split}
\gyr[\zerb,\vb]\wb &= \wb  \\
\gyr[\ub,\zerb]\wb &= \wb \\
\gyr[\ub,\vb]\wb &= \wb , \hspace{1.2cm} \ub\|\vb
\,,
\end{split}
\end{equation}
for all $\ub,\vb\in\Rcn$ such that $\ub\|\vb$ in $\Rn$,
and all $\wb\in\Rn$.

It follows from \eqref{hdge1ein} that
\index{gyration inversion}
\begin{equation} \label{eq1ffmznein}
\gyr[\vb,\ub](\gyr[\ub,\vb]\wb) = \wb
\end{equation}
for all $\ub,\vb\in\Rcn$, $\wb\in\Rn$, or equivalently,
\begin{equation} \label{eq1fgmznein}
\gyr[\vb,\ub]\gyr[\ub,\vb] = I
\end{equation}
for all $\ub,\vb\in\Rcn$, where $I$ denotes the trivial map,
also called the {\it identity map}.

Hence, gyrations are invertible
linear maps of $\Rn$, the inverse, $\gyr^{-1}[\ub,\vb]$,
of $\gyr[\ub,\vb]$ being $\gyr[\vb,\ub]$.
We thus have the gyration inversion property
\begin{equation} \label{feknd}
\gyr^{-1}[\ub,\vb] = \gyr[\vb,\ub]
\end{equation}
for all $\ub,\vb\in\Rcn$.

Gyrations keep the inner product of
elements of the ball $\Rcn$ invariant, that is,
\index{gyration, inner product invariance}
\begin{equation} \label{eq005}
\gyruvb\ab\ccdot\gyruvb\bb = \ab\ccdot\bb
\end{equation}
for all $\ab,\bb,\ub,\vb\in \Rcn$. Hence, $\gyruvb$ is an
{\it isometry} of $\Rcn$,
keeping the norm of elements of the ball $\Rcn$ invariant,
\begin{equation} \label{eq005a}
\|\gyruvb \wb\| = \|\wb\|
\,.
\end{equation}
Accordingly, $\gyruvb$ represents a rotation of the ball $\Rcn$ about its origin for
any $\ub,\vb\in \Rcn$.

The invertible map $\gyruvb$ of $\Rcn$ respects Einstein addition in $\Rcn$,
\index{gyration respect gyroaddition}
\begin{equation} \label{eq005b}
\gyruvb (\ab \op \bb) = \gyruvb\ab \op \gyruvb\bb
\end{equation}
for all $\ab,\bb,\ub,\vb\in \Rcn$,
so that $\gyruvb$ is an automorphism of the Einstein groupoid $(\Rcn,\op)$.

\begin{eexample}\label{exgnw1}
{\it
As an example that illustrates the use of the invariance of the norm
under gyrations, we note that
\begin{equation} \label{kemnf1}
\|\om\ub\op\vb\| = \|\ub\om\vb\| = \|\om\vb\op\ub\|
\,.
\end{equation}
Indeed, we have the following chain of equations,
which are numbered for subsequent derivation,
\begin{equation} \label{kemnf2}
\begin{split}
\|\om\ub\op\vb\|
&
\overbrace{=\!\!=\!\!=}^{(1)} \hspace{0.2cm}
\|\om(\om\ub\op\vb)\|
\\&
\overbrace{=\!\!=\!\!=}^{(2)} \hspace{0.2cm}
\|\ub\om\vb\|
\\&
\overbrace{=\!\!=\!\!=}^{(3)} \hspace{0.2cm}
\|\gyr[\ub,\om\vb](\om\vb\op\ub)\|
\\&
\overbrace{=\!\!=\!\!=}^{(4)} \hspace{0.2cm}
\|\om\vb\op\ub\|
 \end{split}
 \end{equation}
for all $\ub,\vb\in\Rcn$.
Derivation of the numbered equalities in \eqref{kemnf2} follows:
\begin{enumerate}
\item \label{bkdns1}
Follows from the result that $\om\wb=-\wb$, so that
$\|\om\wb\|=\|-\wb\|=\|\wb\|$ for all $\wb\in\Rcn$.
\item \label{bkdns2}
Follows from the automorphic inverse property \eqref{eq01a}, p.~\pageref{eq01a},
of Einstein addition.
\item \label{bkdns3}
Follows from the gyrocommutative law of Einstein addition.
\item \label{bkdns4}
Follows from the result that, by \eqref{eq005a},
gyrations keep the norm invariant.
\end{enumerate}
}
\end{eexample}

\section{From Einstein Addition to Gyrogroups}\label{secdkemb}

Taking the key features of the Einstein groupoid $(\Rcn,\op)$ as
axioms, and guided by analogies with groups,
we are led to the formal gyrogroup definition in which
gyrogroups turn out to form a most natural generalization of groups.

\begin{ddefinition}\label{defroupx}
{\bf (Gyrogroups } {\rm \cite[p.~17]{mybook03}}{\bf ).}
{\it
A groupoid $(G , \op )$
is a gyrogroup if its binary operation satisfies the following axioms.
In $G$ there is at least one element, $0$, called a left identity, satisfying

\noindent
(G1) \hspace{1.2cm} $0 \op a=a$

\noindent
for all $a \in G$. There is an element $0 \in G$ satisfying axiom $(G1)$ such
that for each $a\in G$ there is an element $\om a\in G$, called a
left inverse of $a$, satisfying

\noindent
(G2) \hspace{1.2cm} $\om a \op a=0\,.$

\noindent
Moreover, for any $a,b,c\in G$ there exists a unique element $\gyr[a,b]c \in G$
such that the binary operation obeys the left gyroassociative law

\noindent
(G3) \hspace{1.2cm} $a\op(b\op c)=(a\op b)\op\gyrab c\,.$

\noindent
The map $\gyr[a,b]:G\to G$ given by $c\mapsto \gyr[a,b]c$
is an automorphism of the groupoid $(G,\op)$, that is,

\noindent
(G4) \hspace{1.2cm} $\gyrab\in\Aut (G,\op) \,,$

\noindent
and the automorphism $\gyr[a,b]$ of $G$ is called
the gyroautomorphism, or the gyration, of $G$ generated by $a,b \in G$.
The operator $\gyr : G\times G\rightarrow\Aut (G,\op)$ is called the
gyrator of $G$.
Finally, the gyroautomorphism $\gyr[a,b]$ generated by any $a,b \in G$
possesses the left reduction property

\noindent
(G5) \hspace{1.2cm} $\gyrab=\gyr [a\op b,b] \,.$
}
\end{ddefinition}

The gyrogroup axioms ($G1$)\,--\,($G5$)
in Definition \ref{defroupx} are classified into three classes:
\begin{enumerate} \itemsep-4pt
\item
The first pair of axioms, $(G1)$ and $(G2)$, is a reminiscent of the
group axioms.
\item
The last pair of axioms, $(G4)$ and $(G5)$, presents the gyrator
axioms.
\item
The middle axiom, $(G3)$, is a hybrid axiom linking the two pairs of
axioms in (1) and (2).
\end{enumerate}

As in group theory, we use the notation
$a \om b = a \op (\om b)$
in gyrogroup theory as well.
In full analogy with groups, gyrogroups are classified into gyrocommutative and
non-gyrocommutative gyrogroups.

\begin{ddefinition}\label{defgyrocomm}
{\bf (Gyrocommutative Gyrogroups).}
{\it
A gyrogroup $(G, \oplus )$ is gyrocommutative if
its binary operation obeys the gyrocommutative law

\noindent
(G6) \hspace{1.2cm} $a\oplus b=\gyrab(b\oplus a)$

\noindent
for all $a,b\in G$.
}
\end{ddefinition}

It was the study of Einstein's velocity addition law and its associated
Lorentz transformation group of special relativity theory
that led to the discovery of the
gyrogroup structure in 1988 \cite{parametrization}. However,
gyrogroups are not peculiar to Einstein addition \cite{mbtogyp08}.
Rather, they are abound in the theory of groups
\cite{tuvalungar01,tuvalungar02,feder03,ferreira09,ferreira11},
loops \cite{issa99},
quasigroup \cite{issa2001,kuznetsov03},
and Lie groups \cite{kasparian04}.
The path from M\"obius to gyrogroups is described in \cite{mbtogyp08}.

\section{Einstein Scalar Multiplication}\label{seccgyvecsp}

The rich structure of Einstein addition is not limited to
its gyrocommutative gyrogroup structure. Indeed,
Einstein addition admits scalar multiplication, giving rise to the
Einstein gyrovector space. Remarkably, the resulting
Einstein gyrovector spaces form the setting for the
Cartesian-Beltrami-Klein ball model of hyperbolic geometry just as
vector spaces form the setting for the standard Cartesian model of
Euclidean geometry, as shown in
\cite{mybook01,mybook02,mybook03,mybook04,mybook06,mybook05}
and as indicated in the sequel.

Let $k\od\vb$ be the Einstein addition of $k$ copies of $\vb\in \Rcn$,
that is
$k\od\vb=\vb\op\vb\ldots\op\vb$ ($k$ terms). Then,
\begin{equation} \label{duhnd1}
k\od \vb = c \frac
 {\left(1+\displaystyle\frac{\| \vb \|}{c}\right)^k
- \left(1-\displaystyle\frac{\| \vb \|}{c}\right)^k}
 {\left(1+\displaystyle\frac{\| \vb \|}{c}\right)^k
+ \left(1-\displaystyle\frac{\| \vb \|}{c}\right)^k}
\frac{\vb}{\| \vb\|}
\,.
\end{equation}

The definition of scalar multiplication in an Einstein gyrovector space
requires analytically continuing $k$ off the positive integers, thus
obtaining the following definition.

\begin{ddefinition}\label{defgvspace}
{\bf (Einstein Scalar Multiplication).}
{\it
An Einstein gyrovector space $(\Rsn,\op,\od)$
is an Einstein gyrogroup $(\Rsn,\op)$ with scalar multiplication
$\od$ given by
 \begin{equation} \label{eqmlt03}
 r\od\vb = s \frac
 {\left(1+\displaystyle\frac{\|\vb\|}{s}\right)^r
- \left(1-\displaystyle\frac{\|\vb\|}{s}\right)^r}
 {\left(1+\displaystyle\frac{\|\vb\|}{s}\right)^r
+ \left(1-\displaystyle\frac{\|\vb\|}{s}\right)^r}
\frac{\vb}{\|\vb\|}
= s \tanh( r\,\tanh^{-1}\frac{\|\vb\|}{s})\frac{\vb}{\|\vb\|}
 \end{equation}
where $r$ is any real number, $r\in\Rb$,
$\vb\in\Rsn$, $\vb\ne\bz$, and $r\od \bz=\bz$, and with
which we use the notation $\vb\od  r=r\od \vb$.
}
\end{ddefinition}

As an example, it follows from Def.~\ref{defgvspace} that
{\it Einstein half} is given by the equation
\begin{equation} \label{ehalf53}
\half\od\vb = \frac{\gvb}{1+\gvb} \vb
\,,
\end{equation}
so that, as expected, $\frac{\gvb}{1+\gvb} \vb\op\frac{\gvb}{1+\gvb} \vb=\vb$.

Einstein gyrovector spaces are studied in
\cite{mybook01,mybook02,mybook03,mybook04,mybook06,mybook05}.
Einstein scalar multiplication does not distribute over Einstein addition,
but it possesses other properties of vector spaces. For any
positive integer $k$, and for all real numbers $r,\ro,\rt\in\Rb$ and
$\vb\in\Rsn$, we have
\begin{alignat}{2}\label{scalarprp}
\notag
 k\od\vb&=\vb\op\dots\op\vb &&\qquad\text{$k$ terms}\\[3pt]
\notag
 (\ro+\rt)\od\vb&=\ro\od\vb\op\rt\od\vb
 &&\qquad\text{Scalar Distributive Law}\\[3pt]
\notag
(\ro\rt)\od\vb&=\ro\od(\rt\od\vb)
 &&\qquad\text{Scalar Associative Law} \\
\end{alignat}
in any Einstein gyrovector space $(\Rsn,\op,\od)$.

Additionally, Einstein gyrovector spaces possess the
{\it scaling property}
\begin{equation} \label{dknru1}
\displaystyle{\frac{|r|\od\ab}{\| r\od\ab\|}=\frac{\ab}{\|\ab\|}}
\end{equation}
$\ab\in\Rsn,~\ab\ne\zerb,~r\in\Rb,~r\ne0$, the
{\it gyroautomorphism property}
\begin{equation} \label{dknru2}
\gyruvb(r\od\ab)=r\od\gyruvb  \ab
\end{equation}
$\ab,\ub,\vb\in\Rsn$, $r\in\Rb$,
and the identity gyroautomorphism
\begin{equation} \label{dknru3}
\gyr[\ro\od \vb, \rt\od \vb] = I
\end{equation}
$\ro,\rt\in\Rb$, $\vb\in\Rsn$.

Any Einstein gyrovector space $(\Rsn,\op,\od)$ inherits an inner product
and a norm from its vector space $\Rn$. These turn out to be invariant under gyrations,
\begin{equation} \label{eqwiuj01}
\begin{split}
\gyr[\ab,\bb]\ub \ccdot \gyr[\ab,\bb]\vb &= \ub \ccdot \vb \\[3pt]
\| \gyr[\ab,\bb]\vb \| &= \| \vb \|
\end{split}
\end{equation}
for all $\ab,\bb,\ub,\vb\in \Rsn$, as indicated in Sect.~\ref{secgyrations73}.

\section{From Einstein Scalar Multiplication to Gyrovector Spaces}\label{secgvecs}

Taking the key features of Einstein scalar multiplication as
axioms, and guided by analogies with vector spaces,
we are led to the formal gyrovector space definition in which
gyrovector spaces turn out to form a most natural generalization of vector spaces.

\begin{ddefinition}\label{defgyrovs}
{\bf (Real Inner Product Gyrovector Spaces} {\rm \cite[p.~154]{mybook03}}{\bf ).}
A real inner product gyrovector space $(G,\op,\od)$
(gyrovector space, in short)
is a gyrocommutative gyrogroup $(G,\op)$ that obeys the following axioms:
\begin{itemize}
\item[(1)]
$G$ is a subset of a real inner product vector space $\vi$ called the
carrier of $G$, $G\subset\vi$,
from which it inherits
its inner product, $\ccdot$, and norm, $\|\ccdot\|$,
which are invariant under gyroautomorphisms, that is,
\end{itemize}

\begin{tabbing}
0OOO \= OO0OOOOOOOOOOOOO00OOOO \= OOOOOOOOOOOOOOO000OO \kill
(V1)  \> $\gyruvb\ab\ccdot\gyruvb\bb=\ab\ccdot\bb$ \> Inner Product Gyroinvariance\\[3pt]
\end{tabbing}
\vspace{-1.0cm}
\begin{itemize}
\item[$\phantom{i}$]
for all points $\ab,\bb,\ub,\vb\in G$.
\item[(2)]
$G$ admits a
scalar multiplication, $\od$, possessing the following properties.
For all real numbers
$r,r_1 ,r_2 \in\Rb$ and all points $\ab\in G$:
\end{itemize}
\vbox{
\begin{tabbing}
0OOO \= OO0OOOOOOOOOOOOO00OOOO \= OOOOOOOOOOOOOOO000OO \kill
(V2)  \> $1\od\ab=\ab                     $ \> Identity Scalar Multiplication \\[3pt]
(V3) \> $(\ro+\rt)\od\ab=\ro\od\ab\op\rt\od\ab$ \> Scalar Distributive Law\\[3pt]
(V4)  \> $(\ro\rt)\od\ab=\ro\od(\rt\od\ab)$
\> Scalar Associative Law\\[3pt]
(V5) \> $\displaystyle{\frac{|r|\od\ab}{\| r\od\ab\|}=\frac{\ab}{\|\ab\|}}$,
\hspace{0.3cm} $\ab\ne\zerb,~r\ne0$
\> Scaling Property\\[3pt]
(V6)
\> $\gyruvb(r\od\ab)=r\od\gyruvb  \ab$ \>
Gyroautomorphism Property\\[3pt]
(V7)   \> $\gyr[\ro\od \vb, \rt\od \vb] = I$ \> Identity Gyroautomorphism.
\end{tabbing}
     }
\vspace{-.4cm}
\begin{itemize}
\item[(3)]
Real, one-dimensional vector space structure
$(\| G\|,\op,\od)$ for the set $\| G\|$
of one-dimensional ``vectors''
\end{itemize}
\begin{tabbing}
0OOO \= OO0OOOOOOOOOOOOO00OOOO \= OOOOOOOOOOOOOOO000OO \kill
(V8)  \> $\| G\|=\{\pm\|\ab\| : \ab\in G\} \subset \Rb$ \> Vector Space \\[3pt]
\end{tabbing}
\vspace{-1.0cm}
\begin{itemize}
\item[$\phantom{i}$]
with vector addition $\op$ and scalar multiplication $\od$,
such that for all $r\in\Rb$ and $\ab,\bb\in G$,
\end{itemize}
\begin{tabbing}
0OOO \= OO0OOOOOOOOOOOOO00OOOO \= OOOOOOOOOOOOOOO000OO \kill
(V9)  \> $\| r\od\ab\|=|r| \od\|\ab\|$ \> \!\! Homogeneity Property\\[3pt]
(V10)
\> $\|\ab\op\bb\|\leq\|\ab\|\op\|\bb\|$ \> Gyrotriangle Inequality.
\end{tabbing}
\end{ddefinition}

Einstein addition and scalar multiplication in $\Rsn$ thus give rise
to the Einstein gyrovector spaces $(\Rsn,\op,\od)$, $n\ge2$.

\section{Gyrolines -- The Hyperbolic Lines}\label{secglines}

\begin{figure}[t]  
 \centering         
\psfrag{a}[]{$\,\,A$}
\psfrag{b}[]{$\! B$}
\psfrag{mab}{$m_{_{A,B}}$}
\psfrag{pab}{$P$}
\psfrag{formula00}[]
{$d(A,P) \op d(P,B)=d(A,B)$}
\psfrag{formula01}[]{$\boxed{L_{^{AB}}=A\op(\om A\op B)\od t}$}
\psfrag{formula02}[]{$-\infty\le t \le\infty$}
\psfrag{formula03}{$m_{_{A,B}} = A\op(\om A\op B)\od\half$}
\psfrag{formula04}{$d(A,B) = \|A\om B\|$}
\psfrag{formula05}{$d(A,m_{_{A,B}}) = d(B,m_{_{A,B}})$}
 \includegraphics[width=9cm]{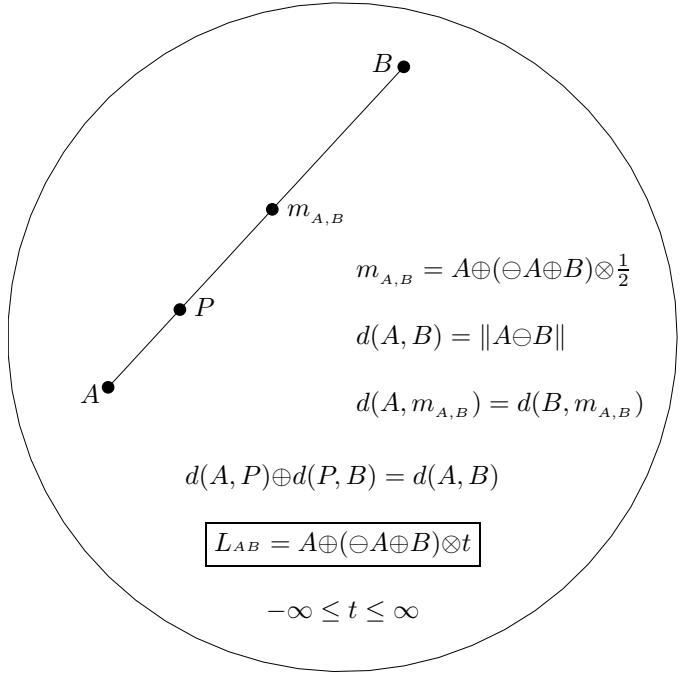}
\caption{
Gyrolines, the hyperbolic lines $L_{^{AB}}$ in
Einstein gyrovector spaces, are fully analogous to lines
in Euclidean spaces.
\label{fig163acein4m}}
\end{figure}

In applications to geometry it is convenient to replace the
notation $\Rcn$ for the $c$-ball of an Einstein gyrovector space by the
$s$-ball, $\Rsn$.
Moreover, it is understood that $n\ge2$, unless specified otherwise.

Let $A,B\in\Rsn$ be two distinct points of the Einstein gyrovector space
$(\Rsn,\op,\od)$, and let $t\in\Rb$ be a real parameter. Then,
the graph of the set of all points
\begin{equation} \label{eqcurve94}
A\op(\om A\op B)\od t
\end{equation}
$t\in\Rb$,
in the Einstein gyrovector space $(\Rsn,\op,\od)$ is a chord of the ball $\Rsn$.
As such, it is a geodesic line of the
Beltrami-Klein ball model of hyperbolic geometry, shown in
Fig.~\ref{fig163acein4m} for $n=2$.
The geodesic line \eqref{eqcurve94} is the unique gyroline that passes
through the points $A$ and $B$. It passes through the point $A$
when $t=0$ and, owing to the left cancellation law, \eqref{eq01b},
it passes through the point $B$ when $t=1$.
Furthermore, it passes through the midpoint $m_{ {A,B}}$ of $A$ and $B$
when $t=1/2$.
Accordingly, the {\it gyrosegment} $AB$ that joins the points
$A$ and $B$ in
Fig.~\ref{fig163acein4m}
is obtained from gyroline \eqref{eqcurve94} with $0\le t\le1$.

Gyrolines \eqref{eqcurve94} are the geodesics of the
Beltrami-Klein ball model of hyperbolic geometry. Similarly,
gyrolines \eqref{eqcurve94} with Einstein addition $\op$ replaced by
M\"obius addition $\opm$ are the geodesics of the
Poincar\'e ball model of hyperbolic geometry. These interesting results
are established by methods of differential geometry in \cite{ungardiff05}.

Each point of \eqref{eqcurve94} with $0<t<1$ is said to lie {\it between}
$A$ and $B$. Thus, for instance, the point $P$ in Fig.~\ref{fig163acein4m}
lies between the points $A$ and $B$.
As such, the points $A$, $P$ and $B$ obey the
{\it gyrotriangle equality} according to which
\begin{equation} \label{dorbe}
d(A,P)\op d(P,B)=d(A,B)
\end{equation}
in full analogy with Euclidean geometry. Here
\begin{equation} \label{eqn22}
d(A,B) = \|\om A \op B\|
\end{equation}
$A,B\in\Rsn$, is the Einstein {\it gyrodistance function},
also called the Einstein {\it gyrometric}.
This gyrodistance function in Einstein gyrovector spaces corresponds
bijectively to a standard hyperbolic distance function,
as demonstrated in \cite[Sect.~6.19]{mybook03}, and
it gives rise to the well-known Riemannian line element
of the Beltrami-Klein ball model of hyperbolic geometry,
as shown in \cite{ungardiff05}.

\section{Euclidean Isometries}\label{asecnsn}
\index{isometry, Euclidean}

In this section and in Sect.~\ref{asecmtn}
 we present well-known results about Euclidean isometries and
Euclidean motions
in order to set the stage for the introduction of
hyperbolic isometries (gyroisometries) and motions (gyromotions)
in Sects.~\ref{aseceinein} and \ref{asecmtnein}.

The Euclidean distance function (distance, in short) in $\Rn$,
\begin{equation} \label{eprt01}
d(A,B) = \|-A+B\|
\,,
\end{equation}
$A,B\in\Rn$, gives the distance between any two points $A$ and $B$.
It possesses the following properties:
\begin{enumerate}
\item \label{udkif1}
$d(A,B) = d(B,A)$
\item \label{udkif2}
$d(A,B) \ge0$
\item \label{udkif3}
$d(A,B)=0$ if and only if $A=B$
\item \label{udkif4}
$d(A,B)\le $d(A,C)+$d(C,B)$ (the triangle inequality)
\index{triangle inequality}
\item \label{udkif5}
$d(A,B) =  $d(A,C)+$d(C,B)$ (the triangle equality, for $A,B,C$ collinear,
$C$ lies between $A$ and $B$)
\index{triangle equality}
\end{enumerate}
for all $A,B,C\in\Rn$.

\index{Isometry, def.}
\begin{ddefinition}\label{defkfbneuc}
{\bf (Isometries).}
{\it
A map $\phi:\Rn\rightarrow\Rn$ is a Euclidean isometry of $\Rn$ (isometry, in short)
if it preserves the distance between any two points of $\Rn$, that is, if
\begin{equation} \label{eprt02}
d(\phi A,\phi B) = d(A,B)
\end{equation}
for all $A,B\in\Rn$.
}
\end{ddefinition}

An isometry is injective (one-to-one into). Indeed, if $A,B\in\Rn$ are
two distinct points, $A\ne B$, then
\begin{equation} \label{eprt03}
0\ne\|-A+B\|=\|-\phi A+\phi B\|
\,,
\end{equation}
so that $\phi A\ne\phi B$.
We will now characterize the isometries of $\Rn$, following which
we will find that isometries are surjective (onto).

For any $ X \in\Rn$,
a translation of $\Rn$ by $X$ is the map
$\lambda_{^X} : \Rn~\rightarrow\Rn$ given by
\begin{equation} \label{dkrn1a}
\lambda_{^X}  A  =  X  +  A
\end{equation}
for all $ A \in\Rn$.

\index{isometry, translation}
\begin{ttheorem}\label{thmdmvdeuc}
{\bf (Translational Isometries).}
Translations of a Euclidean space $\Rn$ are isometries.
\end{ttheorem}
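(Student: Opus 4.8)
The plan is to verify Definition~\ref{defkfbneuc} directly for the translation map $\lambda_{^X}$, for an arbitrary fixed $X\in\Rn$. That is, I would take two arbitrary points $A,B\in\Rn$ and compute $d(\lambda_{^X}A,\lambda_{^X}B)$ using the formula \eqref{eprt01} for the Euclidean distance function, aiming to show that it equals $d(A,B)$.

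The single computation needed is
\[
d(\lambda_{^X}A,\lambda_{^X}B)=\|-(X+A)+(X+B)\|=\|-A+B\|=d(A,B),
\]
where the middle equality uses only the commutativity and associativity of vector addition in $\Rn$ (equivalently, that $(\Rn,+)$ is an abelian group, so the $X$ and $-X$ cancel). Since $A$ and $B$ were arbitrary, this establishes \eqref{eprt02} with $\phi=\lambda_{^X}$, which is precisely the assertion that $\lambda_{^X}$ is an isometry.

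There is essentially no obstacle here: the proof is a one-line cancellation in the additive group $\Rn$, and the only ``ingredient'' is the group structure of $(\Rn,+)$ together with the definition of $d$. If one wishes to add a remark, it would be that this argument is exactly the model for the hyperbolic analogue to be developed later, where $+$ is replaced by Einstein addition $\op$ and the naive cancellation must be replaced by the left cancellation law \eqref{eq01b}, at the cost of an accompanying gyration; but for the Euclidean statement at hand, no such refinement is required.
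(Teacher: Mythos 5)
Your proposal is correct and is essentially identical to the paper's own proof: the same one-line cancellation $\|-(X+A)+(X+B)\|=\|-A+B\|$ using the abelian group structure of $(\Rn,+)$, and even the same remark that the computation is the model for the hyperbolic analogue where the gyration enters. Nothing is missing.
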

\begin{proof}
The proof is trivial, but we present it in order to set the stage for the
gyro-counterpart Theorem \ref{thmdmvdein}, p.~\pageref{thmdmvdein}, of this theorem.
Let $\lambda_{^X}$, $X\in\Rn$, be a translation of a Euclidean space $\Rn$.
Then $\lambda_{^X}$ is an isometry of the space, as we see from the
following obvious chain of equations,
\begin{equation} \label{svrduh}
\|-\lambda_{^X}A + \lambda_{^X}B\| =
\|-(X+A) + (X+B)\| =
\|-A+B\|
\,.
\end{equation}
\phantom{O}
\end{proof}

\index{isometry, characterization}
\begin{ttheorem}\label{thmrgbneuc}
{\bf (Isometry Characterization {\rm \cite[p.~19]{ratcliffe06}}).}
Let $\phi:\Rn\rightarrow\Rn$ be a map of $\,\Rn$.
Then the following are equivalent:
\begin{enumerate}
\item \label{uecsf1}
The map $\phi$ is an isometry.
\item \label{uecsf2}
The map $\phi$ preserves the distance between points.
\item \label{uecsf3}
The map $\phi$ is of the form
\begin{equation} \label{eprt04}
\phi X = A+RX
\,,
\end{equation}
where $R\in O(n)$ is an $n\times n$ orthogonal matrix
(that is, $R^tR=RR^t=I$ is the identity matrix)
and $A=\phi O\in\Rn$, $O=(0,\ldots,0)$ being the origin of $\Rn$.
\end{enumerate}
\end{ttheorem}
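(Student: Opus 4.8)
The plan is to prove the cycle of implications $(3)\Rightarrow(1)\Rightarrow(2)\Rightarrow(3)$; the first two arrows are quick, and the substance of the theorem is in $(2)\Rightarrow(3)$. For $(3)\Rightarrow(1)$, I would observe that a map of the form $\phi X = A + RX$ is the composition $\lambda_A \circ R$ of the translation $\lambda_A$ (an isometry by Theorem~\ref{thmdmvdeuc}) with the linear map $R$, and that $R \in O(n)$ is an isometry because $\|RX - RY\|^2 = \|R(X-Y)\|^2 = (X-Y)^t R^t R (X-Y) = \|X-Y\|^2$; a composition of isometries is an isometry. The implication $(1)\Rightarrow(2)$ is immediate, since preserving distance is precisely the defining property in Definition~\ref{defkfbneuc}.

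For the core implication $(2)\Rightarrow(3)$, suppose $\phi$ preserves distances and set $A = \phi O$. Define $\psi = \lambda_{-A}\circ\phi$, i.e.\ $\psi X = \phi X - A$. Then $\psi$ still preserves distances (it is a composition of $\phi$ with the translational isometry $\lambda_{-A}$) and satisfies $\psi O = O$. The goal is then to show $\psi$ is linear and orthogonal, i.e.\ $\psi X = RX$ for some $R\in O(n)$; then $\phi X = A + RX$, which is (3). First I would show $\psi$ preserves norms: $\|\psi X\| = \|\psi X - \psi O\| = \|X - O\| = \|X\|$. Next, $\psi$ preserves inner products: from the polarization-type identity $-2\,U\ccdot V = \|U-V\|^2 - \|U\|^2 - \|V\|^2$ applied to $U = \psi X$, $V = \psi Y$, together with norm preservation and the fact that $\|\psi X - \psi Y\| = \|X - Y\|$, one gets $\psi X \ccdot \psi Y = X\ccdot Y$ for all $X,Y\in\Rn$.

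Having established that $\psi$ preserves inner products, I would conclude linearity as follows. Let $e_1,\ldots,e_n$ be the standard orthonormal basis of $\Rn$ and set $f_i = \psi e_i$; by inner-product preservation the $f_i$ form an orthonormal set, hence an orthonormal basis. For any $X\in\Rn$ with coordinates $x_i = X\ccdot e_i$, the vector $\psi X$ has coordinates $\psi X \ccdot f_i = X\ccdot e_i = x_i$ in the basis $\{f_i\}$, so $\psi X = \sum_i x_i f_i$. Letting $R$ be the matrix with columns $f_1,\ldots,f_n$, this says exactly $\psi X = RX$, and $R^t R = I$ because its columns are orthonormal, so $R\in O(n)$. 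Therefore $\phi X = A + RX$ with $A = \phi O$, completing the cycle. Finally, since every orthogonal $R$ is invertible and every translation is invertible, each isometry $\phi X = A + RX$ is a bijection, which yields the promised surjectivity remark.

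The main obstacle is conceptual rather than computational: one must resist assuming linearity of $\phi$ and instead \emph{derive} it, and the right order is norm preservation $\Rightarrow$ inner-product preservation (via polarization) $\Rightarrow$ the basis-coordinate argument for linearity. The only mild subtlety is making sure the reduction to $\psi$ with $\psi O = O$ is done first, since the polarization identity needs a basepoint fixed at the origin; everything else is routine linear algebra.
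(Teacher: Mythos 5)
Your proposal is correct and follows essentially the same route as the paper: reduce to a basepoint-fixing map by subtracting $\phi O$, show it preserves norms and then inner products via polarization, and conclude orthogonality, with the converse direction handled by composing a translation with an orthogonal map. The one place you go beyond the paper is the explicit basis-coordinate argument deriving linearity of $\psi$ from inner-product preservation — the paper simply asserts at that point that an orthogonal matrix $R$ exists — so your version actually fills in a step the paper leaves implicit.
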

\begin{proof}
By definition, Item \eqref{uecsf1} implies Item \eqref{uecsf2} of the Theorem.
Suppose that $\phi$ preserves the distance between any two points
of $\Rn$, and let $R:\Rn\rightarrow\Rn$
be the map given by
\begin{equation} \label{matil4}
RX=\phi X-\phi O
\,.
\end{equation}
Then $RO=O$, and $R$ also preserves the distance. Indeed, for all $A,B\in\Rn$
\begin{equation} \label{eprt05}
\|-RA+RB\| = \|-(\phi A - \phi O)+(\phi B - \phi O)\|
=\|-\phi A + \phi B\| = \|-A+B\|
\,.
\end{equation}
Hence, $R$ preserves the norm,
\begin{equation} \label{eprt06}
\| RX\| = \|-RO+RX\| = \|-O+X\| = \|X\|
\,.
\end{equation}
Consequently, $R$ is orthogonal, $R\in O(n)$. Indeed,
for all $X,Y\in\Rn$ we have
\begin{equation} \label{eprt07a}
\begin{split}
\| X-Y\|^2 &= (X-Y)\ccdot(X-Y)
\\[8pt] &= X\ccdot X - X\ccdot Y - Y\ccdot X + Y\ccdot Y
\\[8pt] &= \| X\|^2 + \| Y\|^2 - 2X\ccdot Y
\,,
\end{split}
\end{equation}
so that
\begin{equation} \label{eprt07}
\begin{split}
2RX\ccdot RY &= \|RX\|^2 + \|RY\|^2 - \|RX-RY\|^2
\\[8pt] &=
\|X\|^2 + \|Y\|^2 - \|X-Y\|^2
\\[8pt] &=
2X\ccdot Y
\,.
\end{split}
\end{equation}
Thus, following \eqref{matil4},
there is an orthogonal $n\times n$ matrix $R$ such that
\begin{equation} \label{eprt08}
\phi X = \phi O+RX
\,,
\end{equation}
and so \eqref{uecsf2} implies \eqref{uecsf3}.

If $\phi$ is of the form \eqref{eprt04} then $\phi$ is the composite
of an orthogonal transformation followed by a translation,
and so $\phi$ is an isometry.
Thus, \eqref{uecsf3} implies \eqref{uecsf1}, and the proof is complete.
\end{proof}

Following Theorem \ref{thmrgbneuc}, it is now clear that
isometries of $\Rn$ are surjective (onto), the inverse of
isometry $A+RX$ being
\begin{equation} \label{eprt09}
(A+RX)^{-1} = -R^tA+R^tX
\,.
\end{equation}

\index{isometry, unique decomposition}
\begin{ttheorem}\label{thmdecoeuc}
{\bf (Isometry Unique Decomposition).}
Let $\phi$ be an isometry of $\Rn$. Then it possesses the decomposition
\begin{equation} \label{eprt10}
\phi X = A+RX
\,,
\end{equation}
where $A\in\Rn$ and $R\in O(n)$ are unique.
\end{ttheorem}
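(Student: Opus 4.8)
The plan is to get existence for free and then dispatch uniqueness by a one-line evaluation argument. Since $\phi$ is an isometry of $\Rn$, the Isometry Characterization theorem (Theorem~\ref{thmrgbneuc}) already tells us that $\phi$ is of the form $\phi X = A + RX$ with $A = \phi O \in \Rn$ and $R \in O(n)$. So the decomposition exists, and the only thing left to prove is that the pair $(A,R)$ is pinned down by $\phi$.

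For uniqueness, I would suppose we are given two such decompositions, $\phi X = A_1 + R_1 X = A_2 + R_2 X$ for all $X \in \Rn$, with $A_1,A_2 \in \Rn$ and $R_1,R_2 \in O(n)$. Evaluating at the origin $X = O$ forces $A_1 = \phi O = A_2$; write $A$ for this common value. Subtracting the two expressions then gives $R_1 X = R_2 X$ for every $X \in \Rn$, and testing this identity against the standard basis vectors $e_1,\ldots,e_n$ shows that $R_1$ and $R_2$ agree column by column, hence $R_1 = R_2 =: R$. Therefore the decomposition $\phi X = A + RX$ is unique.

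There is no real obstacle here; the substantive content was already carried by Theorem~\ref{thmrgbneuc}, and this statement merely records that the normal form produced there is unique. The only point worth flagging is the (trivial) linear-algebra fact that a linear map of $\Rn$ is determined by its values on a spanning set, which is what licenses the passage from $R_1 X = R_2 X$ for all $X$ to $R_1 = R_2$.
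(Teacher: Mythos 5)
Your proof is correct and follows essentially the same route as the paper: existence is delegated to the Isometry Characterization Theorem \ref{thmrgbneuc}, and uniqueness is obtained by evaluating at the origin to force $A_1=A_2$ and then concluding $R_1=R_2$. Your remark that a linear map is determined by its values on a spanning set simply makes explicit the step the paper leaves implicit.
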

\begin{proof}
By Theorem \ref{thmrgbneuc}, $\phi X$ possesses a decomposition \eqref{eprt10}.
Let
\begin{equation} \label{eprt11}
\begin{split}
\phi X &= A_1 + R_1 X
\\[4pt]
\phi X &= A_2 + R_2 X
\end{split}
\end{equation}
be two decompositions of $\phi X$, $X\in \Rn$. For $X=O$ we have
$R_1O=R_2O=O$, implying $A_1=A_2$. The latter, in turn, implies
$R_1=R_2$, and the proof is complete
\end{proof}

Let $R$ be an orthogonal matrix. As $RR^t=I$, we have that
$({\rm det}R)^2=1$, so that det$R=\pm1$.
If det$R=1$, then $R$ represents a rotation of $\Rn$ about its origin.
The set of all rotations $R$ in $O(n)$ is a subgroup $\son\subset O(n)$
called the special orthogonal group. Accordingly, $\son$ is the group of
all $n\times n$ orthogonal matrices with determinant 1.

The set of all isometries $\phi X=A+RX$ of $\Rn$,
$A,X\in\Rn$, $R\in O(n)$,
forms a group called the isometry group of $\Rn$.
Following \cite[p.~416]{behnke83},
\begin{enumerate}
\item
the isometries $\phi X=A+RX$ of $\Rn$ with det$R=1$
are called {\it direct isometries}, or {\it motions}, of $\Rn$; and
\item
the isometries $\phi X=A+RX$ of $\Rn$ with det$R=-1$
are called {\it opposite isometries}.
\end{enumerate}
The motions of $\Rn$, studied in Sect.~\ref{asecmtn},
form a subgroup of the isometry group of $\Rn$.
\index{isometry, direct}
\index{isometry, opposite}

\section{The Group of Euclidean Motions}\label{asecmtn}
\index{motions, Euclidean}

The Euclidean group of motions of $\Rn$ is the
direct isometry group. It consists of the
(i) commutative group of all translations of $\Rn$ and
(ii) the group of all rotations of $\Rn$ about its origin.

A rotation $R$ of $\Rn$ about its origin is an element of the group $\son$
of all $n\timess n$ orthogonal matrices with determinant 1.
The rotation of $ A \in\Rn$ by $R\in\son$ is $RA$.
The map $R\in\son$ is
a linear map of $\Rn$ that keeps the inner product invariant, that is
\begin{equation} \label{dkrn1b}
\begin{split}
R(A+B) &= RA+RB \\
RA\ccdot RB &= A\ccdot B
\end{split}
\end{equation}
for all $A,B\in\Rn$ and all $R\in\son$.

The {\it Euclidean group of motions}\index{motions, Euclidean}
is the
{\it semidirect product group}\index{semidirect product}
\begin{equation} \label{hdekcn0}
\Rn \times \son
\end{equation}
of the Euclidean commutative group $\Rn=(\Rn,+)$ and the rotation group
$\son$. It is a group
of pairs $( X ,R)$, $ X \in(\Rn,+)$, $R\in\son$, acting isometrically
on $\Rn$ according to the equation
\begin{equation} \label{hdekcn1}
(X,R)A = X+RA
\end{equation}
for all $A\in\Rn$.
Each pair $(X,R) \in \Rn \times \son$, accordingly, represents a
rotation of $\Rn$ followed by a translation of $\Rn$.

The group operation of the semidirect product group \eqref{hdekcn0} is given
by action composition.
Accordingly, let $(X_1,R_1)$ and $(X_2,R_2)$ be any two elements of the
semidirect product group $\Rn \times \son$. Their successive applications
to $A\in\Rn$ is equivalent to a single application to $A$, as shown in the
following chain of equations \eqref{hdekcn2},
in which we employ the associative law of vector addition, +, in $\Rn$.
\begin{equation} \label{hdekcn2}
\begin{split}
( X_1,R_1) ( X_2,R_2) A &= ( X_1,R_1) ( X_2 + R_2 A)
\\ &=
 X_1 + R_1 ( X_2 + R_2 A) 
\\ &=
 X_1 + (R_1  X_2 + R_1 R_2 A) 
\\ &=
( X_1 + R_1  X_2) + R_1 R_2 A  
\\ &=
( X_1 + R_1  X_2, R_1 R_2) A  
\end{split}
\end{equation}
for all $A\in\Rn$.

It follows from \eqref{hdekcn2} that the group operation of the
semidirect product group \eqref{hdekcn0} is given by the
{\it semidirect product}\index{semidirect product}
\begin{equation} \label{hdekcn3}
(X_1,R_1) (X_2,R_2) = (X_1 + R_1 X_2,R_1 R_2)
\end{equation}
for any $(X_1,R_1), (X_2,R_2) \in \Rn \times \son$.

\index{covariance, def.}
\begin{ddefinition}\label{defhvyxeuc}
{\bf (Covariance).}
{\it
A map
\begin{equation} \label{eq2wefb01}
T:\, (\Rn)^k ~~\rightarrow~~ \Rn
\end{equation}
from $k$ copies of $\Rn$ into $\Rn$
is covariant (with respect to the motions of $\Rn$)
if its image $T(A_1,A_2,\ldots,A_k)$ co-varies (that is,
varies together) with its preimage points $A_1,A_2,\ldots,A_k$ under
the motions of $\Rn$, that is, if
 \begin{equation} \label{eq2wefb02}
 \begin{split}
X +  T(  A_1,\ldots,  A_k) &= T( X  +   A_1,\ldots, X  +   A_k)\\[6pt]
 R   T(  A_1,\ldots,  A_k) &= T( R    A_1,\ldots, R    A_k)
 \end{split}
 \end{equation}
for all $X\in\Rn$ and all $R\in\son$.
In particular, the first equation in \eqref{eq2wefb02} represents
covariance with respect to (or, under) translations,
and the second equation in \eqref{eq2wefb02} represents
covariance with respect to (or, under) rotations.
}
\end{ddefinition}

Following Theorem \ref{thmfkvne}, p.~\pageref{thmfkvne},
we will see that
Euclidean barycentric coordinate representations of points of $\Rn$
are covariant.

The importance of covariance under the motions of
a geometry was first recognized by
Felix Klein (1849--1924) in his {\it Erlangen Program},
the traditional professor's inaugural speech that he gave at the
University of Erlangen in 1872.
The thesis that Klein published in Erlangen in 1872 is that a geometry is
a system of definitions and theorems that express properties invariant under
a given group of transformations called {\it motions}.
The Euclidean motions of Euclidean geometry are described in this section,
and the hyperbolic motions of hyperbolic geometry are described in Sect.~\ref{asecmtnein}.
It turns out that the Euclidean and the hyperbolic motions
share remarkable analogies.

\section{Gyroisometries -- The Hyperbolic Isometries}\label{aseceinein}
\index{gyroisometry}
\index{isometry, hyperbolic}

Our study of hyperbolic isometries is
guided by analogies with Euclidean isometries, studied in Sect.~\ref{asecnsn}.
The hyperbolic counterpart of the Euclidean distance function
$d(A,B)$ in $\Rn$, given by \eqref{eprt01},
is the {\it gyrodistance} function
$d(A,B)$ in an Einstein gyrovector space $\Rsn=(\Rsn,\op,\od)$, given by
\begin{equation} \label{eprt01ein}
d(A,B) = \|\om A\op B\|
\,,
\end{equation}
$A,B\in\Rsn$, giving the gyrodistance between any two points $A$ and $B$.
It should always be clear from the context whether $d(A,B)$ is the
distance function in $\Rn$ or the gyrodistance function in $\Rsn$.
Like the distance function, the gyrodistance function
possesses the following properties for all $A,B,C\in\Rsn$:
\begin{enumerate}
\item \label{udkif1ein}
$d(A,B) = d(B,A)$
\item \label{udkif2ein}
$d(A,B) \ge0$
\item \label{udkif3ein}
$d(A,B)=0$ if and only if $A=B$.
\item \label{udkif4ein}
$d(A,B)\le d(A,C) \op d(C,B)$ (The gyrotriangle inequality).
\index{gyrotriangle inequality}
\item \label{udkif5ein}
$d(A,B) =  d(A,C) \op d(C,B)$ (The gyrotriangle equality, for $A,B,C$ gyrocollinear,
$C$ lies between $A$ and $B$).
\index{gyrotriangle equality}
\end{enumerate}

The gyrotriangle inequality in Item \eqref{udkif4ein}
is presented, for instance, in \cite[p.~94]{mybook06}.

\index{gyroisometry, def.}
\begin{ddefinition}\label{defkfbnein}
{\bf (Gyroisometries).}
{\it
A map $\phi:\Rsn\rightarrow\Rsn$ is a gyroisometry of $\Rsn$
if it preserves the gyrodistance between any two points of $\Rsn$, that is, if
\begin{equation} \label{eprt02ein}
d(\phi A,\phi B) = d(A,B)
\end{equation}
for all $A,B\in\Rsn$.
}
\end{ddefinition}

A gyroisometry is injective (one-to-one into). Indeed, if $A,B\in\Rsn$ are
two distinct points, $A\ne B$, then
\begin{equation} \label{eprt03ein}
0\ne\|\om A \op B\|=\|\om \phi A \op \phi B\|
\,,
\end{equation}
so that $\phi A\ne\phi B$.
We will now characterize the gyroisometries of $\Rsn$, following which
we will find that gyroisometries are surjective (onto).

For any $ X \in\Rsn$,
a left gyrotranslation of $\Rsn$ by $X$ is the map
$\lambda_{^X} : \Rsn~\rightarrow\Rsn$ given by
\begin{equation} \label{dkrn1aein}
\lambda_{^X} A = X \op A
\end{equation}
for all $ A \in\Rsn$.

\begin{ttheorem}\label{thm5d8}
{\bf (Left Gyrotranslation Theorem).}
{\rm
\cite[p.~29]{mybook02}
\cite[p.~23]{mybook03}
\cite[p.~82]{mybook06}
\cite[p.~39]{mybook05}.
}
Let $(G,\op )$ be a gyrogroup. Then,
\begin{equation} \label{eq5d18}
\om (X\op A)\op (X\op B) = \gyr[X,A] (\om A\op B) \\[4pt]
\end{equation}
for all $A,B,X\in  G$,
\end{ttheorem}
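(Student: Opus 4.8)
The plan is to derive the identity \eqref{eq5d18} directly from the gyrogroup axioms (G1)--(G5) together with the gyrocommutative law (G6) — or, if one prefers, from the left gyroassociative law alone, since the identity as stated does not require gyrocommutativity. The cleanest route uses the left gyroassociative law (G3) and the left reduction property (G5) in the form \eqref{laws00}, namely $\gyr[\ub\op\vb,\vb]=\gyruvb$. First I would rewrite the left-hand side $\om(X\op A)\op(X\op B)$ by introducing the left gyroassociative law to ``peel off'' the outer $X$. Explicitly, apply (G3) with the roles $a=X$, $b=A$, $c=(\om A\op B)$ to obtain
\begin{equation} \label{eqpf5d8a}
X\op\bigl(A\op(\om A\op B)\bigr)=(X\op A)\op\gyr[X,A](\om A\op B)\,.
\end{equation}

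The left cancellation law \eqref{eq01b} (which in the abstract setting is the consequence $\om a\op(a\op b)=b$ of the axioms) simplifies the inner expression on the left of \eqref{eqpf5d8a}: $A\op(\om A\op B)=B$. Hence \eqref{eqpf5d8a} becomes $X\op B=(X\op A)\op\gyr[X,A](\om A\op B)$. Now I would left-cancel the term $X\op A$: applying the left cancellation law once more, with $X\op A$ playing the role of the cancelled element, gives
\begin{equation} \label{eqpf5d8b}
\om(X\op A)\op(X\op B)=\gyr[X,A](\om A\op B)\,,
\end{equation}
which is exactly \eqref{eq5d18}.

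The main thing to be careful about is justifying the two cancellation steps purely from the axioms, since in a general gyrogroup we only have a \emph{left} identity and \emph{left} inverses a priori. The step $A\op(\om A\op B)=B$ is the left cancellation law, which is a standard elementary consequence of (G1)--(G3) (it follows by applying (G3) to $\om A\op(A\op B)$ and using that $\gyr[\om A,A]$ is trivial, which in turn comes from (G5) and (G2)); I would either cite \cite[p.~17]{mybook03} for it or include the two-line derivation. The final left-cancellation in passing from $X\op B=(X\op A)\op\gyr[X,A](\om A\op B)$ to \eqref{eqpf5d8b} is the same law applied with the element $X\op A$. I expect no real obstacle here — the identity is essentially a bookkeeping exercise in the gyroassociative law — but the one subtlety worth flagging is that the reduction property (G5) is what guarantees the gyration appearing after the rearrangement is indexed by $[X,A]$ and not by some more complicated argument; this is automatic from the statement of (G3), so in fact (G5) is not even needed for this particular theorem. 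I would therefore present the proof as the short chain: apply (G3), simplify the inner left-cancellation, then left-cancel $X\op A$.
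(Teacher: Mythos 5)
Your proof is correct. The paper itself states Theorem \ref{thm5d8} without proof, deferring to the cited books, and your argument --- apply the left gyroassociative law (G3) to $X\op\bigl(A\op(\om A\op B)\bigr)$, simplify the inner term by the left cancellation law \eqref{eq01b}, then left-cancel $X\op A$ --- is precisely the standard derivation given in those references, so there is nothing to add beyond the citation you already propose for the elementary fact that $\gyr[\om A,A]$ is trivial.
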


\index{gyroisometry, left gyrotranslation}
\begin{ttheorem}\label{thmdmvdein}
{\bf (Left Gyrotranslational Gyroisometries).}
Left gyrotranslations of an Einstein gyrovector space are gyroisometries.
\end{ttheorem}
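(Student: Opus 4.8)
The plan is to imitate the Euclidean argument in the proof of Theorem~\ref{thmdmvdeuc}, with gyrations absorbing the obstruction to associativity. Let $\lambda_{^X}$, $X\in\Rsn$, be a left gyrotranslation of an Einstein gyrovector space $(\Rsn,\op,\od)$, so that $\lambda_{^X}A = X\op A$ for all $A\in\Rsn$. By Definition~\ref{defkfbnein} we must show $d(\lambda_{^X}A,\lambda_{^X}B) = d(A,B)$ for all $A,B\in\Rsn$, i.e.
\begin{equation}
\|\om(X\op A)\op(X\op B)\| = \|\om A\op B\|
\,.
\end{equation}

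First I would apply the Left Gyrotranslation Theorem (Theorem~\ref{thm5d8}), which gives the key identity
\begin{equation}
\om(X\op A)\op(X\op B) = \gyr[X,A](\om A\op B)
\,.
\end{equation}
Then I would take norms of both sides. The right-hand side is a gyration applied to the vector $\om A\op B$, and by \eqref{eq005a} (equivalently \eqref{eqwiuj01}) gyrations preserve the norm inherited from $\Rn$, so
\begin{equation}
\|\gyr[X,A](\om A\op B)\| = \|\om A\op B\|
\,.
\end{equation}
Chaining these, $d(\lambda_{^X}A,\lambda_{^X}B) = \|\om(X\op A)\op(X\op B)\| = \|\gyr[X,A](\om A\op B)\| = \|\om A\op B\| = d(A,B)$, which is exactly the gyroisometry condition.

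There is essentially no obstacle here: the two nontrivial ingredients — the left gyrotranslation identity and gyration norm-invariance — are both already established in the excerpt (Theorem~\ref{thm5d8} and \eqref{eq005a}), so the proof is a two-line chain. The only point worth a remark is the parallel with the Euclidean case: where the Euclidean proof \eqref{svrduh} used plain cancellation of the translation vector $X$, the hyperbolic proof must route the cancellation through a gyration, which is harmless precisely because gyrations are isometries of the carrier space. I would present the argument as a short displayed chain of equalities, annotated with references to Theorem~\ref{thm5d8} and to the gyration norm-invariance \eqref{eq005a}, mirroring the format of Theorem~\ref{thmdmvdeuc} so the analogy is transparent.
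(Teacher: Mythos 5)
Your proposal is correct and follows exactly the same route as the paper's proof: the displayed chain \eqref{diktiv} applies the Left Gyrotranslation Theorem \ref{thm5d8} and then the gyration norm-invariance \eqref{eq005a}, precisely as you describe. Nothing is missing.
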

\begin{proof}
Let $\lambda_{^X}$, $X\in\Rsn$, be a left gyrotranslation of an
Einstein gyrovector space $(\Rsn,\op,\od)$.
Then $\lambda_{^X}$ is a gyroisometry of the space, as we see from the
following chain of equations, which are numbered for subsequent derivation:
\begin{equation}\label{diktiv}
\begin{split}
\|\om \lambda_{^X}A \op \lambda_{^X}B\|
~&
\overbrace{=\!\!=\!\!=}^{(1)} \hspace{0.2cm}
\|\om(X\op A) \op (X\op B)\|
\\&
\overbrace{=\!\!=\!\!=}^{(2)} \hspace{0.2cm}
\|\gyr[X,A](\om (A\op B) \|
\\&
\overbrace{=\!\!=\!\!=}^{(3)} \hspace{0.2cm}
\|\om A \op B\|
 \end{split}
 \end{equation}
for all $A,B,X\in\Rsn$.
Derivation of the numbered equalities in \eqref{diktiv} follows:
\begin{enumerate}
\item \label{dhkif1}
Follows from \eqref{dkrn1aein}.
\item \label{dhkif2}
Follows from (1) by the Left Gyrotranslation Theorem \ref{thm5d8}.
\item \label{dhkif3}
Follows from (2) by the norm invariance \eqref{eq005a} under gyrations.
\end{enumerate}
\phantom{O}
\end{proof}

\index{gyroisometry, characterization}
\begin{ttheorem}\label{thmkfbnein}
{\bf (Gyroisometry Characterization).}
Let $\phi:\Rsn\rightarrow\Rsn$ be a map of $\Rsn$.
Then the following are equivalent:
\begin{enumerate}
\item \label{uecsf1ein}
The map $\phi$ is a gyroisometry.
\item \label{uecsf2ein}
The map $\phi$ preserves the gyrodistance between points.
\item \label{uecsf3ein}
The map $\phi$ is of the form
\begin{equation} \label{eprt04ein}
\phi X = A \op RX
\,,
\end{equation}
where $R\in O(n)$ is an $n\times n$ orthogonal matrix
(that is, $R^tR=RR^t=I$ is the identity matrix)
and $A=\phi O\in\Rsn$, $O=(0,\ldots,0)$ being the origin of $\Rsn$.
\end{enumerate}
\end{ttheorem}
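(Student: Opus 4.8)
The plan is to mirror the proof of the Euclidean Isometry Characterization (Theorem \ref{thmrgbneuc}), replacing translations by left gyrotranslations, the linearity of ordinary translation by the Left Gyrotranslation Theorem \ref{thm5d8}, and the Euclidean polarization identity by the gamma identity \eqref{grbsf09p1}. Since Item \eqref{uecsf1ein} and Item \eqref{uecsf2ein} coincide by Definition \ref{defkfbnein}, it suffices to prove \eqref{uecsf2ein}$\,\Rightarrow\,$\eqref{uecsf3ein} and \eqref{uecsf3ein}$\,\Rightarrow\,$\eqref{uecsf1ein}, closing the loop.

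For \eqref{uecsf2ein}$\,\Rightarrow\,$\eqref{uecsf3ein}, suppose $\phi$ preserves the gyrodistance, put $A=\phi O$, and define $R:\Rsn\rightarrow\Rsn$ by $RX=\om A\op\phi X$, so that $\phi X = A\op RX$ by the left cancellation law \eqref{eq01b} and $RO=\om A\op\phi O=\zerb$. Then $R$ preserves the gyrodistance: $d(RX,RY)=\|\om(\om A\op\phi X)\op(\om A\op\phi Y)\| = \|\gyr[\om A,\phi X](\om\phi X\op\phi Y)\|$ by the Left Gyrotranslation Theorem \ref{thm5d8}, and this equals $\|\om\phi X\op\phi Y\| = d(\phi X,\phi Y)=d(X,Y)$ since gyrations are norm-preserving, \eqref{eq005a}. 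In particular, because $RO=\zerb$, the map $R$ preserves the norm, $\|RX\|=d(RO,RX)=d(O,X)=\|X\|$. Now comes the gyro-specific step: by the gamma identity \eqref{grbsf09p1}, $\gmXpY=\gX\gY(1-X\ccdot Y/s^2)$; applying this to both the pair $(X,Y)$ and the pair $(RX,RY)$ and using that $R$ preserves the three norms $\|X\|,\|Y\|,\|\om X\op Y\|$ (hence the three corresponding gamma factors), one obtains $RX\ccdot RY = X\ccdot Y$ for all $X,Y\in\Rsn$. To conclude that $R$ is the restriction of a linear orthogonal map, fix $\delta$ with $0<\delta<s$, let $\eb_1,\dots,\eb_n$ be the standard orthonormal basis of $\Rn$, and set $\gb_i=R(\delta\eb_i)/\delta$. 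Then $\gb_i\ccdot\gb_j=\delta_{ij}$, so $\{\gb_i\}$ is an orthonormal basis of $\Rn$, and for any $X=\sum x_i\eb_i\in\Rsn$ we have $RX\ccdot\gb_i=\tfrac1\delta\,RX\ccdot R(\delta\eb_i)=\tfrac1\delta\,X\ccdot(\delta\eb_i)=x_i$, whence $RX=\sum x_i\gb_i$ (the choice of $\delta$ is immaterial, as taking $X=\delta'\eb_i$ shows). Thus $R$ agrees on $\Rsn$ with the orthogonal matrix $\eb_i\mapsto\gb_i$, i.e.\ $R\in O(n)$, and $\phi X=A\op RX$ has the form \eqref{eprt04ein} with $A=\phi O$.

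For \eqref{uecsf3ein}$\,\Rightarrow\,$\eqref{uecsf1ein}, if $\phi X=A\op RX$ with $R\in O(n)$, then $\phi$ is the composite of $X\mapsto RX$ followed by the left gyrotranslation $X\mapsto A\op X$. The latter is a gyroisometry by Theorem \ref{thmdmvdein}. The former is a gyroisometry because an orthogonal $R$ commutes with Einstein addition, $R(\om X\op Y)=\om RX\op RY$ — which is visible from \eqref{eq01} since that formula is assembled only from vector addition, scalar multiples, norms, and inner products, all preserved by $R$ — so that $d(RX,RY)=\|R(\om X\op Y)\|=\|\om X\op Y\|=d(X,Y)$. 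A composite of gyroisometries is a gyroisometry, so $\phi$ is a gyroisometry, completing the cycle.

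The main obstacle is the passage from ``$R$ preserves the gyrometric on $\Rsn$ and fixes the origin'' to ``$R$ is the restriction of a linear orthogonal map.'' In the Euclidean setting this is immediate from polarization, but here $R$ is a priori only a self-map of the bounded ball, so linearity cannot be read off directly; the remedy is the two-stage argument above — first invoke the gamma identity to convert gyrodistance preservation into inner-product preservation, then reconstruct $R$ as a genuine linear map via the scaled orthonormal frame $\{\gb_i\}$. The only other point needing care is the harmless rotational equivariance $R(\ub\op\vb)=R\ub\op R\vb$ for $R\in O(n)$ used in the last step, which follows by inspection of \eqref{eq01}.
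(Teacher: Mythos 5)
Your proof is correct and follows the paper's own route step for step: both define $R$ by $RX=\om\phi O\op\phi X$ and recover $\phi X=\phi O\op RX$ from the left cancellation law, both establish gyrodistance preservation of $R$ via the Left Gyrotranslation Theorem \ref{thm5d8} together with the norm invariance of gyrations, both convert this into $RX\ccdot RY=X\ccdot Y$ through the gamma identity \eqref{grbsf09p1}, and both prove the converse by composing an orthogonal map with a left gyrotranslation. The one place you go beyond the paper is the reconstruction of $R$ as the restriction of a genuine linear orthogonal matrix via the frame $\gb_i=R(\delta\eb_i)/\delta$: the paper passes directly from inner-product preservation to ``$R$ is orthogonal,'' tacitly treating the a priori nonlinear self-map of the bounded ball as linear, whereas your two-stage argument closes that gap explicitly (and correctly, since $RX\ccdot\gb_i=x_i$ forces $RX=\sum_i x_i\gb_i$ independently of $\delta$). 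This added step is a genuine improvement in rigor rather than a different method, and the rest of your argument, including the observation that $R\in O(n)$ respects Einstein addition by inspection of \eqref{eq01}, matches the paper's reasoning.
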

\begin{proof}
By definition, Item \eqref{uecsf1ein} implies Item \eqref{uecsf2ein} of
the Theorem.

Suppose that $\phi$ preserves the gyrodistance between any two points
of $\Rsn$,
\begin{equation} \label{gmar01}
\|\om\phi A \op \phi B\| = \|\om A \op B\|
\,,
\end{equation}
and let $R:\Rsn\rightarrow\Rsn$
be the map given by
\begin{equation} \label{gmar02}
RX= \om\phi O \op \phi X
\,.
\end{equation}
Then $RO=O$
and, by the left cancellation law \eqref{eq01b},
\begin{equation} \label{hurkef}
\phi X = \phi O \op RX
\,.
\end{equation}

Furthermore,
$R$ also preserves the gyrodistance. Indeed, for all $X,Y\in\Rsn$
we have the following chain of equations, which are numbered for
subsequent explanation:
\begin{equation} \label{eprt05ein}
\begin{split}
\|\om RX \op RY\|
&
\overbrace{=\!\!=\!\!=}^{(1)} \hspace{0.2cm}
\|\om(\om \phi O\op \phi X) \op (\om \phi O\op \phi Y)\|
\\&
\overbrace{=\!\!=\!\!=}^{(2)} \hspace{0.2cm}
\|\gyr[\om\phi O,\phi X](\om\phi X \op \phi Y)\|
\\&
\overbrace{=\!\!=\!\!=}^{(3)} \hspace{0.2cm}
\|\om\phi X \op\phi Y\|
\\&
\overbrace{=\!\!=\!\!=}^{(4)} \hspace{0.2cm}
\|\om X \op Y\|
\,.
\end{split}
\end{equation}
Derivation of the numbered equalities in \eqref{eprt05ein} follows:
\begin{enumerate}
\item \label{ehfd01}
Follows from \eqref{gmar02}.
\item \label{ehfd02}
Follows from (1) by the Left Gyrotranslation Theorem \ref{thm5d8}.
\item \label{ehfd03}
Follows from (2) by the
invariance \eqref{eqwiuj01} of the norm under gyrations.
\item \label{ehfd04}
Follows from (3) by Assumption \eqref{gmar01}.
\end{enumerate}

The map $R$ preserves the norm since, by \eqref{eprt05ein},
\begin{equation} \label{eprt06ein}
\| RX\| = \| \om RO \op RX\| = \|\om O \op X\| = \|X\|
\,.
\end{equation}

Moreover, $R$ preserves the inner product as well. Indeed,
by the gamma identity \eqref{grbsf09p1}, p.~\pageref{grbsf09p1}, in $\Rsn$
and by \eqref{eprt05ein} -- \eqref{eprt06ein},
and noting that $\gamma_{A}^{\phantom{O}} = \gamma_{\|A\|}^{\phantom{O}}$
for all $A\in\Rsn$, we have the following chain of equations,
\begin{equation} \label{frdin}
\begin{split}
\gamma_{_X} \gamma_{_Y} (1-\frac{X\ccdot Y}{s^2}) &=\gamma_{_{\om X\op Y}}
         = \gamma_{_{\om RX\op RY}}
\\[4pt] &= \gamma_{_{RX}} \gamma_{_{RY}} (1-\frac{RX\ccdot RY}{s^2})
\\[4pt] &= \gamma_{_{X}} \gamma_{_{Y}} (1-\frac{RX\ccdot RY}{s^2})
\,,
\end{split}
\end{equation}
implying
\begin{equation} \label{frdim}
RX\ccdot RY = X\ccdot Y
\,,
\end{equation}
as desired, so that $R$ is orthogonal.

Thus, following \eqref{gmar02},
there is an orthogonal $n\times n$ matrix $R$ such that
\begin{equation} \label{eprt08ein}
\phi X = \phi O \op RX
\,,
\end{equation}
and so Item \eqref{uecsf2ein} implies Item \eqref{uecsf3ein} of the Theorem.

If $\phi$ is of the form \eqref{eprt04ein} then $\phi$ is the composite
of an orthogonal transformation followed by a left gyrotranslation,
and so $\phi$ is a gyroisometry.
Thus, Item \eqref{uecsf3ein} implies Item \eqref{uecsf1ein} of the Theorem,
and the proof is complete.
\end{proof}

Following Theorem \ref{thmkfbnein}, it is now clear that
gyroisometries of $\Rsn$ are surjective (onto), the inverse of
gyroisometry $A\op RX$ being
\begin{equation} \label{eprt09ein}
(A\op RX)^{-1} = \om R^tA\op R^tX
\,.
\end{equation}

\index{gyroisometry, unique decomposition}
\begin{ttheorem}\label{thmdecoein}
{\bf (Gyroisometry Unique Decomposition).}
Let $\phi$ be a gyroisometry of $\Rsn$. Then it possesses the decomposition
\begin{equation} \label{eprt10ein}
\phi X = A\op RX
\,,
\end{equation}
where $A\in\Rsn$ and $R\in O(n)$ are unique.
\end{ttheorem}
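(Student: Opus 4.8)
The plan is to mirror the proof of Theorem~\ref{thmdecoeuc}, replacing vector addition by Einstein addition throughout. Existence of a decomposition $\phi X = A\op RX$ is already supplied by the Gyroisometry Characterization Theorem~\ref{thmkfbnein} (item \eqref{uecsf1ein} implies item \eqref{uecsf3ein}), with $A=\phi O$ and $R\in O(n)$, so the substance of the argument is uniqueness.

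First I would suppose that $\phi$ admits two decompositions,
\[
\phi X = A_1\op R_1 X \qquad\text{and}\qquad \phi X = A_2\op R_2 X
\]
for all $X\in\Rsn$, with $A_1,A_2\in\Rsn$ and $R_1,R_2\in O(n)$. Evaluating both at $X=O$, and using that every orthogonal linear map fixes the origin, $R_1 O = R_2 O = O$, gives $A_1\op O = A_2\op O$. Since the gyrogroup identity $O$ is also a right identity, this forces $A_1 = A_2$.

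Next, writing $A:=A_1=A_2$, the identity $A\op R_1 X = A\op R_2 X$ holds for every $X\in\Rsn$. Left gyrotranslating by $\om A$ and invoking the left cancellation law \eqref{eq01b} yields $R_1 X = R_2 X$ for all $X\in\Rsn$, hence $R_1=R_2$ as linear maps of $\Rn$ (equivalently, as $n\times n$ matrices). This proves uniqueness and completes the proof.

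I expect no serious obstacle here; the argument is essentially a transcription of the Euclidean proof with $\op$ in place of $+$ and the left cancellation law \eqref{eq01b} doing the work of ordinary cancellation. The one point that deserves explicit mention is the implication $A_1\op O = A_2\op O \Rightarrow A_1=A_2$, which rests on the fact that the left identity of a gyrogroup is a two-sided identity.
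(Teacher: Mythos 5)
Your proof is correct and follows essentially the same route as the paper's: existence from the Gyroisometry Characterization Theorem, then evaluation at $X=O$ to force $A_1=A_2$, and cancellation to force $R_1=R_2$. You merely make explicit the two steps the paper leaves implicit (that $O$ is a two-sided identity, and that the left cancellation law \eqref{eq01b} justifies cancelling $A$), which is a faithful filling-in rather than a different argument.
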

\begin{proof}
By Theorem \ref{thmkfbnein}, $\phi X$ possesses a decomposition \eqref{eprt10ein}.
Let
\begin{equation} \label{eprt11ein}
\begin{split}
\phi X &= A_1 \op R_1 X
\\[4pt]
\phi X &= A_2 \op R_2 X
\end{split}
\end{equation}
be two decompositions of $\phi X$, for all $X\in \Rsn$. For $X=O$ we have
$R_1O=R_2O=O$, implying $A_1=A_2$. The latter, in turn, implies
$R_1=R_2$, and the proof is complete
\end{proof}

Let $R$ be an orthogonal matrix. As $RR^t=I$, we have that
$({\rm det}R)^2=1$, so that det$R=\pm1$.
If det$R=1$, then $R$ represents a rotation of $\Rsn$ about its origin.
The set of all rotations $R$ in $O(n)$ is a subgroup $\son\subset O(n)$
called the special orthogonal group. Accordingly, $\son$ is the group of
all $n\times n$ orthogonal matrices with determinant 1.

In full analogy with isometries,
the set of all gyroisometries $\phi X=A\op RX$ of $\Rsn$,
$A,X\in\Rsn$, $R\in O(n)$,
forms a group called the gyroisometry group of $\Rsn$.
Accordingly, by analogy with isometries,
\begin{enumerate}
\item
the gyroisometries $\phi X=A \op RX$ of $\Rsn$ with det$R=1$
are called {\it direct gyroisometries}, or {\it motions}, of $\Rsn$; and
\item
the gyroisometries $\phi X=A \op RX$ of $\Rsn$ with det$R=-1$
are called {\it opposite gyroisometries}.
\end{enumerate}

In gyrolanguage, the motions of $\Rsn$ are called {\it gyromotions}.\index{gyromotion}
They form a subgroup of the gyroisometry group of $\Rsn$,
studied in Sect.~\ref{asecmtnein} below.

\section{Gyromotions -- The Motions of Hyperbolic Geometry}\label{asecmtnein}
\index{gyromotion}
\index{motions, hyperbolic}

The group of gyromotions of $\Rsn=(\Rsn,\op,\od)$ is the
direct gyroisometry group of $\Rsn$. It consists of
the gyrocommutative gyrogroup of all left gyrotranslations of $\Rsn$
and the group $\son$ of all rotations of $\Rsn$ about its origin.

A rotation $R$ of $\Rsn$ about its origin is an element of the group $\son$
of all $n\timess n$ orthogonal matrices with determinant 1.
The rotation of $ A \in\Rsn$ by $R\in\son$ is $RA$.
The map $R\in\son$ is
a {\it gyrolinear} map of $\Rsn$ that respects Einstein addition and
keeps the inner product invariant, that is
\begin{equation} \label{dkrn1bein}
\begin{split}
R(A\op B) &= RA\op RB \\
RA\ccdot RB &= A\ccdot B
\end{split}
\end{equation}
for all $A,B\in\Rsn$ and all $R\in\son$, in full analogy with
\eqref{dkrn1b}, p.~\pageref{dkrn1b}.

The group of gyromotions of $\Rsn$ possesses the
{\it gyrosemidirect product group} structure. It is the
gyrosemidirect product group\index{gyrosemidirect product}
\begin{equation} \label{hdekcn0ein}
\Rsn \times \son
\end{equation}
of the Einstein gyrocommutative gyrogroup $\Rsn=(\Rsn,\op)$ and the rotation group
$\son$. More specifically, it is a group
of pairs $(X,R)$, $X\in(\Rsn,\op)$, $R\in\son$, acting gyroisometrically
on $\Rsn$ according to the equation
\begin{equation} \label{hdekcn1ein}
(X,R)A = X\op RA
\end{equation}
for all $A\in\Rsn$.
Each pair $(X,R) \in \Rsn \times \son$, accordingly, represents a
rotation of $\Rsn$ followed by a left gyrotranslation of $\Rsn$.

The group operation of the gyrosemidirect product group \eqref{hdekcn0ein} is given
by action composition.
Accordingly, let $(X_1,R_1)$ and $(X_2,R_2)$ be any two elements of the
gyrosemidirect product group $\Rsn \times \son$. Their successive applications
to $A\in\Rsn$ is equivalent to a single application to $A$, as shown in the
following chain of equations \eqref{hdekcn2ein},
in which we employ the left gyroassociative law
of Einstein addition, $\op$, in $\Rsn$.
\begin{equation} \label{hdekcn2ein}
\begin{split}
( X_1,R_1) ( X_2,R_2) A &= ( X_1,R_1) ( X_2 \op R_2 A)
\\ &=
 X_1 \op R_1 ( X_2 \op R_2 A) 
\\ &=
 X_1 \op (R_1  X_2 \op R_1 R_2 A) 
\\ &=
( X_1 \op R_1  X_2) \op\gyr[X_1,R_1  X_2] R_1 R_2 A  
\\ &=
( X_1 \op R_1  X_2, \gyr[X_1,R_1  X_2] R_1 R_2) A  
\end{split}
\end{equation}
for all $A\in\Rsn$.

It follows from \eqref{hdekcn2ein} that the group operation of the
gyrosemidirect product group \eqref{hdekcn0ein} is given by the
{\it gyrosemidirect product}\index{gyrosemidirect product}
\begin{equation} \label{hdekcn3ein}
(X_1,R_1) (X_2,R_2) = (X_1 \op R_1 X_2, \gyr[X_1,R_1  X_2] R_1 R_2)
\end{equation}
for any $(X_1,R_1), (X_2,R_2) \in \Rsn \times \son$.

Gyrocovariance with respect to gyromotions is formalized in
the following two definitions:

\index{gyrocovariance, def.}
\begin{ddefinition}\label{defhvyxein}
{\bf (Gyrocovariance).}
{\it
A map
\begin{equation} \label{eq2wefein}
T:\, (\Rsn)^k ~~\rightarrow~~ \Rsn
\end{equation}
from $k$ copies of $\Rsn=(\Rsn,\op,\od)$ into $\Rsn$
is gyrocovariant (with respect to the gyromotions of $\Rsn$)
if its image $T(A_1,A_2,\ldots,A_k)$ co-varies (that is,
varies together) with its preimage points $A_1,A_2,\ldots,A_k$ under
the gyromotions of $\Rsn$, that is, if
 \begin{equation} \label{eq2wefb02ein}
 \begin{split}
X\op T(  A_1,\ldots,  A_k) &= T( X \op  A_1,\ldots, X \op  A_k)\\[6pt]
 R   T(  A_1,\ldots,  A_k) &= T( R    A_1,\ldots, R    A_k)
 \end{split}
 \end{equation}
for all $X\in\Rsn$ and all $R\in\son$.
In particular, the first equation in \eqref{eq2wefb02ein} represents
gyrocovariance with respect to (or, under) left gyrotranslations,
and the second equation in \eqref{eq2wefb02ein} represents
gyrocovariance with respect to (or, under) rotations.
}
\end{ddefinition}

\phantom{O}
\index{gyrocovariance in form, def.}
\begin{ddefinition}\label{defdknb}
{\bf (Gyrocovariance in Form).}
{\it
Let
\begin{equation} \label{kadum01}
T_1(  A_1,\ldots,  A_k) = T_2(  A_1,\ldots,  A_k)
\end{equation}
be a gyrovector space identity in an Einstein gyrovector space
$\Rsn=(\Rsn,\op,\od)$, where
\begin{equation} \label{kadum02}
T_i:\, (\Rsn)^k ~~\rightarrow~~ \Rsn
\end{equation}
$i=1,2$, is a map from $k$ copies of $\Rsn$ into $\Rsn$.

The identity is gyrocovariant in form (with respect to the gyromotions of $\Rsn$)
if
 \begin{equation} \label{kadum03}
 \begin{split}
T_1( X \op  A_1,\ldots, X \op  A_k) &= T_2( X \op  A_1,\ldots, X \op  A_k)
\\[6pt]
T_1( R A_1,\ldots, R A_k) &= T_2(R A_1,\ldots, R A_k)
 \end{split}
 \end{equation}
for all $X\in\Rsn$ and all $R\in\son$.
}
\end{ddefinition}

We will see from the
Gyrobarycentric Representation Gyrocovariance Theorem \ref{thmfkvme}, p.~\pageref{thmfkvme},
that hyperbolic barycentric (gyrobarycentric, in gyrolanguage)
coordinate representations of points of $\Rsn$
are gyrocovariant,
Theorem \ref{thmfkvme}, in turn, provides a powerful tool
to determine analytically various properties of hyperbolic geometric objects.

The importance of hyperbolic covariance (gyrocovariance)
under hyperbolic motions (gyromotions) of
hyperbolic geometry (gyrogeometry) lies in
Klein's  Erlangen Program, as remarked below
the Covariance Definition \ref{defhvyxeuc}, p.~\pageref{defhvyxeuc}.

\section{Lorentz Transformation and Einstein Addition}
\label{secc7}
\index{Lorentz transformation}\index{four-velocity}

The Newtonian, classical mass of a particle system suggests the introduction
of barycentric coordinates into Euclidean geometry.
In full analogy, the Einsteinian, relativistic mass of a particle system
suggests the introduction
of barycentric coordinates into hyperbolic geometry as well, where they
are called {\it gyrobarycentric coordinates}.\index{gyrobarycentric coordinates}
The relativistic mass, which is velocity dependent \cite{ungarmass11},
thus meets hyperbolic geometry in the context of
gyrobarycentric coordinates, just as
the classical mass meets Euclidean geometry in the context of
barycentric coordinates.

Interestingly,
unlike classical mass, relativistic mass is velocity dependent.
``Coincidentally'', the velocity dependence of relativistic mass
has precisely the form that gives rise to the requested analogies.
Our mission to capture the requested analogies that lead to the adaptation of
barycentric coordinates for use in hyperbolic geometry
begins with the study of the Lorentz transformation as a
coordinate transformation regulated by Einstein Addition.

The Lorentz transformation
is a linear transformation of spacetime coordinates that
fixes the spacetime origin.
A Lorentz boost, $L(\vb)$, is a Lorentz transformation without rotation,
possessing the matrix representation $L(\vb)$,
parametrized by a velocity parameter $\vb=(v_1,v_2,v_3)\in \Rct$
\cite{moller52},
\begin{equation}\label{lormatrix}
L(\vb) = \begin{pmatrix}
\gvb & \cmt\gvb   v_1 & \cmt\gvb   v_2 & \cmt\gvb   v_3 \\[6pt]
\gvb   v_1 & 1+\cmt\frac{\gvbs}{\gvb+1}   v_1^2 &
\cmt\frac{\gvbs}{\gvb+1}   v_1   v_2            &
\cmt\frac{\gvbs}{\gvb+1}   v_1   v_3                    \\[6pt]
\gvb   v_2 &\cmt\frac{\gvbs}{\gvb+1}   v_1   v_2&
1+\cmt\frac{\gvbs}{\gvb+1}   v_2^2 &
\cmt\frac{\gvbs}{\gvb+1}   v_2   v_3                    \\[6pt]
\gvb   v_3 &\cmt\frac{\gvbs}{\gvb+1}   v_1   v_3&
\cmt\frac{\gvbs}{\gvb+1}   v_2   v_3            &
1+\cmt\frac{\gvbs}{\gvb+1}   v_3^2
\end{pmatrix}
\end{equation}

Employing the matrix representation \eqref{lormatrix} of the
Lorentz transformation boost,
the Lorentz boost application to spacetime coordinates
takes the form
\begin{equation} \label{eqaa02s3}
L(\vb) \begin{pmatrix} t \\ \xb \end{pmatrix}
=
L(\vb)
\begin{pmatrix} t \\[3pt] x_1 \\[3pt] x_2  \\[3pt] x_3  \end{pmatrix}
=:
\begin{pmatrix}
t^\prime \\[3pt] x_1^\prime \\[3pt] x_2^\prime  \\[3pt] x_3^\prime
\end{pmatrix}
=
\begin{pmatrix}
t^\prime \\[3pt] \xb^\prime
\end{pmatrix}
\end{equation}
where $\vb=(v_1,v_2,v_3)^t\in \Rct$,
$\xb=(x_1,x_2,x_3)^t\in \Rt$,
$\xb^\prime=(x_1^\prime,x_2^\prime,x_3^\prime)^t\in \Rt$,
and $t,t^\prime\in \Rb$, where exponent $t$ denotes transposition.

In the Newtonian limit of large vacuum speed of light $c$, $c \rightarrow \infty$,
the Lorentz boost $L(\vb)$, \eqref{lormatrix}\,--\,\eqref{eqaa02s3},
reduces to the Galilei boost
$G(\vb)$, $\vb=(v_1,v_2,v_3)\in \Rt$,
\begin{equation} \label{eqaa03d5}
 \begin{split}
G(\vb) \begin{pmatrix} t \\ \xb \end{pmatrix}
&=
\lim_{c\rightarrow \infty}
L(\vb) \begin{pmatrix} t \\ \xb \end{pmatrix}
\\[12pt] & \hspace{-1.8cm} =
\begin{pmatrix}
1   & 0 & 0 & 0 \\[3pt]
v_1 & 1 & 0 & 0 \\[3pt]
v_2 & 0 & 1 & 0 \\[3pt]
v_3 & 0 & 0 & 1
\end{pmatrix}
\begin{pmatrix} t \\[3pt] x_1 \\[3pt] x_2  \\[3pt] x_3  \end{pmatrix}
           =
\begin{pmatrix}
t \\[3pt] x_1 +v_1 t \\[3pt] x_2 +v_2 t  \\[3pt] x_3 +v_3 t  \end{pmatrix}
           =
\begin{pmatrix} t \\ \xb + \vb t \end{pmatrix}
 \end{split}
\end{equation}
where $\xb=(x_1,x_2,x_3)^t\in \Rt$ and $t\in \Rb$.

The representation of spacetime coordinates as $(t,\xb)^t$
in \eqref{eqaa02s3} is more
advantageous than its representation as $(ct,\xb)^t$.
Indeed, unlike the latter representation,
the former representation of spacetime coordinates allows one to
recover the Galilei boost from the Lorentz boost by taking the Newtonian limit of
large speed of light $c$, as shown in the transition from
\eqref{eqaa02s3} to \eqref{eqaa03d5}.

As a result of adopting $(t,\xb)^t$ rather than $(ct,\xb)^t$ as our four-vector that
represents four-position, our four-velocity is given by
$(\gvb, \gvb\vb)$ rather than $(\gvb c, \gvb\vb)$, $\vb\in \Rct$.
Similarly, our four-momentum is given by
\begin{equation} \label{kyhd01}
\begin{pmatrix} p_0 \\[3pt] \pb  \end{pmatrix}
=
\begin{pmatrix} \displaystyle\frac{E}{c^2} \\[3pt] \pb  \end{pmatrix}
= m
\begin{pmatrix} \gvb \\[3pt] \gvb\vb  \end{pmatrix}
\end{equation}
rather than the standard four-momentum,\index{four-momentum}
which is given by
$(p_0,\pb)^t$ = $(E/c,\pb)^t$ = $(m\gvb c,m\gvb\vb)^t$,
as found in most relativity physics books.
According to \eqref{kyhd01}
the relativistically invariant mass (that is, rest mass) $m$ of a particle
is the ratio of the particle's four-momentum $(p_0,\pb)^t$
to its four-velocity $(\gvb, \gvb\vb)^t$.

For the sake of simplicity, and without loss of generality,
some authors normalize the vacuum speed of light
to $c=1$ as, for instance, in \cite{feynman1}.
We, however, prefer to leave $c$ as a free positive parameter,
enabling related modern results to be reduced to classical ones under the limit
of large $c$, $c\rightarrow\infty$ as, for instance, in the transition from
a Lorentz boost into a corresponding Galilei boost in
\eqref{lormatrix}\,--\,\eqref{eqaa03d5}, and the transition from
Einstein addition \eqref{eq01} into a corresponding vector addition
\eqref{laws11}.

The Lorentz boost \eqref{lormatrix}\,--\,\eqref{eqaa02s3}
can be written vectorially in the form
\begin{equation} \label{kyhd02}
L(\ub)\tonxb =
\begin{pmatrix}
\gub(t+\frac{1}{c^2}\ub\ccdot \xb) \\[6pt]
%
\gub\ub t + \xb + \frac{1}{c^2}\frac{\gubs}{1+\gub}
%
(\ub\ccdot \xb)\ub
\end{pmatrix}
\,.
\end{equation}

Being written in a vector form, the Lorentz boost $L(\ub)$
in \eqref{kyhd02}
survives unimpaired in higher dimensions.
Rewriting \eqref{kyhd02} in higher dimensional spaces,
with $\xb=\vb t$, $\ub,\vb\in \Rcn  \subset\Rn$,
we have
\begin{equation} \label{kyhd03}
\begin{split}
L(\ub)  \begin{pmatrix} t \\ \vb t \end{pmatrix}
&=
\begin{pmatrix}
\gub(t+\frac{1}{c^2}\ub\ccdot \vb t) \\[6pt]
%
\gub\ub t + \vb t + \frac{1}{c^2}\frac{\gubs}{1+\gub}
%
(\ub\ccdot \vb t)\ub
\end{pmatrix}
\\[8pt] &=
\begin{pmatrix} \frac{\gamma_{\ub\op \vb}^{\phantom{O}}}{\gvb} t\\[6pt]
\frac{\gamma_{\ub\op \vb}^{\phantom{O}}}{\gvb}
(\ub\op \vb)t
 \end{pmatrix}
\,.
\end{split}
\end{equation}

Equation \eqref{kyhd03} reveals explicitly the way Einstein velocity addition underlies
the Lorentz boost.
The second equation in \eqref{kyhd03} follows from the first by
Einstein addition formula \eqref{eq01} and
the gamma identity \eqref{grbsf09}, p.~\pageref{grbsf09}.

The special case of $t=\gvb$ in \eqref{kyhd03} proves useful,
giving rise to the elegant identity
\begin{equation}\label{kyhd04}
L(\ub) \begin{pmatrix} \gvb \\ \gvb\vb \end{pmatrix}
=
\begin{pmatrix} \gamma_{\ub\op \vb}^{\phantom{O}} \\[6pt]
\gamma_{\ub\op \vb}^{\phantom{O}} (\ub\op \vb)
 \end{pmatrix}
\end{equation}
of the Lorentz boost of four-velocities, $\ub,\vb\in \Rcn$.
Since in physical applications $n=3$, in the context of $n$-dimensional special relativity
we call $\vb$ a three-vector and $(\gvb,\gvb\vb)^t$ a four-vector, etc.,
even when $n\ne3$.

The four-vector $m(\gvb,\gvb\vb)^t$ is the four-momentum of a particle with
invariant mass (or, rest mass) $m$ and velocity $\vb$ relative to a given
inertial rest frame $\Sigma_\zerb$. Let
$\Sigma_{\om\ub}$ be an inertial frame that moves with velocity $\om\ub=-\ub$ relative
to the rest frame $\Sigma_\zerb$, $\ub,\vb\in \Rcn   $.
Then, a particle with velocity $\vb$ relative to $\Sigma_\zerb$ has velocity $\ub\op\vb$
relative to the frame $\Sigma_{\om\ub}$.
Owing to the linearity of the Lorentz boost, it
follows from \eqref{kyhd04} that the four-momentum of the particle relative to the
frame $\Sigma_{\om\ub}$ is
\begin{equation}\label{kyhd05}
\begin{split}
L(\ub) m \begin{pmatrix} \gvb \\ \gvb\vb \end{pmatrix}
&=
mL(\ub) \begin{pmatrix} \gvb \\ \gvb\vb \end{pmatrix}
\\[9pt]
&= m
\begin{pmatrix} \gamma_{\ub\op \vb}^{\phantom{O}} \\[6pt]
\gamma_{\ub\op \vb}^{\phantom{O}} (\ub\op \vb)
 \end{pmatrix}
\,.
\end{split}
\end{equation}

Similarly,
it follows from the linearity of the Lorentz boost and from \eqref{kyhd04}
that
\begin{equation} \label{rjsk03}
\begin{split}
L(\wb) \sum_{k=1}^{N} m_k \begin{pmatrix}  \gamma_{\vb_k}^{\phantom{O}} \\[6pt]
\gamma_{\vb_k}^{\phantom{O}} \vb_k \end{pmatrix}
&=
\sum_{k=1}^{N} m_k L(\wb) \begin{pmatrix} \gamma_{\vb_k}^{\phantom{O}} \\[6pt]
\gamma_{\vb_k}^{\phantom{O}}  \vb_k \end{pmatrix}
\\[12pt] &=
\sum_{k=1}^{N} m_k \begin{pmatrix} \gamma_{\wb\op\vb_k}^{\phantom{O}} \\[6pt]
\gamma_{\wb\op\vb_k}^{\phantom{O}} (\wb\op \vb_k) \end{pmatrix}
\\[12pt] &=
\begin{pmatrix}
\sum_{k=1}^{N} m_k \gamma_{\wb\op\vb_k}^{\phantom{O}}
\\[6pt]
\sum_{k=1}^{N} m_k \gamma_{\wb\op\vb_k}^{\phantom{O}}
(\wb\op \vb_k)
\end{pmatrix}
\,,
\end{split}
\end{equation}
where $m_k\in\Rb$ and $\wb,\vb_k\in\Rcn$, $k=1,\ldots,N$.

The chain of equations \eqref{rjsk03} reveals
the interplay of Einstein addition, $\op$, in $\Rcn$ and vector addition, +, in $\Rn$ that
appears implicitly in the $\Sigma$-notation for scalar and vector addition.
This harmonious interplay between $\op$ and $+$,
which will prove crucially important in our approach to
hyperbolic barycentric coordinates, reveals itself in \eqref{rjsk03}
where Einstein's three-vector formalism of special relativity,
embodied in Einstein addition $\op$, meets
Minkowski's four-vector formalism of special relativity.

The (Minkowski) norm of a four-vector is Lorentz transformation invariant.
The norm of the four-position $(t,\xb)^t$ is
\begin{equation} \label{bdrch01}
\left\| \begin{pmatrix} t \\  \xb  \end{pmatrix} \right\|
=
\sqrt{t^2-\displaystyle\frac{\|\xb\|^2}{c^2}}
\end{equation}
and, accordingly, the norm of the four-velocity $(\gvb,\gvb\vb)^t$ is
\begin{equation} \label{bdrch02}
\left\| \begin{pmatrix} \gvb \\  \gvb\vb  \end{pmatrix} \right\| 
=
\gvb \left\| \begin{pmatrix} 1 \\  \vb  \end{pmatrix} \right\| 
=
\gvb\sqrt{1-\displaystyle\frac{\|\vb\|^2}{c^2}}
=1
\,.
\end{equation}

\section{Invariant Mass of Particle Systems}
\label{secc8}

The results in \eqref{kyhd05}\,--\,\eqref{rjsk03} follow
from the linearity of Lorentz boosts.
We will now further exploit that linearity to
obtain the relativistically invariant mass of particle systems.
Being observer's invariant, the Newtonian, rest mass, $m$, is
referred to as the (relativistically) invariant mass. In contrast,
the relativistic mass $m\gvb$ is velocity dependent and, hence,
observer's dependent.

Let
\begin{equation} \label{kyhd06}
S = S(m_k,\vb_k,\Sigma_\zerb,k=1,\ldots ,N)
\end{equation}
be an isolated system of
$N$ noninteracting material particles the $k$-th particle of which has
invariant mass $m_k >0$ and velocity $\vb_k \in \Rcn   $
relative to an inertial frame $\Sigma_\zerb$,
$k=1,\dots,N$.

Classically, the Newtonian mass $m_{newton}$ of the system $S$ is
{\it additive} in the sense that it equals the sum of the
Newtonian masses of its constituent particles, that is
\begin{equation} \label{kyhd07}
m_{newton} = \sum_{k=1}^{N} m_k
\,.
\end{equation}

In full analogy, also the relativistic mass of a system is additive,
as we will see in \eqref{hurgh}, p.~\pageref{hurgh}, provided that
the relativistically invariant mass of particle systems is
appropriately determined by Theorem \ref{mainthm1sh}, p.~\pageref{mainthm1sh}.

In order to determine
\begin{enumerate}
\item
the relativistically invariant mass $m_0$ of the system $S$, and
\item
the velocity $\vb_0$ relative to $\Sigma_\zerb$
of a fictitious inertial frame, called the center of momentum frame,
relative to which the three-momentum of $S$ vanishes,
\end{enumerate}
we make the natural assumption that the four-momentum is additive. Then,
the sum of the four-momenta of the $N$ particles of the
system $S$ gives the four-momentum
$(m_0\gamma_{\vb_0}^{\phantom{O}}, m_0\gamma_{\vb_0}^{\phantom{O}} \vb_0)^t$
of $S$, where
(i) $m_0$ is the invariant mass of $S$, and
(ii) $\vb_0$ is
the velocity of the center of momentum of $S$ relative to $\Sigma_\zerb$.
This assumption yields the equation
\index{four-momentum additivity}
\begin{equation} \label{kyhd08}
\sum_{k=1}^{N} m_k
\begin{pmatrix}  \gamma_{\vb_k}^{\phantom{O}} \\[6pt]
\gamma_{\vb_k}^{\phantom{O}} \vb_k
\end{pmatrix}
=
m_0
\begin{pmatrix}  \gamma_{\vb_0}^{\phantom{O}} \\[6pt]
\gamma_{\vb_0}^{\phantom{O}} \vb_0
\end{pmatrix}
\end{equation}
from which $m_0$ and $\vb_0$ are determined.
In \eqref{kyhd08},
\begin{enumerate}
\item
the invariant masses $m_k>0$ and the velocities $\vb_k\in\Rcn$, $k=1,...,N$, relative
to $\Sigma_\zerb$ of the constituent particles of $S$ are given,
while
\item
the invariant mass $m_0$ of $S$ and the velocity $\vb_0$ of the center of momentum frame of $S$
relative to $\Sigma_\zerb$ are to be determined uniquely by
\eqref{kyhd08} in
the {\it Resultant Relativistically Invariant Mass Theorem},
which is Theorem \ref{mainthm1sh} in Sect.~\ref{secc9}.
\end{enumerate}

If $m_0>0$ and $\vb_0\in \Rcn   $ that satisfy \eqref{kyhd08} exist then,
as anticipated, the
three-momentum of the system $S$ relative to its center of momentum frame vanishes since,
by \eqref{kyhd05} and \eqref{kyhd08},
the four-momentum of $S$ relative to its center of momentum frame is
given by
\begin{equation} \label{gjrfdn}
\begin{split}
L(\om\vb_0)
\sum_{k=1}^{N} m_k
\begin{pmatrix}  \gamma_{\vb_k}^{\phantom{O}} \\[6pt]
\gamma_{\vb_k}^{\phantom{O}} \vb_k
\end{pmatrix}
&=
L(\om\vb_0) m_0
\begin{pmatrix}  \gamma_{\vb_0}^{\phantom{O}} \\[6pt]
\gamma_{\vb_0}^{\phantom{O}} \vb_0
\end{pmatrix}
\\[8pt] &=
m_0
\begin{pmatrix}  \gamma_{\om\vb_0 \op \vb_0}^{\phantom{O}} \\[6pt]
\gamma_{\om\vb_0 \op \vb_0}^{\phantom{O}} (\om\vb_0 \op \vb_0)
\end{pmatrix}
 \\[8pt] &=
m_0
\begin{pmatrix} 1 \\[6pt] \zerb \end{pmatrix}
\end{split}
\end{equation}
noting that
$\gamma_{\om\vb_0 \op \vb_0}^{\phantom{O}}=\gamma_{\zerb}^{\phantom{O}}=1$.

\section{Resultant Relativistically Invariant Mass}\label{secc9}
\index{mass, invariant, resultant}

The following five Lemmas \ref{llemtbgfs}\,--\,\ref{dagimi}
lead to the Resultant Relativistically Invariant Mass Theorem
\ref{mainthm1sh}, p.~\pageref{mainthm1sh}.

\begin{llemma}\label{llemtbgfs}
Let $N$ be any positive integer, and
let $m_k\in\Rb$ and $\vb_k\in \Rcn$, $k=1,\ldots,N$,
be $N$ scalars and $N$ points of an Einstein gyrogroup
$\Rcn=(\Rcn,\op)$.
Then\\[8pt]
\begin{equation} \label{sbtg02}
\begin{split}
( \sum_{k=1}^{N} & m_k \gvk \frac{\vb_k}{c} )^2
\\[8pt]
&=
( \sum_{k=1}^{N} m_k \gvk )^2 - \left\{
(\sum_{k=1}^{N} m_k )^2 +
2\sum_{\substack{j,k=1\\j<k}}^N m_j  m_k ( \gamma_{_{\om \vb_j \op \vb_k}}^{\phantom{1}} - 1)
\right\}
\end{split}
\end{equation}
\end{llemma}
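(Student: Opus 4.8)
The plan is to expand the squared norm on the left-hand side of \eqref{sbtg02} as a double sum and to eliminate every inner product $\vb_j\ccdot\vb_k$ in favour of gamma factors, using just two facts from the preceding material: identity \eqref{rugh1ds}, which gives $\gvk^2\|\vb_k\|^2/c^2=\gvk^2-1$, and the gamma identity in the form \eqref{grbsf09p1}, which on solving for the inner product gives $\gvj\gvk\,\vb_j\ccdot\vb_k/c^2=\gvj\gvk-\gamma_{_{\om\vb_j\op\vb_k}}^{\phantom{1}}$.

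First I would write
\[
\Bigl(\sum_{k=1}^N m_k\gvk\frac{\vb_k}{c}\Bigr)^2
= \sum_{k=1}^N m_k^2\,\gvk^2\,\frac{\|\vb_k\|^2}{c^2}
+ 2\sumless m_j m_k\,\gvj\gvk\,\frac{\vb_j\ccdot\vb_k}{c^2},
\]
separating the diagonal $j=k$ contribution from the off-diagonal part. Applying \eqref{rugh1ds} to the diagonal sum turns it into $\sum_k m_k^2\gvk^2-\sum_k m_k^2$, and applying \eqref{grbsf09p1} to the off-diagonal sum turns it into $2\sumless m_j m_k\gvj\gvk-2\sumless m_j m_k\gamma_{_{\om\vb_j\op\vb_k}}^{\phantom{1}}$. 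The $\gvk^2$ and $\gvj\gvk$ pieces now reassemble into the perfect square $(\sum_k m_k\gvk)^2$, so the left-hand side equals
\[
\Bigl(\sum_{k=1}^N m_k\gvk\Bigr)^2 - \sum_{k=1}^N m_k^2
- 2\sumless m_j m_k\,\gamma_{_{\om\vb_j\op\vb_k}}^{\phantom{1}}.
\]

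It then remains only to verify that the last two terms coincide with $-\bigl\{(\sum_k m_k)^2+2\sumless m_j m_k(\gamma_{_{\om\vb_j\op\vb_k}}^{\phantom{1}}-1)\bigr\}$, which is immediate after expanding $(\sum_k m_k)^2=\sum_k m_k^2+2\sumless m_j m_k$: the $2\sumless m_j m_k$ coming from that expansion cancels the $+2\sumless m_j m_k$ produced by the $-1$ inside the braces, leaving exactly $-\sum_k m_k^2-2\sumless m_j m_k\gamma_{_{\om\vb_j\op\vb_k}}^{\phantom{1}}$, as required.

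There is no genuine obstacle here; the argument is pure algebraic bookkeeping. The only place that calls for a little care is keeping the diagonal/off-diagonal split consistent, so that both ``completions of a square'' — the explicit one built from the $\gvk$'s and the hidden one built from the masses $m_k$ — are carried out with the correct combinatorial factors of $2$.
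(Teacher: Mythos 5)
Your proof is correct and follows essentially the same route as the paper's: expand the square into diagonal and off-diagonal parts, convert $\|\vb_k\|^2/c^2$ and $\vb_j\ccdot\vb_k/c^2$ into gamma factors via \eqref{rugh1ds} and the gamma identity (the paper cites the equivalent rearrangement \eqref{rugh2ds}), and then complete the two squares $(\sum_k m_k\gvk)^2$ and $(\sum_k m_k)^2$. The bookkeeping of the factors of $2$ is handled exactly as in the paper's chain of equations \eqref{sbtg03}.
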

\begin{proof}
The proof is given by the following  chain of equations, which are numbered for
subsequent explanation:
\begin{equation} \label{sbtg03}
\begin{split}
( \sum_{k=1}^{N} m_k \gvk \frac{\vb_k}{c} )^2
&
\overbrace{=\!\!=\!\!=}^{(1)} \hspace{0.2cm}
\sum_{k=1}^{N} m_k^2 \gamma_{\vb_k}^2 \frac{\vb_k^2} {c^2} +
2\sum_{\substack{j,k=1\\j<k}}^N m_j  m_k
\gvj\gvk \frac{\vb_j \ccdot \vb_k} {c^2}
\\[8pt]&
\hspace{-2.2cm}
\overbrace{=\!\!=\!\!=}^{(2)} \hspace{0.2cm}
\sum_{k=1}^{N} m_k^2 (\gamma_{\vb_k}^2 - 1) +
2\sum_{\substack{j,k=1\\j<k}}^N m_j  m_k
(\gvj\gvk - \gamma_{_{\om \vb_j \op \vb_k}}^{\phantom{1}} )
\\[8pt]&
\hspace{-2.2cm}
\overbrace{=\!\!=\!\!=}^{(3)} \hspace{0.2cm}
\sum_{k=1}^{N} m_k^2 \gamma_{\vb_k}^2 - \sum_{k=1}^{N} m_k^2 +
2\sum_{\substack{j,k=1\\j<k}}^N m_j  m_k \gvj\gvk -
2\sum_{\substack{j,k=1\\j<k}}^N m_j  m_k
\gamma_{_{\om \vb_j \op \vb_k}}^{\phantom{1}}
\\[8pt]&
\hspace{-2.2cm}
\overbrace{=\!\!=\!\!=}^{(4)} \hspace{0.2cm}
( \sum_{k=1}^{N} m_k \gvk )^2 - \left\{
\sum_{k=1}^{N} m_k^2 +
2\sum_{\substack{j,k=1\\j<k}}^N m_j  m_k
\gamma_{_{\om \vb_j \op \vb_k}}^{\phantom{1}} \right\}
\\[8pt]&
\hspace{-2.2cm}
\overbrace{=\!\!=\!\!=}^{(5)} \hspace{0.2cm}
( \sum_{k=1}^{N} m_k \gvk )^2 - \left\{
(\sum_{k=1}^{N} m_k )^2 +
2\sum_{\substack{j,k=1\\j<k}}^N m_j  m_k ( \gamma_{_{\om \vb_j \op \vb_k}}^{\phantom{1}} - 1)
\right\}
\end{split}
\end{equation}
The assumption $\vb_k\in \Rcn$ implies
that all gamma factors in \eqref{sbtg02}\,--\,\eqref{sbtg03} are real and greater than 1.
Derivation of the numbered equalities in \eqref{sbtg03} follows:
\begin{enumerate}
\item \label{ehdfd01}
This equation is obtained by an expansion of the square of a sum of vectors in $\Rn$.
\item \label{ehdfd02}
Follows from \eqref{ehdfd01} by \eqref{rugh1ds}\,--\,\eqref{rugh2ds}, p.~\pageref{rugh2ds}.
\item \label{ehdfd03}
Follows from \eqref{ehdfd02} by an obvious expansion.
\item \label{ehdfd04}
Follows from \eqref{ehdfd03} by an expansion of the square of a sum of real numbers.
\item \label{ehdfd05}
Follows from \eqref{ehdfd04} by an expansion of another square of a sum of real numbers.
\end{enumerate}
\end{proof}

\begin{llemma}\label{mlemtbgfsh3}
Let $(\Rcn ,\op)$ be an Einstein gyrogroup, and let
$m_k\in\Rb$ and $\vb_k\in\Rcn$, $k=1,2,\ldots,N$,
be $N$ scalars and $N$ elements of $\Rcn$,
such that
\begin{equation} \label{fastdm1}
\sum_{k=1}^{N} m_k \gamma_{\vb_k}^{\phantom{O}} ~\ne~0
\,.
\end{equation}
If the $(n+1)$-vector equation
\begin{equation} \label{hugh01sh3}
\sum_{k=1}^{N} m_k
\begin{pmatrix}  \gamma_{\vb_k}^{\phantom{O}} \\[6pt]
\gamma_{\vb_k}^{\phantom{O}} \vb_k
\end{pmatrix}
=
m_0
\begin{pmatrix}  \gamma_{\vb_0}^{\phantom{O}} \\[6pt]
\gamma_{\vb_0}^{\phantom{O}} \vb_0
\end{pmatrix}
\end{equation}
for the unknowns $m_0\in\Rb$ and $\vb_0\in\Rn$
possesses a solution, then $m_0$ is given by the equation
\begin{equation} \label{hugh05spsh3}
m_0 \phantom{i} = \phantom{i} \sqrt{
\left( \sum_{k=1}^{N} m_k \right)^2 +
2\sum_{\substack{j,k=1\\j<k}}^N m_j  m_k
(\gamma_{\om\vb_j \op \vb_k}^{\phantom{O}} -1)}
\end{equation}
where the sign of $m_0$ equals the sign of the left-hand side of
\eqref{fastdm1}.
\end{llemma}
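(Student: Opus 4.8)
The plan is to extract two scalar consequences from the $(n+1)$-vector equation \eqref{hugh01sh3} and feed them into Lemma~\ref{llemtbgfs}. Writing \eqref{hugh01sh3} in block form, its top (temporal) component reads $\sum_{k=1}^{N} m_k\gamma_{\vb_k}^{\phantom{O}} = m_0\gamma_{\vb_0}^{\phantom{O}}$, while its bottom $n$ (spatial) components read $\sum_{k=1}^{N} m_k\gamma_{\vb_k}^{\phantom{O}}\vb_k = m_0\gamma_{\vb_0}^{\phantom{O}}\vb_0$. Dividing the spatial identity by $c$, squaring it as a vector identity in $\Rn$, and applying the elementary identity \eqref{rugh1ds} to $\vb_0$, I obtain
\[
\Bigl( \sum_{k=1}^{N} m_k\gamma_{\vb_k}^{\phantom{O}}\frac{\vb_k}{c} \Bigr)^2 = m_0^2\,\gamma_{\vb_0}^2\,\frac{\vb_0^2}{c^2} = m_0^2\bigl(\gamma_{\vb_0}^2-1\bigr).
\]

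Next I would rewrite the left-hand side of this display using Lemma~\ref{llemtbgfs}, namely as $\bigl(\sum_{k=1}^{N}m_k\gamma_{\vb_k}^{\phantom{O}}\bigr)^2 - \bigl\{(\sum_{k=1}^{N}m_k)^2 + 2\sum_{j<k}m_jm_k(\gamma_{\om\vb_j\op\vb_k}^{\phantom{O}}-1)\bigr\}$, and then substitute $\bigl(\sum_{k=1}^{N}m_k\gamma_{\vb_k}^{\phantom{O}}\bigr)^2 = m_0^2\gamma_{\vb_0}^2$ coming from the temporal component. The two occurrences of $m_0^2\gamma_{\vb_0}^2$ cancel, and what survives is precisely $m_0^2 = (\sum_{k=1}^{N}m_k)^2 + 2\sum_{j<k}m_jm_k(\gamma_{\om\vb_j\op\vb_k}^{\phantom{O}}-1)$, i.e.\ formula \eqref{hugh05spsh3} up to the choice of square root.

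To pin down the sign I would argue as follows. Hypothesis \eqref{fastdm1} says $\sum_{k=1}^{N}m_k\gamma_{\vb_k}^{\phantom{O}}$ is a nonzero real number, so the temporal equation forces $m_0\gamma_{\vb_0}^{\phantom{O}}$ to be real and nonzero; since $m_0\in\Rb$, the factor $\gamma_{\vb_0}^{\phantom{O}}$ is then real and nonzero, which by \eqref{v72gs} can happen only when $\|\vb_0\|<c$, whence $\gamma_{\vb_0}^{\phantom{O}}>0$. Therefore $m_0 = \bigl(\sum_{k=1}^{N}m_k\gamma_{\vb_k}^{\phantom{O}}\bigr)/\gamma_{\vb_0}^{\phantom{O}}$ has the same sign as $\sum_{k=1}^{N}m_k\gamma_{\vb_k}^{\phantom{O}}$, the left-hand side of \eqref{fastdm1}, and selecting that branch of the square root yields \eqref{hugh05spsh3}. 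I do not anticipate a genuine obstacle: the hard combinatorial identity has already been isolated as Lemma~\ref{llemtbgfs}, and what is left is one use of \eqref{rugh1ds} together with the sign bookkeeping above; the only subtlety worth making explicit is that a \emph{real} solution of \eqref{hugh01sh3} is automatically forced to have $\vb_0$ inside the ball $\Rcn$, so that $\gamma_{\vb_0}^{\phantom{O}}$ is positive.
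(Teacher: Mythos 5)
Your proof is correct and follows essentially the same route as the paper: the paper equates the Minkowski norms of the two sides of \eqref{hugh01sh3} via \eqref{bdrch01}--\eqref{bdrch02} and then invokes Lemma~\ref{llemtbgfs}, which is exactly your combination $(\sum_k m_k\gamma_{\vb_k}^{\phantom{O}})^2-(\sum_k m_k\gamma_{\vb_k}^{\phantom{O}}\vb_k/c)^2=m_0^2$ obtained from the temporal and spatial components together with \eqref{rugh1ds}. Your closing paragraph pinning down the sign of $m_0$ (via $\gamma_{\vb_0}^{\phantom{O}}>0$ forced by reality of a solution) is in fact more explicit than the paper, which asserts the sign claim without argument in this lemma.
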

\begin{proof}
The norms of the two sides of
\eqref{hugh01sh3} are equal while, by \eqref{bdrch02}, the norm of the
right-hand side of \eqref{hugh01sh3} is $m_0$.
Hence, the norm of the left-hand side of \eqref{hugh01sh3} equals $m_0$
as well,
obtaining the following  chain of equations, which are numbered for
subsequent explanation:
\begin{equation} \label{sbtg01}
\begin{split}
m_0^2 &
\overbrace{=\!\!=\!\!=}^{(1)} \hspace{0.2cm}
\left\|
\sum_{k=1}^{N} m_k \begin{pmatrix}  \gvk \\[6pt] \gvk \vb_k \end{pmatrix}
\right\|^2
\\[8pt]&
\overbrace{=\!\!=\!\!=}^{(2)} \hspace{0.2cm}
\left\|
\begin{pmatrix}
\sum_{k=1}^{N} m_k \gvk
\\[6pt]
\sum_{k=1}^{N} m_k \gvk \vb_k
\end{pmatrix}
\right\|^2
\\[8pt]&
\overbrace{=\!\!=\!\!=}^{(3)} \hspace{0.2cm}
( \sum_{k=1}^{N} m_k \gvk )^2 - ( \sum_{k=1}^{N} m_k \gvk \frac{\vb_k}{c} )^2
\\[8pt]&
\overbrace{=\!\!=\!\!=}^{(4)} \hspace{0.2cm}
(\sum_{k=1}^{N} m_k )^2 +
2\sum_{\substack{j,k=1\\j<k}}^N m_j  m_k (\gamma_{_{\om \vb_j \op \vb_k}}^{\phantom{1}}-1)
\end{split}
\end{equation}

Derivation of the numbered equalities in \eqref{sbtg01} follows:
\begin{enumerate}
\item \label{hgdny01}
This equation follows from the result that the norm of the
left-hand side of \eqref{hugh01sh3} equals the norm of the
right-hand side of \eqref{hugh01sh3},
the latter being $m_0$ by \eqref{bdrch02}.
\item \label{hgdny02}
Follows from Item \eqref{hgdny01} by the common
``four-vector'' addition of $(n+1)$-vectors (where $n=3$ in
physical applications).
\item \label{hgdny03}
Follows from Item \eqref{hgdny02} by \eqref{bdrch01}.
\item \label{hgdny04}
Follows from Item \eqref{hgdny03} by Identity \eqref{sbtg02} of
Lemma \ref{llemtbgfs}.
\end{enumerate}
\end{proof}

\begin{llemma}\label{mlfdsh3}
Let $(\Rcn  ,\op)$ be an Einstein gyrogroup,
let $\vb_k\inn\Rcn$ be $N$ elements of the gyrogroup, and let
$m_k\inn\Rb$, $k=1,2,\ldots,N$, be $N$ scalars,
such that
\begin{equation} \label{fastdn1}
\sum_{k=1}^{N} m_k \gamma_{\vb_k}^{\phantom{O}} ~\ne~0
\end{equation}
Furthermore, let
\begin{equation} \label{bugh01sh}
\sum_{k=1}^{N} m_k
\begin{pmatrix}  \gamma_{\vb_k}^{\phantom{O}} \\[6pt]
\gamma_{\vb_k}^{\phantom{O}} \vb_k
\end{pmatrix}
=
\begin{pmatrix} r \\[6pt] r\vb_0 \end{pmatrix}
\end{equation}
be an $(n+1)$-vector equation
for the two unknowns $r\inn\Rb$ and $\vb_0\inn\Rn$.

Then \eqref{bugh01sh} possesses a unique solution $(r,\vb_0)$, where
\begin{equation} \label{bushi01}
\begin{split}
r &= \sum_{k=1}^{N} m_k \gamma_{\vb_k}^{\phantom{O}}
\\[4pt]
\vb_0 &= \frac{
\sum_{k=1}^{N} m_k \gamma_{\vb_k}^{\phantom{O}} \vb_k
}{
\sum_{k=1}^{N} m_k \gamma_{\vb_k}^{\phantom{O}}
}
\end{split}
\end{equation}
\end{llemma}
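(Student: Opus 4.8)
The plan is to read off the two unknowns one at a time directly from the $(n+1)$-vector equation \eqref{bugh01sh}, exploiting the block structure of the four-momentum vectors appearing there. First I would compare the first (scalar, ``time-like'') components of the two sides of \eqref{bugh01sh}: the left-hand side contributes $\sum_{k=1}^{N} m_k \gamma_{\vb_k}^{\phantom{O}}$ while the right-hand side contributes $r$. Hence $r = \sum_{k=1}^{N} m_k \gamma_{\vb_k}^{\phantom{O}}$, which is precisely the first equation in \eqref{bushi01}, and by hypothesis \eqref{fastdn1} this value of $r$ is nonzero.

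Next I would compare the remaining $n$ (``space-like'') components: the left-hand side contributes $\sum_{k=1}^{N} m_k \gamma_{\vb_k}^{\phantom{O}} \vb_k \in \Rn$ while the right-hand side contributes $r\vb_0$. Substituting the value of $r$ just found and using that it is nonzero, I would divide both sides by $r$ to obtain $\vb_0 = \bigl(\sum_{k=1}^{N} m_k \gamma_{\vb_k}^{\phantom{O}} \vb_k\bigr) / \bigl(\sum_{k=1}^{N} m_k \gamma_{\vb_k}^{\phantom{O}}\bigr)$, which is the second equation in \eqref{bushi01}. Since at each step the value of the unknown was forced by the equation, the solution $(r,\vb_0)$ is unique; conversely, substituting \eqref{bushi01} back into \eqref{bugh01sh} verifies that it is indeed a solution, so existence holds as well.

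There is essentially no obstacle here; the one thing to watch is the nonvanishing of $r$, which is exactly why hypothesis \eqref{fastdn1} is imposed. Without it, \eqref{bugh01sh} would force $\sum_{k=1}^{N} m_k \gamma_{\vb_k}^{\phantom{O}}\vb_k = \zerb$ as a constraint and leave $\vb_0$ completely undetermined. I would also note, by way of contrast with the companion Lemma \ref{mlemtbgfsh3}, that there one additionally demands $\vb_0\in\Rcn$ and must check a reality condition on $m_0$, whereas here $\vb_0$ is permitted to range over all of $\Rn$, so the argument is purely a matter of equating components and dividing by a nonzero scalar, requiring no admissibility check.
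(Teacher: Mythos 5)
Your proof is correct and is exactly the argument the paper intends: the paper's own proof consists of the single sentence ``The proof is immediate,'' and your component-by-component reading of \eqref{bugh01sh}, followed by division by the nonzero scalar $r$ guaranteed by \eqref{fastdn1}, is the standard way to fill that in. No gaps.
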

\begin{proof}
The proof is immediate.
\end{proof}

Taking $r=m_0\gvbzer$, Lemma \ref{mlfdsh3} gives rise to the
following Lemma \ref{pmlfdsh3}.

\begin{llemma}\label{pmlfdsh3}
Let $(\Rcn,\op)$ be an Einstein gyrogroup,
let $\vb_k\inn\Rcn$ be $N$ elements of the gyrogroup,
and let $m_k\inn\Rb$, $k=1,2,\ldots,N$, be $N$ scalars such that
\begin{equation} \label{fastdn2}
\sum_{k=1}^{N} m_k \gamma_{\vb_k}^{\phantom{O}} ~\ne~0
\end{equation}
Furthermore, let
\begin{equation} \label{pbugh01sh}
\sum_{k=1}^{N} m_k
\begin{pmatrix}  \gamma_{\vb_k}^{\phantom{O}} \\[6pt]
\gamma_{\vb_k}^{\phantom{O}} \vb_k
\end{pmatrix}
=
m_0
\begin{pmatrix}  \gamma_{\vb_0}^{\phantom{O}} \\[6pt]
\gamma_{\vb_0}^{\phantom{O}} \vb_0
\end{pmatrix}
\end{equation}
be an $(n+1)$-vector equation
for the two unknowns $m_0\in\Rb$ and $\vb_0\in\Rn$.

Then \eqref{pbugh01sh} possesses a
unique solution $(m_0\gvbzer,\vb_0)$, where
\begin{equation} \label{pbushi01}
\begin{split}
\gamma_{\vb_0}^{\phantom{O}}
&= \frac{\sum_{k=1}^{N} m_k \gamma_{\vb_k}^{\phantom{O}}}{m_0}
\\[4pt]
\vb_0 &= \frac{
\sum_{k=1}^{N} m_k \gamma_{\vb_k}^{\phantom{O}} \vb_k
}{
\sum_{k=1}^{N} m_k \gamma_{\vb_k}^{\phantom{O}}
}
\end{split}
\end{equation}
where
\begin{equation} \label{rbushi02}
m_0 \phantom{i} = \phantom{i} \sqrt{
\left( \sum_{k=1}^{N} m_k \right)^2 +
2\sum_{\substack{j,k=1\\j<k}}^N m_j  m_k
(\gamma_{\om\vb_j\op\vb_k}^{\phantom{O}} -1)
}
\end{equation}
\begin{enumerate}
\item \label{kdmr01}
If $m_0^2>0$ then $m_0\ne0$ is real and
\begin{equation} \label{cushi01}
\gvbzer = \frac{
 \sum_{k=1}^{N} m_k \gamma_{\vb_k}^{\phantom{O}} }{ m_0 }
\end{equation}
is real, so that $\vb_0$ lies inside the ball $\Rcn$,
$\vb_0\in\Rcn \subset \Rcn$.
\item \label{kdmr02}
If $m_0^2<0$ then $m_0$ is purely imaginary. Hence,
\begin{equation} \label{cushi02}
\gvbzer = \frac{
 \sum_{k=1}^{N} m_k \gamma_{\vb_k}^{\phantom{O}} }{ m_0 }
\end{equation}
is purely imaginary, so that $\vb_0\in\Rn$ lies outside the
closure of the ball $\Rcn$.
\item \label{kdmr03}
If $m_0^2=0$ then $m_0=0$, while $m_0\gvbzer\ne0$. Hence,
\begin{equation} \label{cushi03}
\vb_0 = \frac{
\sum_{k=1}^{N} m_k \gamma_{\vb_k}^{\phantom{O}} \vb_k
}{
\sum_{k=1}^{N} m_k \gamma_{\vb_k}^{\phantom{O}}
}
\end{equation}
and $\gvbzer=\infty$, so that $\vb_0$ lies on the boundary of the ball $\Rcn$,
$\vb_0\in\partial\Rcn$.
\end{enumerate}
\end{llemma}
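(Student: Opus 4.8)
The plan is to derive the statement from the two preceding lemmas, reserving the sign and branch bookkeeping for the end. First I would set $r := m_0\gamma_{\vb_0}^{\phantom{O}}$, so that the right-hand side of \eqref{pbugh01sh} becomes $(r,\, r\vb_0)^t$ --- precisely the form of equation \eqref{bugh01sh} in Lemma \ref{mlfdsh3}. That lemma then says \eqref{pbugh01sh} has the unique solution in which $\vb_0$ is the weighted average displayed in \eqref{pbushi01} and
\[
m_0\gamma_{\vb_0}^{\phantom{O}} \;=\; r \;=\; \sum_{k=1}^{N} m_k\gamma_{\vb_k}^{\phantom{O}} ,
\]
which is the first line of \eqref{pbushi01} after dividing by $m_0$. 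Uniqueness of the pair $(m_0\gamma_{\vb_0}^{\phantom{O}},\vb_0)$ is inherited verbatim from Lemma \ref{mlfdsh3}; I would stress that it is this product, not $m_0$ separately, that the linear equation pins down.

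Next I would extract $m_0$ itself. Equation \eqref{pbugh01sh} is literally equation \eqref{hugh01sh3}, and we have just exhibited a solution, so Lemma \ref{mlemtbgfsh3} applies and yields \eqref{rbushi02} together with the rule $\mathrm{sign}(m_0)=\mathrm{sign}\!\big(\sum_k m_k\gamma_{\vb_k}^{\phantom{O}}\big)$; equivalently $m_0^2$ equals the quantity under the radical, which Lemma \ref{llemtbgfs} identifies with $\big(\sum_k m_k\big)^2+2\sum_{j<k}m_jm_k(\gamma_{\om\vb_j\op\vb_k}^{\phantom{O}}-1)$. At this stage all the displayed formulas \eqref{pbushi01}--\eqref{rbushi02} are in place.

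It remains to read off the trichotomy, and here I would use the identity $m_0\gamma_{\vb_0}^{\phantom{O}}=\sum_k m_k\gamma_{\vb_k}^{\phantom{O}}$ --- whose right side is a fixed nonzero real by the standing hypothesis \eqref{fastdn2} --- together with the elementary fact that $\gamma_{\vb_0}^2 = 1/(1-\|\vb_0\|^2/c^2)$ is a positive real, a negative real, or $+\infty$ according as $\|\vb_0\|<c$, $\|\vb_0\|>c$, or $\|\vb_0\|=c$. If $m_0^2>0$ then $m_0$ is a nonzero real, hence $\gamma_{\vb_0}^{\phantom{O}}=\big(\sum_k m_k\gamma_{\vb_k}^{\phantom{O}}\big)/m_0$ is real, forcing $\gamma_{\vb_0}^2>0$ and $\vb_0\in\Rcn$. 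If $m_0^2<0$ then $m_0$ is purely imaginary, so $\gamma_{\vb_0}^{\phantom{O}}$ is purely imaginary, $\gamma_{\vb_0}^2<0$, and $\|\vb_0\|>c$, placing $\vb_0$ outside the closure of $\Rcn$. If $m_0^2=0$ then $m_0=0$, and since $m_0\gamma_{\vb_0}^{\phantom{O}}=\sum_k m_k\gamma_{\vb_k}^{\phantom{O}}\ne0$ this forces $\gamma_{\vb_0}^{\phantom{O}}=\infty$, i.e. $\|\vb_0\|=c$ and $\vb_0\in\partial\Rcn$, with $\vb_0$ still given by the $m_0$-free formula \eqref{cushi03}.

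The only delicate point --- more housekeeping than obstacle --- is the handling of square-root branches: $m_0$, and through it $\gamma_{\vb_0}^{\phantom{O}}$, may be real, purely imaginary, or infinite, so one invokes the Einstein addition domain extension (Remark \ref{ogfnd}, p.~\pageref{ogfnd}) to make $\gamma_{\vb_0}^{\phantom{O}}$ meaningful when $\vb_0\notin\Rcn$, and keeps the sign convention of Lemma \ref{mlemtbgfsh3} in step with the normalization $\gamma_{\vb_0}^{\phantom{O}}\ge1$ in the physical case. Once that convention is fixed, Items \ref{kdmr01}--\ref{kdmr03} follow mechanically from the three displayed formulas.
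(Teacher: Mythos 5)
Your proposal is correct and follows essentially the same route as the paper: the paper's proof likewise obtains \eqref{pbushi01} from Lemma \ref{mlfdsh3} with $r=m_0\gamma_{\vb_0}^{\phantom{O}}$, obtains \eqref{rbushi02} from Lemma \ref{mlemtbgfsh3}, and declares the trichotomy in Items \ref{kdmr01}--\ref{kdmr03} immediate. Your write-up merely makes explicit the bookkeeping (that the linear system determines the product $m_0\gamma_{\vb_0}^{\phantom{O}}$ rather than $m_0$ alone, and the branch analysis of $\gamma_{\vb_0}^{\phantom{O}}$) that the paper leaves to the reader.
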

\begin{proof}
Equation \eqref{rbushi02} is established in Lemma \ref{mlemtbgfsh3},
and \eqref{pbushi01} is established in Lemma \ref{mlfdsh3} with
$r=m_0\gvbzer$.
The proof of \eqref{cushi01}\,--\,\eqref{cushi03}
in Items \eqref{kdmr01}\,--\,\eqref{kdmr03} is immediate.
\end{proof}

\begin{llemma}\label{dagimi}
Let $(\Rcn,\op)$ be an Einstein gyrogroup,
let $\vb_k\inn\Rcn$ be $N$ elements of the gyrogroup,
and let $m_k\inn\Rb$, $k=1,2,\ldots,N$, be $N$ scalars.

Let us assume that
the vector $\vb_0\inn\Rn$ satisfies the $(n+1)$-vector equation
\begin{equation} \label{dagi1}
\sum_{k=1}^{N} m_k
\begin{pmatrix}  \gamma_{\vb_k}^{\phantom{O}} \\[6pt]
\gamma_{\vb_k}^{\phantom{O}} \vb_k
\end{pmatrix}
=
m_0
\begin{pmatrix}  \gamma_{\vb_0}^{\phantom{O}} \\[6pt]
\gamma_{\vb_0}^{\phantom{O}} \vb_0
\end{pmatrix}
\end{equation}
together with the condition
\begin{equation} \label{dagi2}
\sum_{k=1}^{N} m_k \gamma_{\vb_k}^{\phantom{O}} ~\ne~0
\,.
\end{equation}
Then, for any $\wb\inn\Rcn$, $\vb_0$ satisfies
the $(n+1)$-vector equation
\begin{equation} \label{dagi3}
\sum_{k=1}^{N} m_k
\begin{pmatrix}  \gamma_{\wb\op\vb_k}^{\phantom{O}} \\[6pt]
\gamma_{\wb\op\vb_k}^{\phantom{O}} (\wb\op\vb_k)
\end{pmatrix}
=
m_0
\begin{pmatrix}  \gamma_{\wb\op\vb_0}^{\phantom{O}} \\[6pt]
\gamma_{\wb\op\vb_0}^{\phantom{O}} (\wb\op\vb_0)
\end{pmatrix}
\end{equation}
together with the condition
\begin{equation} \label{dagi4}
\sum_{k=1}^{N} m_k \gamma_{\wb\op\vb_k}^{\phantom{O}} ~\ne~0
\end{equation}
where
\begin{equation} \label{dagi5}
\begin{split}
m_0 \phantom{i} &= \phantom{i} \sqrt{
\left( \sum_{k=1}^{N} m_k \right)^2 +
2\sum_{\substack{j,k=1\\j<k}}^N m_j  m_k
(\gamma_{\om\vb_j\op\vb_k}^{\phantom{O}} -1)
}
\\[12pt]
&= \phantom{i} \sqrt{
\left( \sum_{k=1}^{N} m_k \right)^2 +
2\sum_{\substack{j,k=1\\j<k}}^N m_j  m_k
(\gamma_{\om(\wb\op\vb_j)\op(\wb\op\vb_k)}^{\phantom{O}} -1)
}
\end{split}
\end{equation}
\end{llemma}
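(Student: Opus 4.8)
The plan is to derive the $(n+1)$-vector equation \eqref{dagi3} from \eqref{dagi1} by applying the Lorentz boost $L(\wb)$, $\wb\inn\Rcn$, to both sides of \eqref{dagi1}, exploiting the linearity of $L(\wb)$ together with the identities \eqref{kyhd03}--\eqref{kyhd04} that encode how Einstein addition underlies the boost. Since $L(\wb)$ is a fixed invertible linear map of $(n+1)$-vectors, applying it to \eqref{dagi1} automatically produces a valid $(n+1)$-vector equation, and the work lies only in identifying its two sides. First I would compare first components in \eqref{dagi1} to record that $r:=\sum_{k=1}^N m_k\gamma_{\vb_k}^{\phantom{O}}=m_0\gamma_{\vb_0}^{\phantom{O}}$, which is nonzero by \eqref{dagi2}; in particular the right-hand side of \eqref{dagi1} is the finite $(n+1)$-vector $r\,(1,\vb_0)^{t}$, finite even in the degenerate case $m_0=0$ of Lemma \ref{pmlfdsh3}, where $\gamma_{\vb_0}^{\phantom{O}}=\infty$ but $r$ is finite.

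Carrying this out, by linearity and identity \eqref{kyhd04} the left-hand side of \eqref{dagi1} is mapped by $L(\wb)$ to $\sum_{k=1}^N m_k(\gamma_{\wb\op\vb_k}^{\phantom{O}},\gamma_{\wb\op\vb_k}^{\phantom{O}}(\wb\op\vb_k))^{t}$ --- this is precisely the computation \eqref{rjsk03} --- which is the left-hand side of \eqref{dagi3}. The right-hand side $r\,(1,\vb_0)^{t}$ is mapped by $L(\wb)$, via \eqref{kyhd03} specialized to the parameter value $t=1$, to $(r\,\gamma_{\wb\op\vb_0}^{\phantom{O}}/\gamma_{\vb_0}^{\phantom{O}})\,(1,\wb\op\vb_0)^{t}$; since $r/\gamma_{\vb_0}^{\phantom{O}}=m_0$, this equals $m_0\,(\gamma_{\wb\op\vb_0}^{\phantom{O}},\gamma_{\wb\op\vb_0}^{\phantom{O}}(\wb\op\vb_0))^{t}$, the right-hand side of \eqref{dagi3}. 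The one step requiring care --- and the main obstacle --- is that $\vb_0$ need not lie in $\Rcn$: by Lemma \ref{pmlfdsh3} it may lie outside the closed ball or on its boundary, so one cannot quote \eqref{kyhd05} verbatim. Instead one notes that \eqref{kyhd03} is an algebraic consequence of the Einstein addition formula \eqref{eq01} and the gamma identity \eqref{grbsf09}, both of which remain valid when the right argument ranges over all of $\Rn$ by the Einstein Addition Domain Extension (Remark \ref{ogfnd}), with the standard convention for square roots of negative numbers. This establishes \eqref{dagi3}.

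It remains to verify the side condition \eqref{dagi4} and the two forms of $m_0$ in \eqref{dagi5}. The first component of the now-established \eqref{dagi3} reads $\sum_{k=1}^N m_k\gamma_{\wb\op\vb_k}^{\phantom{O}}=m_0\gamma_{\wb\op\vb_0}^{\phantom{O}}=r\,\gwb(1+\wb\ccdot\vb_0/c^2)$ by the gamma identity \eqref{grbsf09}, and this is nonzero since $r\ne0$, $\gwb\ne0$, and $1+\wb\ccdot\vb_0/c^2\ne0$ --- the last being exactly the requirement that $\wb\op\vb_0$ be defined, automatic whenever $\vb_0$ lies in the closed ball. For the two expressions for $m_0$: applying Lemma \ref{mlemtbgfsh3} to \eqref{dagi1} yields the first radical, while applying it to \eqref{dagi3} (which has the form of \eqref{hugh01sh3} with $\vb_k,\vb_0$ replaced by $\wb\op\vb_k,\wb\op\vb_0$, its hypothesis \eqref{fastdm1} being the condition \eqref{dagi4} just verified) yields the second; since both equal $m_0$, they coincide. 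More transparently, the Left Gyrotranslation Theorem \ref{thm5d8} gives $\om(\wb\op\vb_j)\op(\wb\op\vb_k)=\gyr[\wb,\vb_j](\om\vb_j\op\vb_k)$, and since gyrations preserve the norm by \eqref{eq005a} one has $\gamma_{\om(\wb\op\vb_j)\op(\wb\op\vb_k)}^{\phantom{O}}=\gamma_{\om\vb_j\op\vb_k}^{\phantom{O}}$, which makes the two radicands term-by-term identical.
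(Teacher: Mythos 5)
Your proof is correct and follows essentially the same route as the paper: apply the Lorentz boost $L(\wb)$ to both sides of \eqref{dagi1}, use its linearity together with \eqref{kyhd03}--\eqref{kyhd04} to identify the boosted sides with those of \eqref{dagi3}, and establish the equality of the two radicals in \eqref{dagi5} via the Left Gyrotranslation Theorem and the norm invariance of gyrations. Your explicit handling of the right-hand side as $r\,(1,\vb_0)^t$ with $r=m_0\gamma_{\vb_0}^{\phantom{O}}$ finite, and your direct verification of \eqref{dagi4}, are somewhat more careful than the paper's appeal to bijectivity, but the argument is the same in substance.
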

\begin{proof}
The vector $\vb_0\inn\Rn$ need not be an element of $\Rcn$.
Yet, the Einstein sum $\wb\op\vb_0$ is defined, as explained
in Remark \ref{ogfnd}, p.~\pageref{ogfnd}.

The two representations of $m_0$ in \eqref{dagi5} are equal since
$m_0$ is invariant under left gyrotranslations, as we see
from the following chain of equations,
\begin{equation} \label{matik1}
\|\om(\wb\op\vb_j) \op (\wb\op\vb_k)\|
=
\|\gyr[\wb,\vb_j](\om\vb_j\op\vb_k)\|
=\|\om\vb_j\op\vb_k\|
\end{equation}
which implies
\begin{equation} \label{matik2}
\gamma_{\om(\wb\op\vb_j)\op(\wb\op\vb_k)}^{\phantom{O}}
=
\gamma_{\om\vb_)\op\vb_k}^{\phantom{O}}
\end{equation}
The chain of equations \eqref{matik1}, in turn, follows from
the Left Gyrotranslation Theorem \ref{thm5d8}, p.~\pageref{thm5d8},
and from the norm invariance \eqref{eqwiuj01} under gyrations.

Applying the Lorentz boost $L(\wb)$, $\wb\in\Rcn$, to each side of
\eqref{dagi1}, we have
\begin{equation} \label{fehsnc}
L(\wb) \sum_{k=1}^{N} m_k
\begin{pmatrix}  \gamma_{\vb_k}^{\phantom{O}} \\[6pt]
\gamma_{\vb_k}^{\phantom{O}} \vb_k
\end{pmatrix}
=
L(\wb) m_0
\begin{pmatrix}  \gamma_{\vb_0}^{\phantom{O}} \\[6pt]
\gamma_{\vb_0}^{\phantom{O}} \vb_0
\end{pmatrix}
\end{equation}

Following the linearity of the Lorentz boost, illustrated in
\eqref{kyhd05} and \eqref{rjsk03},
and the invariance under left gyrotranslations
of $m_k$ (these are constants) and $m_0$ (given by \eqref{dagi5}),
the $(n+1)$-vector equation \eqref{fehsnc} can be written as
\begin{equation} \label{fehsnd}
\sum_{k=1}^{N} m_k \begin{pmatrix} \gamma_{\wb\op\vb_k}^{\phantom{O}} \\[6pt]
\gamma_{\wb\op\vb_k}^{\phantom{O}} (\wb\op \vb_k) \end{pmatrix}
= m_0
\begin{pmatrix} \gamma_{\wb\op\vb_0}^{\phantom{O}} \\[6pt]
\gamma_{\wb\op \vb_0}^{\phantom{O}} (\wb\op \vb_0^{\phantom{O}})
 \end{pmatrix}
\,.
\end{equation}
In \eqref{fehsnc}\,--\,\eqref{fehsnd} we recover \eqref{dagi3}
as a Lorentz transformation of \eqref{dagi1}.

Moreover, being bijective and linear, the Lorentz transformation
takes only the zero $(n+1)$-vector, $\zerb$, into $\zerb$,
implying that condition \eqref{dagi4} is equivalent to
condition \eqref{dagi2}.

Hence, $\vb_0$ satisfies  \eqref{dagi1}\,--\,\eqref{dagi2}
if and only if $\vb_0$ satisfies  \eqref{dagi3}\,--\,\eqref{dagi4},
as desired.
\end{proof}

\index{resultant relativistic mass, theorem}
\begin{ttheorem}\label{mainthm1sh}
{\bf (Resultant Relativistically Invariant Mass Theorem).}
Let\index{resultant mass theorem, relativistic}
$(\Rcn,\op)$ be an Einstein gyrogroup,
let $\vb_k \inn\Rcn$ be $N$ elements of the gyrogroup, and let
$m_k\in\Rb$, $k=1,2,\ldots,N$,
be $N$ scalars such that
\begin{equation} \label{fastdn3}
\sum_{k=1}^{N} m_k \gamma_{\vb_k}^{\phantom{O}} ~\ne~0
\,.
\end{equation}
Furthermore, let
\begin{equation} \label{hugh01sh}
\sum_{k=1}^{N} m_k
\begin{pmatrix}  \gamma_{\vb_k}^{\phantom{O}} \\[6pt]
\gamma_{\vb_k}^{\phantom{O}} \vb_k
\end{pmatrix}
=
m_0
\begin{pmatrix}  \gamma_{\vb_0}^{\phantom{O}} \\[6pt]
\gamma_{\vb_0}^{\phantom{O}} \vb_0
\end{pmatrix}
\end{equation}
be an $(n+1)$-vector equation
for the two unknowns $m_0\in\Rb$ and $\vb_0\in\Rn$.

Then \eqref{hugh01sh} possesses a unique solution $(m_0,\vb_0)$.
Moreover, the solution $(m_0,\vb_0)$ satisfies
the following three identities for all $\wb\in\Rcn$
(including, in particular, the interesting special case when $\wb=\zerb$):
\begin{equation} \label{hugh02sh}
\wb \op \vb_0 = \frac{
\sum_{k=1}^{N} m_k \gamma_{\wb \op \vb_k}^{\phantom{O}} (\wb \op \vb_k)
}{
\sum_{k=1}^{N} m_k \gamma_{\wb \op \vb_k}^{\phantom{O}}
}
\end{equation}
\begin{equation} \label{hugh03sh}
\gamma_{\wb \op \vb_0}^{\phantom{O}} = \frac{
\sum_{k=1}^{N} m_k \gamma_{\wb \op \vb_k}^{\phantom{O}}
}{
m_0
}
\end{equation}
\begin{equation} \label{hugh04sh}
\gamma_{\wb \op \vb_0}^{\phantom{O}} (\wb \op \vb_0)= \frac{
\sum_{k=1}^{N} m_k \gamma_{\wb \op \vb_k}^{\phantom{O}}
(\wb \op \vb_k)
}{
m_0
}
\end{equation}
where
\begin{equation} \label{hugh05sh}
m_0 \phantom{i} = \phantom{i} \sqrt{
\left( \sum_{k=1}^{N} m_k \right)^2 +
2\sum_{\substack{j,k=1\\j<k}}^N m_j  m_k
(\gamma_{\om(\wb\op\vb_j)\op(\wb\op\vb_k)}^{\phantom{O}} -1)
}
\end{equation}
and where the sign of $m_0$ equals the sign of the
left-hand side of \eqref{fastdn3}.
\begin{enumerate}
\item \label{kdmr01rm}
If $m_0^2>0$ then $m_0\ne0$ is real and
\begin{equation} \label{cushi01rm}
\gamma_{\wb\op\vb_0}^{\phantom{O}} = \frac{
 \sum_{k=1}^{N} m_k \gamma_{\wb\op\vb_k}^{\phantom{O}} }{ m_0 }
\end{equation}
is real, so that $\vb_0$ lies inside the ball $\Rcn$,
$\vb_0\in\Rcn \subset \Rcn$.
\item \label{kdmr02rm}
If $m_0^2<0$ then $m_0$ is purely imaginary. Hence,
\begin{equation} \label{cushi02rm}
\gamma_{\wb\op\vb_0}^{\phantom{O}} = \frac{
 \sum_{k=1}^{N} m_k \gamma_{\wb\op\vb_k}^{\phantom{O}} }{ m_0 }
\end{equation}
is purely imaginary, so that $\vb_0\in\Rn$ lies outside the
closure of the ball $\Rcn$.
\item \label{kdmr03rm}
If $m_0^2=0$ then $m_0=0$, while
$m_0\gamma_{\wb\op\vb_0}^{\phantom{O}}\ne0$. Hence,
\begin{equation} \label{cushi03rm}
\wb\op\vb_0 = \frac{
\sum_{k=1}^{N} m_k \gamma_{\wb\op\vb_k}^{\phantom{O}} (\wb\op\vb_k)
}{
\sum_{k=1}^{N} m_k \gamma_{\wb\op\vb_k}^{\phantom{O}}
}
\end{equation}
and $\gamma_{\wb\op\vb_0}^{\phantom{O}}=\infty$,
so that $\wb\op\vb_0$, and hence $\vb_0$,
lies on the boundary of the ball $\Rcn$,
$\vb_0\in\partial\Rcn$.
\end{enumerate}
\end{ttheorem}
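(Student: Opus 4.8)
The plan is to assemble the theorem from the five preceding lemmas: Lemmas~\ref{llemtbgfs}, \ref{mlemtbgfsh3}, \ref{mlfdsh3} and \ref{pmlfdsh3} already settle the special case $\wb=\zerb$, while Lemma~\ref{dagimi} promotes that case to arbitrary $\wb\in\Rcn$. What then remains is pure bookkeeping, the one genuinely delicate point being the domain on which the Einstein sums $\wb\op\vb_0$ and the factor $\gamma_{\wb\op\vb_0}^{\phantom{O}}$ are meaningful.

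First I would treat existence, uniqueness and the closed forms for $\wb=\zerb$. Writing $r=m_0\gamma_{\vb_0}^{\phantom{O}}$ turns \eqref{hugh01sh} into the $(n+1)$-vector equation \eqref{bugh01sh} of Lemma~\ref{mlfdsh3}, which under the nondegeneracy hypothesis \eqref{fastdn3} has the unique solution $r=\sum_{k}m_k\gamma_{\vb_k}^{\phantom{O}}\ne0$ and $\vb_0=(\sum_k m_k\gamma_{\vb_k}^{\phantom{O}}\vb_k)/(\sum_k m_k\gamma_{\vb_k}^{\phantom{O}})$. Lemma~\ref{mlemtbgfsh3}, through the algebraic identity of Lemma~\ref{llemtbgfs}, evaluates the Minkowski norm of the left-hand four-vector, giving $m_0^2=r^2(1-\|\vb_0\|^2/c^2)=(\sum_k m_k)^2+2\sum_{j<k}m_jm_k(\gamma_{\om\vb_j\op\vb_k}^{\phantom{O}}-1)$, which is \eqref{hugh05sh} at $\wb=\zerb$; pinning the sign of $m_0$ to that of $r$ then makes the pair $(m_0,\vb_0)$ unique. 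Recording $\gamma_{\vb_0}^{\phantom{O}}=r/m_0$ when $m_0\ne0$ now reproduces Lemma~\ref{pmlfdsh3}, which already contains the trichotomy of Items~\eqref{kdmr01rm}--\eqref{kdmr03rm} at $\wb=\zerb$, including the boundary behaviour $\gamma_{\vb_0}^{\phantom{O}}=\infty$ when $m_0^2=0$.

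Next I would pass to general $\wb\in\Rcn$. By Lemma~\ref{dagimi}, the pair $(m_0,\vb_0)$ solving \eqref{dagi1}--\eqref{dagi2} also solves the left-gyrotranslated system \eqref{dagi3}--\eqref{dagi4} with the \emph{same} $m_0$; the equality of the two representations of $m_0$ in \eqref{dagi5} rests on the Left Gyrotranslation Theorem~\ref{thm5d8} together with the gyration norm-invariance \eqref{eqwiuj01}, which give $\gamma_{\om(\wb\op\vb_j)\op(\wb\op\vb_k)}^{\phantom{O}}=\gamma_{\om\vb_j\op\vb_k}^{\phantom{O}}$ and hence precisely the form \eqref{hugh05sh}. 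Since the translated points $\wb\op\vb_k$ lie in $\Rcn$ and the coefficient $\sum_k m_k\gamma_{\wb\op\vb_k}^{\phantom{O}}$ is nonzero by \eqref{dagi4}, I would apply Lemmas~\ref{mlfdsh3} and \ref{pmlfdsh3} to \eqref{dagi3} with $r$ replaced by $m_0\gamma_{\wb\op\vb_0}^{\phantom{O}}$; this reads off \eqref{hugh02sh} and \eqref{hugh03sh} directly, \eqref{hugh04sh} as their product, and the trichotomy of Items~\eqref{kdmr01rm}--\eqref{kdmr03rm} for $\wb\op\vb_0$ from Lemma~\ref{pmlfdsh3} applied in the translated frame.

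The one real obstacle I anticipate is justifying that $\wb\op\vb_0$ and $\gamma_{\wb\op\vb_0}^{\phantom{O}}$ make sense at all, since by Lemma~\ref{mlfdsh3} the solution $\vb_0$ lies a priori only in $\Rn$, and --- as recorded in Items~\eqref{kdmr02rm}--\eqref{kdmr03rm} --- outside the ball whenever $m_0^2\le0$. This is exactly what the Einstein Addition Domain Extension, Remark~\ref{ogfnd}, provides: $\wb\op\vb_0$ and the gamma identity \eqref{grbsf09} remain valid for $\vb_0\in\Rn$ under the stated convention on square roots of negative numbers, with the boundary case $\gamma_{\wb\op\vb_0}^{\phantom{O}}=\infty$ handled separately in Item~\eqref{kdmr03rm}. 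Granted this, Lemma~\ref{dagimi} and the lemmas feeding it apply without modification, and no further computation is required.
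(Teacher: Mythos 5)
Your proposal is correct and follows essentially the same route as the paper: the case $\wb=\zerb$ is delegated to Lemma~\ref{pmlfdsh3} (itself resting on Lemmas~\ref{llemtbgfs}--\ref{mlfdsh3}), the general case is obtained by invoking Lemma~\ref{dagimi} to left-gyrotranslate the system and then renaming $\vb_k$ as $\wb\op\vb_k$, and \eqref{hugh04sh} is read off as the product of \eqref{hugh02sh} and \eqref{hugh03sh}. Your explicit appeal to the Domain Extension Remark~\ref{ogfnd} for the meaning of $\wb\op\vb_0$ when $\vb_0\notin\Rcn$ is the same justification the paper places inside the proof of Lemma~\ref{dagimi}, so nothing is missing.
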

\begin{proof}
Identity \eqref{hugh04sh} is a trivial consequence of
\eqref{hugh02sh}\,--\,\eqref{hugh03sh} (but, it is presented in the
Theorem for later convenience).
Hence, Theorem \ref{mainthm1sh} reduces to Lemma \ref{pmlfdsh3}
when $\wb=\zerb$.

By Lemma \ref{dagimi},
the condition and the equation in \eqref{fastdn3}\,--\,\eqref{hugh01sh}
are equivalent to the equation and the condition in
\eqref{dagi1}\,--\,\eqref{dagi2}.

Replacing \eqref{fastdn3}\,--\,\eqref{hugh01sh} by their
equivalent counterparts \eqref{dagi1}\,--\,\eqref{dagi2},
Theorem \ref{mainthm1sh} coincides with Lemma \ref{pmlfdsh3}
in which $\vb_k\inn\Rcn$ and $\vb_0\inn\Rn$ are renamed as
$\wb\op\vb_k\inn\Rcn$ and $\wb\op\vb_0\inn\Rn$.
Lemma \ref{pmlfdsh3}, therefore, completes the proof.
\end{proof}

In physical applications to particle systems the dimension of
$\Rcn$ is $n=3$, and the scalars $m_k$ in
Theorem \ref{mainthm1sh} represent particle masses.
As such, $m_k$ are positive so that assumption \eqref{fastdn3} is satisfied.
However, anticipating applications of Theorem \ref{mainthm1sh}
to barycentric coordinates in hyperbolic geometry, in Sect.~\ref{dhyein},
we need the validity of Theorem \ref{mainthm1sh} for any natural number $N$,
and for scalars $m_k$ that need not be positive.

We have thus established in Theorem \ref{mainthm1sh} the following
four results concerning an isolated system $S$, \eqref{kyhd06},
\begin{equation} \label{kyhd06s}
S = S(m_k,\vb_k,\Sigma_\zerb,k=1,\ldots, N)
\,,
\end{equation}
of $N$ noninteracting material particles the $k$-th particle of which has
invariant mass $m_k >0$ and velocity $\vb_k \in \Rcn   $
relative to an inertial frame $\Sigma_\zerb$,
$k=1,\dots,N$:
\begin{enumerate}
\item  
The relativistically invariant (or, rest) mass $m_0$ of the system $S$ is
given by
\begin{equation} \label{udjta}
m_0 =
\sqrt{
\left( \sum_{k=1}^{N} m_k \right)^2 +
2\sum_{\substack{j,k=1\\j<k}}^N m_j  m_k
(\gamma_{\om\vb_j\op\vb_k}^{\phantom{O}} -1)
}
\end{equation}
according to \eqref{hugh05sh} with $\wb=\zerb$.
\item  
The relativistic mass of the system $S$ is
\begin{equation} \label{udje1}
m_0 \gamma_{\vb_0}^{\phantom{O}}
\end{equation}
relative to the rest frame $\Sigma_\zerb$,
\begin{enumerate}
\item \label{kudma}
where $\vb_0$ is the velocity of
the center of momentum frame of $S$ relative to $\Sigma_\zerb$, given by
\begin{equation} \label{hurfm}
\vb_0 = \frac{
\sum_{k=1}^{N} m_k \gamma_{\vb_k}^{\phantom{O}} \vb_k
}{
\sum_{k=1}^{N} m_k \gamma_{\vb_k}^{\phantom{O}}
}
\end{equation}
according to \eqref{hugh02sh} with $\wb=\zerb$;
\item \label{kudmb}
where
\begin{equation} \label{dvur}
\gamma_{\vb_0}^{\phantom{O}} = \frac{
\sum_{k=1}^{N} m_k \gamma_{\vb_k}^{\phantom{O}}
}{
m_0
}
\end{equation}
according to \eqref{hugh03sh} with $\wb=\zerb$; and
\item \label{kudmc}
where $m_0$ is given by \eqref{udjta}.
\end{enumerate}
\item  
Like energy and momentum, the relativistic mass is additive, that is, in particular
for the system $S$ relative to the rest frame $\Sigma_\zerb$,
by \eqref{hugh03sh} with $\wb=\zerb$,
\begin{equation} \label{hurgh}
m_0\gamma_{\vb_0}^{\phantom{O}} =
\sum_{k=1}^{N} m_k \gamma_{\vb_k}^{\phantom{O}}
\end{equation}
\item  
The relativistic mass $m_0 \gamma_{\vb_0}^{\phantom{O}}$ of a system
meshes well with the
Minkowskian four-vector formalism of special relativity.
In particular, for the system $S$ relative to the rest frame $\Sigma_\zerb$,
we have, by \eqref{hugh01sh},
\begin{equation} \label{kyfd}
\sum_{k=1}^{N}
\begin{pmatrix}  m_k \gamma_{\vb_k}^{\phantom{O}} \\[6pt]
m_k \gamma_{\vb_k}^{\phantom{O}} \vb_k
\end{pmatrix}
=
\begin{pmatrix}  m_0 \gamma_{\vb_0}^{\phantom{O}} \\[6pt]
m_0 \gamma_{\vb_0}^{\phantom{O}} \vb_0
\end{pmatrix}
\end{equation}
\end{enumerate}
where $m_0$ and $\vb_0$ are given uniquely by \eqref{udjta} and \eqref{hurfm}.

Thus, the relativistically invariant mass $m_0$ of a particle system $S$
in \eqref{udjta} gives rise to its associated relativistic mass
$m_0 \gamma_{\vb_0}^{\phantom{O}}$ relative to the rest frame $\Sigma_\zerb$.
The latter, in turn,
brings in \eqref{kyfd} the concept of the relativistic mass into conformity
with the Minkowskian four-vector formalism of special relativity.

To appreciate the power and elegance of Theorem \ref{mainthm1sh} in
relativistic mechanics in terms of analogies that it shares with
familiar results in classical mechanics,
we present in Theorem \ref{mainthm2} below the classical counterpart
of Theorem \ref{mainthm1sh}. Theorem \ref{mainthm2} is obtained
from Theorem \ref{mainthm1sh} by approaching the
Newtonian (or, equivalently, Euclidean) limit
when $c$ tends to infinity.
The resulting Theorem \ref{mainthm2} is immediate, and its importance in
classical mechanics is well-known.

\index{resultant Newtonian mass, theorem}
\begin{ttheorem}\label{mainthm2}
{\bf (Resultant Newtonian Invariant Mass Theorem).}
Let\index{resultant mass theorem, classical}
$(\Rn,+)$ be a Euclidean $n$-space, and let
$m_k\in\Rb$ and $\vb_k\in\Rn$, $k=1,2,\ldots,N$,
be $N$ scalars and $N$ elements of $\Rn$ satisfying
\begin{equation} \label{rgshmceuc}
\sum_{k=1}^{N} m_k ~\ne~0
\end{equation}
Furthermore, let
\begin{equation} \label{hugh01inf}
\sum_{k=1}^{N} m_k
\begin{pmatrix}  1 \\[6pt]
\vb_k
\end{pmatrix}
=
m_0
\begin{pmatrix}  1 \\[6pt]
\vb_0
\end{pmatrix}
\end{equation}
be an $(n+1)$-vector equation for the two unknowns
$m_0\in\Rb$ and $\vb_0\in\Rn$.

Then \eqref{hugh01inf} possesses a unique solution $(m_0,\vb_0)$,
$m_0\ne0$, satisfying
the following equations for all $\wb\in\Rn$
(including, in particular, the interesting special case of $\wb=\zerb$):
\begin{equation} \label{hugh02inf}
\wb + \vb_0 = \frac{
\sum_{k=1}^{N} m_k (\wb + \vb_k)
}{
\sum_{k=1}^{N} m_k
}
\end{equation}
and
\begin{equation} \label{hugh05inf}
m_0 = \sum_{k=1}^{N} m_k
\end{equation}
\end{ttheorem}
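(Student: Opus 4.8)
The plan is to solve the $(n+1)$-vector equation \eqref{hugh01inf} directly by exploiting its block structure, to deduce the covariance identity \eqref{hugh02inf} as an immediate corollary, and finally to note that the whole statement can equally be obtained as the Newtonian ($c\to\infty$) limit of Theorem~\ref{mainthm1sh}.

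First I would split \eqref{hugh01inf} into its top (scalar) component and its bottom ($n$-vector) components. The top component reads $\sum_{k=1}^{N}m_k=m_0$, which determines $m_0$ uniquely and, by hypothesis \eqref{rgshmceuc}, forces $m_0\ne0$; this is exactly \eqref{hugh05inf}. The bottom components read $\sum_{k=1}^{N}m_k\vb_k=m_0\vb_0$, and dividing by $m_0=\sum_{k=1}^{N}m_k\ne0$ gives the unique vector $\vb_0=\bigl(\sum_{k=1}^{N}m_k\vb_k\bigr)\big/\bigl(\sum_{k=1}^{N}m_k\bigr)$. Existence and uniqueness of the solution $(m_0,\vb_0)$ follow. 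To obtain \eqref{hugh02inf} I would then compute, for an arbitrary $\wb\in\Rn$,
\[
\frac{\sum_{k=1}^{N}m_k(\wb+\vb_k)}{\sum_{k=1}^{N}m_k}
=\wb+\frac{\sum_{k=1}^{N}m_k\vb_k}{\sum_{k=1}^{N}m_k}
=\wb+\vb_0 ,
\]
using only distributivity of scalar multiplication over vector addition in $\Rn$.

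The only point worth a moment of care is the alternative derivation via the limit $c\to\infty$: there each Lorentz gamma factor tends to $1$ and Einstein addition $\op$ degenerates to ordinary vector addition $+$ (as recorded in the Newtonian-limit discussion around \eqref{eq01} and \eqref{v72gs}), so that \eqref{hugh01sh} becomes \eqref{hugh01inf}, the nonvanishing condition \eqref{fastdn3} becomes \eqref{rgshmceuc}, identity \eqref{hugh02sh} becomes \eqref{hugh02inf}, and in \eqref{hugh05sh} every term $\gamma_{\om(\wb\op\vb_j)\op(\wb\op\vb_k)}^{\phantom{O}}-1$ vanishes, leaving $m_0^2=\bigl(\sum_{k=1}^{N}m_k\bigr)^2$ and hence \eqref{hugh05inf}. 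I do not expect any genuine obstacle: the direct argument is self-contained, and the limiting argument serves only as a consistency check.
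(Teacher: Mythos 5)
Your proposal is correct, and it actually contains both possible routes. The paper's own proof consists essentially of the second of your two arguments: it declares the direct proof ``trivial'' and omits it, and instead presents the theorem as the Newtonian limit $c\to\infty$ of Theorem~\ref{mainthm1sh}, in which gamma factors tend to $1$. Your primary argument --- reading off the top component to get $m_0=\sum_{k=1}^{N}m_k\ne0$ and the bottom components to get $\vb_0$, then using distributivity for \eqref{hugh02inf} --- is precisely the straightforward proof the paper alludes to but does not write out, so you have supplied the detail the paper skips while also reproducing the paper's emphasized derivation as a consistency check. One small point in the limiting argument: the relation $m_0^2=\bigl(\sum_{k=1}^{N}m_k\bigr)^2$ only determines $m_0$ up to sign; to conclude \eqref{hugh05inf} with the correct sign you must invoke the stipulation in Theorem~\ref{mainthm1sh} that the sign of $m_0$ equals the sign of $\sum_{k=1}^{N}m_k\gamma_{\vb_k}^{\phantom{O}}$, which tends to $\sum_{k=1}^{N}m_k$ as $c\to\infty$. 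Your direct argument already settles this unambiguously, so there is no gap.
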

\begin{proof}
While a straightforward proof of Theorem \ref{mainthm2}
is trivial, our point is to present a proof that emphasizes how
Theorem \ref{mainthm2} is derived from Theorem \ref{mainthm1sh}.
Indeed, in the limit as $c\rightarrow\infty$, the results of Theorem \ref{mainthm1sh}
tend to corresponding results of Theorem \ref{mainthm2}, noting that in this limit
gamma factors tend to 1.
In this sense, Theorem \ref{mainthm2} is a special case of Theorem \ref{mainthm1sh}
corresponding to $c=\infty$.
\end{proof}

In physical applications to particle systems the dimension of
$\Rn$ is $n=3$, and the scalars $m_k$ in
Theorem \ref{mainthm2} represent particle masses and, hence,
they are positive.
However, anticipating applications of Theorem \ref{mainthm2}
to barycentric coordinates in Euclidean geometry, in Sect.~\ref{dhyeuc},
we need the validity of Theorem \ref{mainthm2} for any natural number $N$,
and for scalars $m_k$ that need not be positive.

Identity \eqref{hugh02inf} of Theorem \ref{mainthm2} is immediate. Yet,
it is geometrically important.
The geometric importance of the validity of \eqref{hugh02inf} for all $\wb\in \Rn$
lies on its implication that the velocity $\vb_0$ of the
center of momentum frame of a particle system
relative to a given inertial rest frame in classical mechanics
is independent of the choice of the origin of the classical velocity space $\Rn$ with its
underlying standard Cartesian model of Euclidean geometry.

Unlike Identity \eqref{hugh02inf} of Theorem \ref{mainthm2}, which is immediate,
its hyperbolic counterpart in Theorem \ref{mainthm1sh}, Identity \eqref{hugh02sh},
is not immediate. Yet,
in full analogy with Theorem \ref{mainthm2},
the validity of Identity \eqref{hugh02sh} in Theorem \ref{mainthm1sh}
for all $\wb\in \Rcn  $ is geometrically important. This geometric importance
of Identity \eqref{hugh02sh}
lies on its implication that the velocity $\vb_0$ of the center of momentum frame of a particle system
relative to a given inertial rest frame in relativistic mechanics
is independent of the choice of the origin of the relativistic velocity space $\Rcn$ with its
underlying Cartesian-Beltrami-Klein ball model\index{Cartesian-Beltrami-Klein model}
of hyperbolic geometry.

\section{Barycentric Coordinates}
\label{dhyeuc}
\index{barycentric coordinates, Euclidean}

The use of barycentric coordinates in Euclidean geometry,
dates back to M\"obius, is described, for instance, in
\cite{yiu00,hausner98}, and the
historical contribution of M\"obius' barycentric coordinates to vector analysis
is described in \cite[pp.~48--50]{crowe94}.
In this section we set the stage for the adaptation in Sect.~\ref{dhyein}
of barycentric coordinates for use in hyperbolic geometry by illustrating
the way Theorem \ref{mainthm2} suggests the introduction
of barycentric coordinates as a mathematical tool in Euclidean geometry.

For any positive integer $N$,
let $m_k\inn\Rb$, $k=1,\ldots,N$, be $N$ given scalars such that
\begin{equation} \label{gshmceucs}
\sum_{k=1}^{N} m_k ~\ne~0
\end{equation}
and let $A_k\inn\Rn$ be $N$ given points
in the Euclidean $n$-space $\Rn$, $k=1,\ldots,N$.
Theorem \ref{mainthm2}
states the trivial, but geometrically significant, result that
the equation
\begin{equation} \label{htkdbc01}
\sum_{k=1}^{N} m_k
\begin{pmatrix}  1 \\[6pt] A_k \end{pmatrix}
=
m_0
\begin{pmatrix}  1 \\[6pt] P \end{pmatrix}
\end{equation}
for the unknowns $m_0\inn\Rb$ and $P\inn\Rn$ possesses the unique solution
given by
\begin{equation} \label{htkdbc02}
m_0 = \sum_{k=1}^{N} m_k
\end{equation}
and
\begin{equation} \label{htkdbc03}
P = \frac{
\sum_{k=1}^{N} m_k A_k }{\sum_{k=1}^{N} m_k}
\,,
\end{equation}
satisfying for all $X\inn\Rn$,
\begin{equation} \label{htkdbc04}
X+P = \frac{
\sum_{k=1}^{N} m_k (X+A_k)}{\sum_{k=1}^{N} m_k}
\,.
\end{equation}

We view \eqref{htkdbc03} as the representation of a point $P\inn\Rn$
in terms of its {\it barycentric coordinates} $m_k$, $k=1,\ldots,N$,
with respect to the set of points $S=\{A_1,\ldots,A_N\}$.
Identity \eqref{htkdbc04}, then, implies that the
barycentric coordinate representation \eqref{htkdbc03} of $P$
with respect to the set $S$ is {\it covariant}\index{covariant}
(or, {\it invariant in form})
in the following sense.
The point $P$ and the points of the set $S$ of its
barycentric coordinate representation vary together under translations.
Indeed, a translation $X+A_k$ of each $A_k$ by $X$, $k=1,\ldots,N$,
in \eqref{htkdbc04} results in the translation $X+P$ of $P$ by $X$.

In order to insure that barycentric coordinate representations
with respect to a set $S$ are unique, we require $S$ to be
barycentrically independent, as defined below.

\index{barycentric independence}
\begin{ddefinition}\label{defptws}
{\bf (Barycentric Independence, Flats).}
{\it
A set $S$ of $N$ points
$S=\{A_1,\dots,A_N\}$ in $\Rn$, $n\ge2$, is {\it barycentrically independent}
if the $N-1$ vectors
$-A_1+A_k$, $k=2,\dots,N$, are linearly independent.
The $(N-1)$-dimensional subspace $\Lb$
of $\Rn$ spanned by the $N-1$ linearly independent vectors $-A_1+A_k$
is denoted by
\begin{equation} \label{hrnks}
\Lb = \rmspan \{-A_1+A_2,\dots,-A_1+A_N\}
\,.
\end{equation}
A translate, $A+\Lb$, of $\Lb$ by $A\in\Rn$ is the set of all points
$A+X$ where $X\in\Lb$, called an $(N-1)$-dimensional flat, or simply
$(N-1)$-flat in $\Rn$, $n\ge N$.
Flats of dimension 1,2, and $n-1$ are also called
lines, planes, and hyperplanes, respectively.
}
\end{ddefinition}

The $(N-1)$-flat $\AAb_{N,k}^{euc}$,
\begin{equation} \label{fksma}
\AAb_{N,k}^{euc} =
A_k + \rmspan\,\{-A_k+A_1,-A_k+A_2,\dots,-A_k+A_N\} \subset \Rn
\,,
\end{equation}
for any $1\le k\le N$,
associated with a barycentrically independent set
$S=\{A_1,\dots,A_N\}$ in $\Rn$, proves useful in the study of
barycentric coordinates.
Note that one of the vectors $-A_k+A_i$, $1\le i\le N$, in \eqref{fksma}
vanishes.

We are now in the position to present
the formal definition of Euclidean barycentric coordinates, as suggested by
Theorem \ref{mainthm2}, p.~\pageref{mainthm2}.

\begin{ddefinition}\label{defhkbde}
{\bf (Barycentric Coordinates).}
\index{barycentric coordinates, Euclidean}
{\it
Let
\begin{equation} \label{heknd}
S=\{A_1,\dots,A_N\}
\end{equation}
be a barycentrically independent set of $N$ points in a Euclidean $n$-space $\Rn$.
The scalars $m_k$, $k=1,\ldots,N$, satisfying
\begin{equation} \label{eq13ersw04}
\sum_{k=1}^{N} m_k \ne 0
\,,
\end{equation}
are barycentric coordinates of a point $P\inn\Rn$ with respect to the set $S$ if
\begin{equation} \label{eq13ersw01}
P = \frac{
\sum_{k=1}^{N} m_k A_k
}{
\sum_{k=1}^{N} m_k
}
\,.
\end{equation}

Barycentric coordinates are homogeneous in the sense that
the barycentric coordinates $(m_1,\dots,m_N)$ of the point $P$
in \eqref{eq13ersw01} are equivalent to the barycentric coordinates
$(\lambda m_1,\dots,\lambda m_N)$
for any nonzero scalar $\lambda\in\Rb$, $\lambda\ne0$.
Since in barycentric coordinates only ratios of coordinates are
relevant, the barycentric coordinates
$(m_1,\dots,m_N)$ are also written as $(m_1\!:\,\dots\,\!:\!m_N)$.

Barycentric coordinates that are normalized by the condition
\begin{equation} \label{eq13ersw03}
\sum_{k=1}^{N} m_k = 1
\end{equation}
are called {\it special barycentric coordinates}.\index{barycentric coordinates, special}

The point $P$ in \eqref{eq13ersw01} is said to be a
barycentric combination of the points of the set $S$, possessing the
barycentric coordinate representation
(barycentric representation, in short) \eqref{eq13ersw01}
with respect to $S$.

The barycentric combination \eqref{eq13ersw01} is positive (non-negative)
if all the coefficients $m_k$, $k=1,\ldots,N$, are positive (non-negative).
The set of all positive (non-negative)
barycentric combinations of the points of the set $S$ is
called the convex span (convex hull) of $S$.
\index{barycentric combination, positive}
\index{barycentric combination, non-negative}
\index{convex span}
\index{convex hull}

The constant
\begin{equation} \label{eq13ersu}
\mP = \sum_{k=1}^{N} m_k
\end{equation}
is called the constant of the barycentric representation of $P$
with respect to the set $S$.
\index{barycentric representation, constant}
}
\end{ddefinition}

The hyperbolic counterpart \eqref{htkdbc02ein4}, p.~\pageref{htkdbc02ein4},
of the representation constant $\mP$ in \eqref{eq13ersu} proves
crucially important in the adaptation of barycentric coordinates
and convexity considerations for use in hyperbolic geometry.
Convexity considerations are, for instance, important in
quantum mechanics where mixed states are positive barycentric combinations
of pure states \cite{bengtsson06}.

\index{simplex}\index{simplex, face}
\begin{ddefinition}\label{defanj}
{\bf (Simplex).}
{\it
The convex hull of the barycentrically independent set $S=\{A_1,\dots,A_N\}$ of
$N\ge2$ points in $\Rn$ is an $(N-1)$-dimensional simplex,
called an $(N-1)$-simplex and denoted $A_1\,\dots\,A_N$. The points of
$S$ are the vertices of the simplex.
The convex hull of $N-1$ of the points of $S$ is a face of the simplex,
said to be the face opposite to the remaining vertex.
The convex hull of each two of the vertices is an edge of the simplex.

For $K<N$, a $(K-1)$-subsimplex, or a $(K-1)$-face of an $(N-1)$-simplex,
is a $(K-1)$-simplex whose vertices form a subset of the vertices of the
$(N-1)$-simplex.
}
\end{ddefinition}

The convex span of the set $S=\{A_1,\dots,A_N\}$ in Def.~\ref{defanj} is
thus the interior of the $(N-1)$-simplex $A_1\,\dots\,A_N$.

\index{barycentric representation, covariance}
\begin{ttheorem}{~~}\label{thmfkvne}
{\bf (Barycentric Representation Covariance).}
Let
\begin{equation} \label{tndfkv1a}
P = \frac{
\sum_{k=1}^{N} m_k A_k
}{
\sum_{k=1}^{N} m_k
}
\end{equation}
be the barycentric representation of a point $P\inn\Rn$
in a Euclidean $n$-space $\Rn$ with respect to a
barycentrically independent set $S=\{ A_1,\ldots,A_N\}\subset\Rn$.
The barycentric representation \eqref{tndfkv1a}
is covariant, that is,

\begin{equation} \label{tndfkv2a}
  X + P = \frac{
\sum_{k=1}^{N} m_k (X + A_k)
}{
\sum_{k=1}^{N} m_k
}
\end{equation}
for all $X\in\Rn$, and
\begin{equation} \label{tndfkv3a}
RP = \frac{
\sum_{k=1}^{N} m_k R A_k
}{
\sum_{k=1}^{N} m_k
}
\end{equation}
for all $R\in\son$.
\end{ttheorem}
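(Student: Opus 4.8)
The plan is to obtain the translation covariance \eqref{tndfkv2a} directly from Theorem~\ref{mainthm2}, and the rotation covariance \eqref{tndfkv3a} from the linearity of orthogonal maps; neither step requires more than a one-line manipulation.

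For \eqref{tndfkv2a}: since $S$ is barycentrically independent, the representation constant $\sum_{k=1}^{N}m_k$ is nonzero (this nonvanishing is built into Definition~\ref{defhkbde}), so division by it is legitimate. Identify the scalars $m_k$ with themselves, the points $A_k$ with the velocities $\vb_k$ of Theorem~\ref{mainthm2}, and $P$ with $\vb_0$. Then \eqref{tndfkv1a} is precisely the $\wb=\zerb$ instance of the solution formula in Theorem~\ref{mainthm2}, and feeding $\wb=X$ into \eqref{hugh02inf} yields \eqref{tndfkv2a} at once. The self-contained alternative is to expand directly: writing $X=(\sum_k m_k)^{-1}(\sum_k m_k)X$ and using distributivity of scalars over vector addition in $\Rn$, one gets $X+P=\bigl((\sum_k m_k)X+\sum_k m_k A_k\bigr)/\sum_k m_k=\bigl(\sum_k m_k(X+A_k)\bigr)/\sum_k m_k$.

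For \eqref{tndfkv3a}: an element $R\in\son\subset O(n)$ is in particular a linear map of $\Rn$, so it commutes with finite sums and with scalar multiplication; an immediate induction on $N$ promotes the two-term additivity $R(A+B)=RA+RB$ of \eqref{dkrn1b} to $R\bigl(\sum_k m_k A_k\bigr)=\sum_k m_k\,RA_k$. Hence $RP=R\bigl((\sum_k m_k)^{-1}\sum_k m_k A_k\bigr)=(\sum_k m_k)^{-1}\sum_k m_k\,RA_k$, which is exactly \eqref{tndfkv3a}.

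There is essentially no obstacle here: both assertions are routine consequences of linearity, and the only point that needs a word is that the representation constant does not vanish, which is precisely the hypothesis packaged into the definition of barycentric coordinates. The theorem is singled out mainly because it is the Euclidean template whose hyperbolic (gyrocovariance) analogue, established later, is the genuinely substantive result.
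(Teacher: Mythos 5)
Your proposal is correct and matches the paper's own argument, which simply notes that the translation identity follows from distributivity of scalar multiplication over vector addition in $\Rn$ and that the rotation identity follows from the linearity of $R\in\son$; your explicit expansion $X+P=\bigl(\sum_k m_k(X+A_k)\bigr)/\sum_k m_k$ is just that observation written out, and the optional detour through Theorem~\ref{mainthm2} adds nothing essential. The remark that the nonvanishing of $\sum_k m_k$ is packaged into Definition~\ref{defhkbde} is a fair point the paper leaves implicit.
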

\begin{proof}
The proof is immediate, noting 
that addition of vectors in $\Rn$ distributes over scalar multiplication,
and that rotations $R\in\son$ of $\Rn$ about
its origin are linear maps of $\Rn$.
\end{proof}

\index{barycentric representation existence}
\index{existence, barycentric representation}
\begin{ttheorem}\label{tuming}
{\bf (Barycentric Representation Existence).}
Let $S=\{A_1,\dots,A_N\}$ be a barycentrically independent set
of $N$ points in a Euclidean $n$-space $\Rn$, $n\ge N-1$.
Then, $P\inn\Rn$ possesses a barycentric representation
\begin{equation} \label{htkdbc03s}
P = \frac{
\sum_{k=1}^{N} m_k A_k }{\sum_{k=1}^{N} m_k}
\end{equation}
with respect to $S$, with homogeneous barycentric coordinates
$m_k\inn\Rb$, $k=1,2,\dots,N$, that satisfy the condition
\begin{equation} \label{gshmceuct}
\sum_{k=1}^{N} m_k ~\ne~0
\end{equation}
if and only if
\begin{equation} \label{tmred}
P \in \AAb_{N,1}^{euc}
\,.
\end{equation}
\end{ttheorem}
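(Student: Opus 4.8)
The plan is to prove the equivalence directly by elementary affine algebra; barycentric independence of $S$ and the dimension hypothesis $n\ge N-1$ play no role for \emph{this} statement (they govern uniqueness of the coordinates, which is addressed elsewhere), so I would simply not invoke them. The single computational observation that drives everything is that in any combination $\sum_k m_k(A_k-A_1)$ the $k=1$ summand vanishes; this is precisely what ties "weights summing to something nonzero" to "membership in the flat based at $A_1$."

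For the forward direction I would assume $P$ has the representation \eqref{htkdbc03s} with $\mP:=\sum_{k=1}^N m_k\ne0$, write $\mP P=\sum_k m_kA_k$, and then compute
\begin{equation*}
P-A_1=\frac{1}{\mP}\Bigl(\sum_{k=1}^N m_kA_k-\mP A_1\Bigr)
=\frac{1}{\mP}\sum_{k=1}^N m_k(A_k-A_1)
=\frac{1}{\mP}\sum_{k=2}^N m_k(-A_1+A_k),
\end{equation*}
the last step because the $k=1$ term drops out. This exhibits $P-A_1$ as an element of $\Lb=\rmspan\{-A_1+A_2,\dots,-A_1+A_N\}$, hence $P\in A_1+\Lb=\AAb_{N,1}^{euc}$, noting that the extra, vanishing vector $-A_1+A_1$ appearing in the definition \eqref{fksma} of $\AAb_{N,1}^{euc}$ contributes nothing to the span.

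For the converse I would assume $P\in\AAb_{N,1}^{euc}$, so $P=A_1+\sum_{k=2}^N t_k(-A_1+A_k)$ for some reals $t_2,\dots,t_N$, expand, and collect the $A_1$-terms to get $P=\bigl(1-\sum_{k=2}^N t_k\bigr)A_1+\sum_{k=2}^N t_kA_k$. Setting $m_1=1-\sum_{k=2}^N t_k$ and $m_k=t_k$ for $2\le k\le N$ then yields $\sum_{k=1}^N m_k=1\ne0$ together with $P=\bigl(\sum_k m_kA_k\bigr)/\bigl(\sum_k m_k\bigr)$, i.e.\ a barycentric representation of $P$ with respect to $S$ (indeed a special one in the sense of \eqref{eq13ersw03}).

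There is no genuine obstacle in this proof; the only thing to be careful about is the bookkeeping around the base point $A_1$ — that translating an affine combination to base point $A_1$ automatically normalizes the weight sum to $1$, and conversely that $\mP$ cancels out of $P-A_1$ precisely because the $k=1$ index is dropped. One could also remark, for context, that Theorem \ref{thmfkvne} (Barycentric Representation Covariance) makes the "if" direction invariant under translations, so it would in fact suffice to check the case $A_1=O$, but the direct computation above is already shortest.
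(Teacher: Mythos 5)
Your proof is correct and follows essentially the same route as the paper's: both directions rest on translating to the base point $A_1$, observing that the $k=1$ term $-A_1+A_1$ vanishes, and (for the converse) choosing $m_1$ so that the weights sum to $1$; the paper merely packages the two inline computations as invocations of the covariance Theorem \ref{thmfkvne}. Your observation that barycentric independence and $n\ge N-1$ are not needed for existence is also consistent with the paper, which uses independence only to note uniqueness of the coefficients.
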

\begin{proof}
Assuming \eqref{tmred}, we have
\begin{equation} \label{rkdnd}
-A_1+P \in {\rmspan}\{-A_1+A_2,\dots,-A_1+A_N\}
\,,
\end{equation}
so that there exist scalars $m_k\inn\Rb$, $k=2,\dots,N$, such that
\begin{equation} \label{sumrf1}
\begin{split}
-A_1 + P &= m_2(-A_1+A_2) +~\dots~ m_N(-A_1+A_N)
\\ &=
m_1(-A_1+A_1)+m_2(-A_1+A_2) +~\dots~ m_N(-A_1+A_N)
\end{split}
\end{equation}
for any scalar $m_1\inn\Rb$, where $m_k$, $k=2,\dots,N$, are determined uniquely
by the vector $-A_1+P$.

We now select the special scalar $m_1$ that is uniquely
determined by the normalization condition
\begin{equation} \label{sumrf2}
\sum_{k=1}^{N} m_k ~=~1
\,,
\end{equation}
that is,
\begin{equation} \label{sumrf2d5}
m_1=1-\sum_{k=2}^{N} m_k
\,.
\end{equation}
Then, \eqref{sumrf1} can be written as
\begin{equation} \label{sumrf3}
-A_1+P = \frac{
m_1(-A_1+A_1) + m_2(-A_1+A_2) + ~\dots~ + m_N(-A_1+A_N)
}{
m_1+m_2+ ~\dots~ + m_N
}
\,,
\end{equation}
implying, by the barycentric representation covariance
\eqref{tndfkv1a}\,--\,\eqref{tndfkv2a},
\begin{equation} \label{sumrf4}
P = \frac{
m_1A_1 + m_2A_2 + ~\dots~ + m_NA_N
}{
m_1+m_2+ ~\dots~ + m_N
}
\,.
\end{equation}

Owing to the homogeneity of the coordinates $m_k$ in \eqref{sumrf4},
the representation of $P$ in \eqref{sumrf4} remains valid if we replace
the normalization condition \eqref{sumrf2} by the weaker condition
\eqref{gshmceuct},
thus obtaining the desired barycentric representation
\eqref{htkdbc03s}\,--\,\eqref{gshmceuct} of $P$.

Conversely, assuming \eqref{htkdbc03s}\,--\,\eqref{gshmceuct}, we have,
by Result \eqref{tndfkv2a} of Theorem \ref{thmfkvne}, with $X=-A_1$,
\begin{equation} \label{tndfkv3h}
-A_1+ P = \frac{
\sum_{k=1}^{N} m_k (-A_1 + A_k)
}{
\sum_{k=1}^{N} m_k
}
\,,
\end{equation}
implying \eqref{tmred}, as desired.
\end{proof}

\begin{llemma}\label{emtbeuc}
Let $S=\{A_1,A_2,~\ldots~,A_N\}$, $N\ge2$, be a
barycentrically independent set of $N$ points in a
Euclidean space $\Rn$, $n\ge N-1$, and let $\AAb_{N,k}^{euc}$ be the
$(N-1)$-flat \eqref{fksma} associated with $S$,
for each $k$, $1\le k\le N$.
Then, $\AAb_{N,k}^{euc}$ is independent of $k$.
\end{llemma}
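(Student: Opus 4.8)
The plan is to show that, for every $k$, the set $\AAb_{N,k}^{euc}$ coincides with the affine hull of $S=\{A_1,\dots,A_N\}$, namely
\[
\AAb_{N,k}^{euc} = \Bigl\{\, \sum_{\ell=1}^N t_\ell A_\ell \;:\; t_\ell\in\Rb,\ \sum_{\ell=1}^N t_\ell = 1 \,\Bigr\}\,,
\]
a description that does not involve $k$; the lemma is then immediate.

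First I would prove the inclusion $\subseteq$. A generic point of $\AAb_{N,k}^{euc}$ has the form $A_k + \sum_{\ell=1}^N c_\ell(-A_k+A_\ell)$ with $c_\ell\in\Rb$ (the index $\ell=k$ contributes the zero vector, so including it is harmless). Rearranging gives $\bigl(1-\sum_{\ell=1}^N c_\ell\bigr)A_k + \sum_{\ell=1}^N c_\ell A_\ell$, which is $\sum_{\ell=1}^N t_\ell A_\ell$ with $t_k = 1-\sum_{\ell\ne k}c_\ell$ and $t_\ell = c_\ell$ for $\ell\ne k$; one checks that $\sum_\ell t_\ell = 1$. For the reverse inclusion $\supseteq$, given scalars $t_\ell$ with $\sum_\ell t_\ell = 1$, the point $\sum_{\ell=1}^N t_\ell A_\ell$ equals $A_k + \sum_{\ell=1}^N t_\ell(-A_k+A_\ell)$, since $\sum_\ell t_\ell(-A_k+A_\ell) = \sum_\ell t_\ell A_\ell - \bigl(\sum_\ell t_\ell\bigr)A_k = \sum_\ell t_\ell A_\ell - A_k$; hence it lies in $\AAb_{N,k}^{euc}$.

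There is no genuine obstacle here: the computation is routine linear algebra, and the only point requiring a moment's attention is the bookkeeping of the coefficient of $A_k$, together with the observation that allowing the superfluous vector $-A_k+A_k=\zerb$ in the spanning set makes the two inclusions symmetric to write. The barycentric independence of $S$ plays no role in the equality of the $\AAb_{N,k}^{euc}$'s; it is needed only so that the span in \eqref{fksma} has dimension exactly $N-1$, i.e., so that each $\AAb_{N,k}^{euc}$ is an honest $(N-1)$-flat. An alternative, equally short route avoids the affine-hull description: one shows directly that $\AAb_{N,i}^{euc}\subseteq\AAb_{N,j}^{euc}$ by writing $-A_i+A_\ell = (-A_j+A_\ell)-(-A_j+A_i) \in \rmspan\{-A_j+A_1,\dots,-A_j+A_N\}$ and noting $A_i\in\AAb_{N,j}^{euc}$, and then appealing to the symmetry in $i$ and $j$.
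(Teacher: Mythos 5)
Your proof is correct, and it takes a genuinely different route from the paper's. The paper proves the lemma by invoking its barycentric-representation machinery: given $P\in\AAb_{N,k_1}^{euc}$, the Barycentric Representation Existence Theorem \ref{tuming} supplies a representation $P=\bigl(\sum_k m_k A_k\bigr)/\bigl(\sum_k m_k\bigr)$, and the Covariance Theorem \ref{thmfkvne} applied with $X=-A_{k_2}$ then exhibits $-A_{k_2}+P$ as an element of $\rmspan\{-A_{k_2}+A_1,\dots,-A_{k_2}+A_N\}$, giving $\AAb_{N,k_1}^{euc}\subseteq\AAb_{N,k_2}^{euc}$ and concluding by symmetry. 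You instead identify every $\AAb_{N,k}^{euc}$ outright with the affine hull of $S$ by direct coefficient bookkeeping (or, in your alternative, verify the inclusion of spans via $-A_i+A_\ell=(-A_j+A_\ell)-(-A_j+A_i)$ together with $A_i\in\AAb_{N,j}^{euc}$). Your route is shorter and more self-contained, and your observation that barycentric independence is irrelevant to the equality of the flats (it only guarantees their dimension) is accurate. What the paper's heavier route buys is strict parallelism with the hyperbolic case: the proof of the gyro-counterpart, Lemma \ref{emtbein}, is word-for-word the same argument with Theorems \ref{tumingen} and \ref{thmfkvme} in place of \ref{tuming} and \ref{thmfkvne}, whereas your affine-hull computation relies on the linearity of vector addition and does not transfer to Einstein addition. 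Both arguments are sound.
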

\begin{proof}
Let $k_1$ and $k_2$ be two distinct integers, $1\le k_1,k_2\le N$,
and let $P\in\AAb_{N,k_1}^{euc}$. Then,
by the Barycentric Representation Existence Theorem \ref{tuming},
$P$ possesses a barycentric representation
\begin{equation} \label{hamda2}
P = \frac{
\sum_{k=1}^{N} m_k A_k }{\sum_{k=1}^{N} m_k}
\,,
\end{equation}
$\sum_{k=1}^{N} m_k ~\ne~0$.

Applying the
Barycentric Representation Covariance Theorem \ref{thmfkvne},
p.~\pageref{thmfkvne}, with
$X=-A_{k_2}$ to \eqref{hamda2}, we obtain the equation
\begin{equation} \label{hamda3}
-A_{k_2}+ P = \frac{
\sum_{k=1}^{N} m_k (-A_{k_2} + A_k)
}{
\sum_{k=1}^{N} m_k
}
\,.
\end{equation}

Hence,
\begin{equation} \label{hamda4}
-A_{k_2}+ P \in \rmspan\,\{-A_{k_2}+A_1,\ldots,-A_{k_2}+A_N\} \subset \Rn
\,,
\end{equation}
so that $P\in\AAb_{N,k_2}^{euc}$.
Hence, $\AAb_{N,k_1}^{euc} \subset \AAb_{N,k_2}^{euc}$.
The proof of the reverse inclusion is similar (just interchanging
$k_1$ and $k_2$), so that
$\AAb_{N,k_1}^{euc} = \AAb_{N,k_2}^{euc}$, as desired.
\end{proof}

Following the vision of Felix Klein in his
{\it Erlangen Program} \cite{wright02}, it is owing to the covariance
with respect to translations and rotations that
barycentric representations possess geometric significance.
Indeed, translations and rotations in Euclidean geometry form the
{\it group of motions} of the geometry, as explained in
Sect.~\ref{asecmtn}, and according to
Felix Klein's Erlangen Program, a geometric property is a property that
remains invariant in form under the motions of the geometry.

\section{Segments} \label{seceucseg}

A study of the Euclidean segment is presented here as an example that
illustrates a simple, common use of barycentric coordinates.
The purpose of this simple example is to set the stage for its
hyperbolic counterpart in Sect.~\ref{seceinseg}, which is far away
from being simple.

Let $A_1,A_2\in\Rtwo$ be two distinct points of the Euclidean plane $\Rtwo$,
and let $P\in\AAb_{2,1}^{euc}$, where
$\AAb_{2,1}^{euc}$ is the 1-flat (line)
\begin{equation} \label{dupmd}
\AAb_{2,1}^{euc} = A_1+\rmspan\,\{-A_1+A_1,-A_1+A_2\}
=A_1+\rmspan\,\{-A_1+A_2\} \subset \Rtwo
\,,
\end{equation}
so that $P$ is a point on the line that passes through the points
$A_1$ and $A_2$.
Then, by Theorem \ref{tuming}, $P$ possesses a barycentric representation
\begin{equation} \label{hugeuc01}
P = \frac{
m_1A_1+m_2A_2
}{
m_1+m_2
}
\end{equation}
with respect to the barycentrically independent set $S=\{ A_1,A_2\}$,
with barycentric coordinates $m_1$ and $m_2$ satisfying $m_1+m_2\ne0$.
In particular:
\begin{enumerate}
\item
If $m_1=0$, then $P=A_2$.
\item
If $m_2=0$, then $P=A_1$.
\item
If $m_1,m_2>0$, or $m_1,m_2<0$,
then $P$ lies on the interior of segment $A_1A_2$, that is,
between $A_1$ and $A_2$.
\item
If $m_1$ and $m_2$ are nonzero and have opposite signs, then $P$ lies on the
exterior of segment $A_1A_2$.
\end{enumerate}

Owing to the homogeneity of barycentric coordinates, these can be normalized
by the condition
\begin{equation} \label{hugeuc02}
m_1+m_2=1
\,,
\end{equation}
so that, for instance, we can parametrize $m_1$ and $m_2$ by a parameter $t$
according to the equations
$m_1=t$ and $m_2=1-t$, $0\le t\le1$.
Then, the point $P$ possesses the special parametric barycentric representation
\begin{equation} \label{hugeuc03}
P = tA_1+(1-t)A_2
\,.
\end{equation}

Owing to the covariance of barycentric representations
with respect to translations, the barycentric representation
\eqref{hugeuc03} of $P$ obeys the identity
\begin{equation} \label{hugeuc04}
X+P = t(X+A_1)+(1-t)(X+A_2)
\end{equation}
for all $X\in\Rtwo$.
The derivation of Identity \eqref{hugeuc04} from \eqref{hugeuc03}
is trivial. However, Identity \eqref{hugeuc04} serves as an illustration
of its hyperbolic counterpart in \eqref{hugein04}, p.~\pageref{hugein04},
which is far away from being trivial.

\section{Gyrobarycentric Coordinates}
\label{dhyein}
\index{barycentric coordinates, hyperbolic}
\index{gyrobarycentric coordinates}

Guided by analogies with Sect.~\ref{dhyeuc},
in this section we introduce barycentric coordinates into hyperbolic geometry
\cite{barycentric09,mybook06,mybook05},
where they are called {\it gyrobarycentric coordinates}.
For any positive integer $N$,
let $m_k\in\Rb$ be $N$ given scalars, and let $A_k\in\Rsn$ be $N$ given points
in an Einstein gyrovector space $(\Rsn,\op,\od)$, $k=1,\ldots,N$,
satisfying
\begin{equation} \label{gshmds}
\sum_{k=1}^{N} m_k \gamma_{A_k}^{\phantom{O}} \ne0
\,.
\end{equation}
According to Theorem \ref{mainthm1sh}, p.~\pageref{mainthm1sh},
the equation
\begin{equation} \label{htkdbc01ein}
\sum_{k=1}^{N} m_k
\begin{pmatrix}  \gamma_{  A_k}^{\phantom{O}} \\[6pt]
\gamma_{  A_k}^{\phantom{O}}   A_k
\end{pmatrix}
=
m_0
\begin{pmatrix}  \gamma_{P}^{\phantom{O}} \\[6pt]
\gamma_{P}^{\phantom{O}} P
\end{pmatrix}
\end{equation}
for the unknowns $m_0\inn\Rb$ and $P\inn\Rn$ possesses the unique solution
$(m_0,P)$ given by
\begin{subequations} \label{htkdbc02ein}
\begin{equation} \label{htkdbc02eina}
m_0 \phantom{i} = \phantom{i} \sqrt{
\left( \sum_{k=1}^{N} m_k \right)^2 +
2\sum_{\substack{j,k=1\\j<k}}^N m_j  m_k
(\gamma_{\om  A_j \op   A_k}^{\phantom{O}} -1)
}
\end{equation}
or, equivalently,
\begin{equation} \label{htkdbc02einb}
m_0 \phantom{i} = \phantom{i} \sqrt{
\sum_{k=1}^{N} m_k^2 +
2\sum_{\substack{j,k=1\\j<k}}^N m_j  m_k
\gamma_{\om  A_j \op   A_k}^{\phantom{O}}
}
~~,
\end{equation}
\end{subequations}
$m_0\ne0$, obeying the {\it left gyrotranslation invariance condition}
\begin{equation} \label{hufh05}
m_0 \phantom{i} = \phantom{i} \sqrt{
\left( \sum_{k=1}^{N} m_k \right)^2 +
2\sum_{\substack{j,k=1\\j<k}}^N m_j  m_k
(\gamma_{\om( X \op  A_j)\op( X \op  A_k)}^{\phantom{O}} -1)
}
\end{equation}
for all $X\in\Rsn$, and
\begin{equation} \label{htkdbc03ein}
P = \frac{
\sum_{k=1}^{N} m_k \gamma_{  A_k}^{\phantom{O}}   A_k
}{
\sum_{k=1}^{N} m_k \gamma_{  A_k}^{\phantom{O}}
}
\,,
\end{equation}
obeying the {\it left gyrotranslation covariance condition}
\begin{equation} \label{htkdbc04ein}
  X \op  P = \frac{
\sum_{k=1}^{N} m_k \gamma_{  X \op   A_k}^{\phantom{O}} (  X \op   A_k)
}{
\sum_{k=1}^{N} m_k \gamma_{  X \op   A_k}^{\phantom{O}}
}
\end{equation}
for all $X\in\Rsn$.

\begin{rremark}
{\it
We may remark that Equation
\eqref{htkdbc02eina} for $m_0$ is preferable over \eqref{htkdbc02einb}
when we wish to emphasize that we are guided by analogies that
(i) relativistic mechanics and its regulating hyperbolic geometry share with
(ii) classical mechanics and its regulating Euclidean geometry.
It is clear from \eqref{htkdbc02eina} that in the
Euclidean-Newtonian limit, $s\rightarrow\infty$,
gamma factors tend to 1, so that \eqref{htkdbc02eina} tends to
\eqref{htkdbc02}.
In applications, however, Equation
\eqref{htkdbc02einb} for $m_0$ is preferable over \eqref{htkdbc02eina}
for its simplicity.
}
\end{rremark}

Furthermore, Theorem \ref{mainthm1sh}, p.~\pageref{mainthm1sh}, states that
$P$ and $m_0$ satisfy the two identities
\begin{equation} \label{hugh03spein}
\gamma_{P}^{\phantom{O}} = \frac{
\sum_{k=1}^{N} m_k \gamma_{  A_k}^{\phantom{O}}
}{
m_0
}
\end{equation}
and
\begin{equation} \label{hugh04spein}
\gamma_{P}^{\phantom{O}}   P = \frac{
\sum_{k=1}^{N} m_k \gamma_{  A_k}^{\phantom{O}} A_k
}{
m_0
}
\,,
\end{equation}
obeying the {\it left gyrotranslation covariance condition}
\begin{equation} \label{hugh03ein}
\gamma_{X \op P}^{\phantom{O}} = \frac{
\sum_{k=1}^{N} m_k \gamma_{ X  \op   A_k}^{\phantom{O}}
}{
m_0
}
\end{equation}
and
\begin{equation} \label{hugh04ein}
\gamma_{X \op P}^{\phantom{O}} (X \op P)= \frac{
\sum_{k=1}^{N} m_k \gamma_{ X  \op   A_k}^{\phantom{O}}
( X  \op   A_k)
}{
m_0
}
\end{equation}
for all $X\in\Rsn$.

We view \eqref{htkdbc03ein} as the representation of a point $P\inn\Rsn$
in terms of its {\it hyperbolic barycentric coordinates} $m_k$, $k=1,\ldots,N$,
with respect to the set of points $S=\{A_1,\ldots,A_N\}$.
Naturally in gyrolanguage, hyperbolic barycentric coordinates are called
{\it gyrobarycentric coordinates}.\index{gyrobarycentric coordinates}
Identity \eqref{htkdbc04ein} implies that the
{\it gyrobarycentric coordinate representation} \eqref{htkdbc03ein} of $P$
with respect to the set $S$ is {\it gyrocovariant}\index{gyrocovariant}
with respect to left gyrotranslations
in the sense of Def.~\ref{defhvyxein}, p.~\pageref{defhvyxein},\index{gyrocovariance}
as stated in Theorem \ref{thmfkvme}, p.~\pageref{thmfkvme}.
The point $P$ and the points of the set $S$ of its
gyrobarycentric coordinate representation vary together under left gyrotranslations.
Indeed, a left gyrotranslation $X\op A_k$ of each $A_k$ by $X$, $k=1,\ldots,N$
in \eqref{htkdbc04ein} results in the left gyrotranslation $X\op P$ of $P$ by $X$.

In order to insure that gyrobarycentric coordinate representations
with respect to a set $S$ are unique, we require $S$ to be
gyrobarycentrically independent, as defined below.

\index{gyrobarycentric independence}
\begin{ddefinition}\label{defptein}
{\bf (Gyrobarycentric Independence, Gyroflats).}
{\it
A set $S$ of $N$ points
$S=\{A_1,\dots,A_N\}$ in $\Rsn$, $n\ge2$, is {\it gyrobarycentrically independent}
if the $N-1$ gyrovectors in $\Rsn$,
$\om A_1 \op A_k$, $k=2,\dots,N$, considered as vectors in $\Rn$,
are linearly independent in $\Rn$.
The $(N-1)$-dimensional subspace $\Lb$ of $\Rn$
spanned by the $N-1$ gyrovectors $\om A_1 \op A_k\inn\Rsn\subset\Rn$,
considered as vectors in $\Rn$,
is denoted by
\begin{equation} \label{hrnkt}
\Lb = \rmspan\{\om A_1 \op A_2,\dots,\om A_1 \op A_N\}
\,.
\end{equation}
A left gyrotranslate, $A\op \Lb$, of $\Lb$ by $A\in\Rsn$ is the set of all points
$A\op X$ where $X\in\Lb$, called an $(N-1)$-dimensional gyroflat, or simply
$(N-1)$-gyroflat in $\Rn$, $n\ge N$.
Gyroflats of dimension 1,2, and $n-1$, restricted to $\Rn\cap\Rsn$, are also called
gyrolines, gyroplanes, and hypergyroplanes, respectively.
}
\end{ddefinition}

The $(N-1)$-gyroflat $\AAb_{N,k}$,
\begin{equation} \label{fksmein}
\AAb_{N,k} =
A_k \op \rmspan\,\{\om A_k\op A_1,\om A_k\op A_2,\dots,\om A_k\op A_N\}
\subset \Rn
\,,
\end{equation}
for any $1\le k\le N$,
associated with a gyrobarycentrically independent set
$S=\{A_1,\dots,A_N\}$ in $\Rn$, proves useful in the study of
gyrobarycentric coordinates.
Note that one of the gyrovectors $\om A_k \op A_i$, $1\le i\le N$, in \eqref{fksmein}
vanishes.

We are now in the position to present
the formal definition of gyrobarycentric coordinates, that is,
hyperbolic barycentric coordinates, as suggested by
Theorem \ref{mainthm1sh}, p.~\pageref{mainthm1sh}.

\begin{ddefinition}\label{defhkbdeein}
{\bf (Gyrobarycentric Coordinates).}\index{barycentric coordinates, hyperbolic}
\index{gyrobarycentric coordinates}
{\it
Let
\begin{equation} \label{ryuhms}
S=\{A_1,\dots,A_N\}
\end{equation}
be a gyrobarycentrically independent set of $N$ points
in an Einstein gyrovector space $\Rsn=(\Rsn,\op,\od)$, $n\ge N-1$,
The scalars $m_1,\dots,m_N$, satisfying
\begin{equation} \label{eq13ersw04ein}
\sum_{k=1}^{N} m_k \gamma_{  A_k}^{\phantom{O}} \ne 0
\,,
\end{equation}
are gyrobarycentric coordinates of a point $P\in\Rn$ with respect to the set $S$ if
\begin{equation} \label{eq13ersw01ein}
P = \frac{
\sum_{k=1}^{N} m_k \gamma_{  A_k}^{\phantom{O}}   A_k
}{
\sum_{k=1}^{N} m_k \gamma_{  A_k}^{\phantom{O}}
}
\in\Rn
\,.
\end{equation}

Gyrobarycentric coordinates are homogeneous in the sense that
the gyrobarycentric coordinates $(m_1,\dots,m_N)$ of the point $P$
in \eqref{eq13ersw01ein} are equivalent to the gyrobarycentric coordinates
$(\lambda m_1,\dots,\lambda m_N)$
for any nonzero scalar $\lambda\in\Rb$, $\lambda\ne0$.
Since in gyrobarycentric coordinates only ratios of coordinates are
relevant, the gyrobarycentric coordinates
$(m_1,\dots,m_N)$ are also written as $(m_1\!:\,\dots\,\!:\!m_N)$.

Gyrobarycentric coordinates that are normalized by the condition
\begin{equation} \label{eq13ersw03ein}
\sum_{k=1}^{N} m_k = 1
\end{equation}
are called {\it special gyrobarycentric coordinates}.\index{barycentric coordinates, special}

The point $P$ in \eqref{eq13ersw01ein} is said to be
the gyrobarycentric combination of the points of the set $S$,
possessing
the gyrobarycentric coordinate representation
(gyrobarycentric representation, in short)\index{gyrobarycentric representation}
\eqref{eq13ersw01ein} with respect to the set $S$.

The gyrobarycentric combination (or, representation) \eqref{eq13ersw01ein} is
positive (non-negative) if all
the coefficients $m_k$, $k=1,\dots,N$, are positive (non-negative).
The set of all positive (non-negative) gyrobarycentric combinations of the points
of the set $S$ is called the gyroconvex span\index{gyroconvex span}
(gyroconvex hull)\index{gyroconvex hull}
of $S$.
\index{gyrobarycentric combination, positive}
\index{gyrobarycentric representation, positive}
\index{gyrobarycentric representation, constant}

The constant $\mP$, given by
\begin{equation} \label{htkdbc02ein4}
\mP \phantom{i} = \phantom{i} \sqrt{
\left( \sum_{k=1}^{N} m_k \right)^2 +
2\sum_{\substack{j,k=1\\j<k}}^N m_j  m_k
(\gamma_{\om  A_j \op   A_k}^{\phantom{O}} -1)
}
~~,
\end{equation}
is called the
constant of the gyrobarycentric representation \eqref{eq13ersw01ein}
of $P$ with respect to the set $S$.
}
\end{ddefinition}

In the Euclidean-Newtonian limit $s\rightarrow\infty$, gamma factors tend to 1
and the $s$-ball $\Rsn$ expands to the whole of its space, $\Rn$.
Hence, in that limit
Def.~\ref{defhkbdeein} of gyrobarycentric coordinates reduces to
Def.~\ref{defhkbde} of barycentric coordinates.

\begin{rremark}\label{remksud1}
{\it
Gyrobarycentric representation constants will prove useful.
Owing to the homogeneity of gyrobarycentric coordinates, the value of
the constant $\mP$ in \eqref{htkdbc02ein4}
of the gyrobarycentric representation \eqref{eq13ersw01ein} of $P$
has no significance.
Significantly, however, is whether
(i) $m_P^2$ is positive (implying that $\mP$ is a nonzero real number),
(ii) $m_P^2$ is zero (implying $\mP=0$), and
(iii) $m_P^2$ is negative (implying that $\mP$ is purely imaginary).
Also significant are ratios like $m_k/\mP$, {\it etc.}
}
\end{rremark}

\begin{rremark}\label{remksud2}
{\it
It should be noted that while the point $P$ is gyrobarycentrically
represented in \eqref{eq13ersw01ein} with respect to a set $S\subset\Rsn$
of points in $\Rsn$, in general $P$ lies in $\Rn\supset\Rsn$.
Hence, it is important to associate a gyrobarycentric representation
of a point $P$ with respect to a set $S\subset\Rsn$ with the
constant $\mP$ of the gyrobarycentric representation.
Indeed, as we see from Corollary \eqref{cuvnsp}, p.~\pageref{cuvnsp},
it is the gyrobarycentric representation constant $\mP$ that
determines whether the point $P$ lies inside the $s$-ball $\Rsn$,
or on the boundary $\partial\Rsn$ of the ball, or does not lie in the
closure $\overline{\Rsn}$ of the ball.
}
\end{rremark}

The concept of the {\it gyroconvex hull} in Def.~\ref{defhkbdeein}
enables the concept of the
Euclidean simplex in Def.~\ref{defanj}, p.~\pageref{defanj},
to be translated into a corresponding concept of the Einsteinian gyrosimplex
in the following definition.

\index{gyrosimplex}\index{gyrosimplex, gyroface}
\begin{ddefinition}\label{defamk}
{\bf (Gyrosimplex).}
{\it
The gyroconvex hull of a gyrobarycentrically independent set $S=\{A_1,\dots,A_N\}$ of
$N\ge2$ points in $\Rsn$ is an $(N-1)$-dimensional gyrosimplex,
called an $(N-1)$-gyrosimplex and denoted by $A_1 \dots A_N$. The points of
$S$ are the vertices of the gyrosimplex.
The gyroconvex hull of $N-1$ of the points of $S$ is a gyroface of the gyrosimplex,
said to be the gyroface opposite to the remaining vertex.
The gyroconvex hull of each two of the vertices is a gyroedge of the gyrosimplex.

For $K<N$, a $(K-1)$-subgyrosimplex, or a $(K-1)$-gyroface of an $(N-1)$-gyrosimplex,
is a $(K-1)$-gyrosimplex whose vertices form a subset of the vertices of the
$(N-1)$-gyrosimplex.
}
\end{ddefinition}

The gyroconvex span of the set $S=\{A_1,\dots,A_N\}$ in Def.~\ref{defamk} is
thus the interior of the $(N-1)$-gyrosimplex $A_1\,\dots\,A_N$.

Any two distinct points $A_1,A_2$ of an Einstein gyrovector space $\Rsn$
are gyrobarycentrically independent, and their
gyroconvex span is the interior of the gyrosegment $A_1A_2$, which is a
1-gyrosimplex. Similarly, any three non-gyrocollinear points
(that is, points that do not lie on the same gyroline; see
\cite[Remark 6.23]{mybook03} for this terminology)
$A_1,A_2,A_3$ of $\Rsn$, $n\ge2$, are gyrobarycentrically independent, and their
gyroconvex span is the interior of the gyrotriangle $A_1A_2A_3$, which is a 2-gyrosimplex.
An illustrative example follows.

\begin{eexample}\label{ekxuh}
Low $N$-dimensional gyrosimplices, $1\le N\le4$, are:
\begin{enumerate}
\item
A 0-dimensional gyrosimplex is a point $A_1$
in an Einstein gyrovector space $(\Rsn,\op,\od)$, $n\ge1$.
\item
A 1-dimensional gyrosimplex is a gyrosegment $A_1A_2$ the 2 vertices of which
form the gyrobarycentrically independent set
$S=\{ A_1,A_2\}$ in an Einstein gyrovector space $(\Rsn,\op,\od)$, $n\ge1$.
\item
A 2-dimensional gyrosimplex is a gyrotriangle $A_1A_2A_3$ the 3 vertices of which
form the gyrobarycentrically independent set
$S=\{ A_1,A_2,A_3\}$ in an Einstein gyrovector space $(\Rsn,\op,\od)$, $n\ge2$.
\item
A 3-dimensional gyrosimplex is a gyrotetrahedron $A_1A_2A_3A_4$ the 4 vertices of which
form the gyrobarycentrically independent set
$S=\{ A_1,A_2,A_3,A_4\}$ in an Einstein gyrovector space $(\Rsn,\op,\od)$, $n\ge3$.
\item
Generally, an $(N-1)$-dimensional gyrosimplex, $N\ge2$, is a geometric object
denoted by $A_1\ldots A_N$, the $N$ vertices of which
form the gyrobarycentrically independent set
$S=\{ A_1,\ldots,A_N\}$ in an Einstein gyrovector space $(\Rsn,\op,\od)$, $n\ge N-1$.
\end{enumerate}
\end{eexample}

\index{gyrobarycentric representation gyrocovariance}
\index{gyrocovariance, gyrobarycentric representation}
\begin{ttheorem}\label{thmfkvme}
{\bf (Gyrobarycentric Representation Gyrocovariance).}
\phantom{OOOOOO}
Let $S=\{A_1,\dots,A_N\}$ be a gyrobarycentrically independent set
of $N$ points in an Einstein gyrovector space $(\Rsn,\op,\od)$, $n\ge N-1$,
and let $P\inn\Rn$ be a point that possesses the
gyrobarycentric representation
\begin{subequations}
\begin{equation} \label{hkrfkv1a}
P = \frac{
\sum_{k=1}^{N} m_k \gamma_{  A_k}^{\phantom{O}}   A_k
}{
\sum_{k=1}^{N} m_k \gamma_{  A_k}^{\phantom{O}}
}
\end{equation}
with respect to $S$.

Then
\begin{equation} \label{hkrfkv1b}
\gamma_{P}^{\phantom{O}} = \frac{
\sum_{k=1}^{N} m_k \gamma_{  A_k}^{\phantom{O}}
}{
\mP
}
\end{equation}
and
\begin{equation} \label{hkrfkv1c}
\gamma_{P}^{\phantom{O}} P = \frac{
\sum_{k=1}^{N} m_k \gamma_{  A_k}^{\phantom{O}} A_k
}{
\mP
}
\,,
\end{equation}
where the constant $\mP$ of the gyrobarycentric representation \eqref{hkrfkv1a}
of $P$ is given by
\begin{equation} \label{hkrfkv1d}
\mP \phantom{i} = \phantom{i} \sqrt{
\left( \sum_{k=1}^{N} m_k \right)^2 +
2\sum_{\substack{j,k=1\\j<k}}^N m_j  m_k
(\gamma_{\om  A_j \op   A_k}^{\phantom{O}} -1)
}
~~.
\end{equation}

Furthermore, the gyrobarycentric representation \eqref{hkrfkv1a}
and its associated identities in \eqref{hkrfkv1b}\,--\,\eqref{hkrfkv1d}
are gyrocovariant, that is,

\end{subequations}
\begin{subequations}
\begin{equation} \label{hkrfkv2a}
  X \op  P = \frac{
\sum_{k=1}^{N} m_k \gamma_{  X \op   A_k}^{\phantom{O}} (  X \op   A_k)
}{
\sum_{k=1}^{N} m_k \gamma_{  X \op   A_k}^{\phantom{O}}
}
\end{equation}
\begin{equation} \label{hkrfkv2b}
\gamma_{ X  \op   P}^{\phantom{O}} = \frac{
\sum_{k=1}^{N} m_k \gamma_{ X  \op   A_k}^{\phantom{O}}
}{
\mP
}
\end{equation}
\begin{equation} \label{hkrfkv2c}
\gamma_{ X  \op P}^{\phantom{O}} ( X  \op P)= \frac{
\sum_{k=1}^{N} m_k \gamma_{ X  \op   A_k}^{\phantom{O}}
( X  \op   A_k)
}{
\mP
}
\end{equation}
\begin{equation} \label{hkrfkv2d}
\mP \phantom{i} = \phantom{i} \sqrt{
\left( \sum_{k=1}^{N} m_k \right)^2 +
2\sum_{\substack{j,k=1\\j<k}}^N m_j  m_k
(\gamma_{\om( X \op  A_j)\op( X \op  A_k)}^{\phantom{O}} -1)
}
\end{equation}
for all $X\in\Rsn$, and
\end{subequations}
\begin{subequations}
\begin{equation} \label{hkrfkv3a}
RP = \frac{
\sum_{k=1}^{N} m_k \gamma_{R A_k}^{\phantom{O}} R A_k
}{
\sum_{k=1}^{N} m_k \gamma_{R A_k}^{\phantom{O}}
}
\end{equation}
\begin{equation} \label{hkrfkv3b}
\gamma_{RP}^{\phantom{O}} = \frac{
\sum_{k=1}^{N} m_k \gamma_{R A_k}^{\phantom{O}}
}{
\mP
}
\end{equation}
\begin{equation} \label{hkrfkv3c}
\gamma_{RP}^{\phantom{O}} (RP)= \frac{
\sum_{k=1}^{N} m_k \gamma_{R A_k}^{\phantom{O}}
(R A_k)
}{
\mP
}
\end{equation}
\begin{equation} \label{hkrfkv3d}
\mP \phantom{i} = \phantom{i} \sqrt{
\left( \sum_{k=1}^{N} m_k \right)^2 +
2\sum_{\substack{j,k=1\\j<k}}^N m_j  m_k
(\gamma_{\om(R A_j)\op(R A_k)}^{\phantom{O}} -1)
}
\end{equation}
for all $R\in\son$.
\end{subequations}
\end{ttheorem}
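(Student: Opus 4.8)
The plan is to obtain the theorem as an essentially immediate corollary of the Resultant Relativistically Invariant Mass Theorem~\ref{mainthm1sh}, applied with the identification $\vb_k = A_k$ for $k=1,\dots,N$ and $\wb = X$, supplemented by a short separate argument for rotations. First I would observe that the non-degeneracy hypothesis~\eqref{eq13ersw04ein} of the present theorem is exactly the hypothesis~\eqref{fastdn3} required by Theorem~\ref{mainthm1sh}; hence the $(n+1)$-vector equation~\eqref{hugh01sh} (with $\vb_k = A_k$) has a unique solution $(m_0,\vb_0)$. By Lemma~\ref{pmlfdsh3} (the $\wb=\zerb$ instance of Theorem~\ref{mainthm1sh}) this solution satisfies $\vb_0 = \bigl(\sum_k m_k\gamma_{A_k}^{\phantom{O}}A_k\bigr)/\bigl(\sum_k m_k\gamma_{A_k}^{\phantom{O}}\bigr)$, which is precisely the point $P$ of~\eqref{hkrfkv1a}, and $m_0 = \mP$ with $\mP$ as in~\eqref{hkrfkv1d}. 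Reading off~\eqref{hugh03sh} and~\eqref{hugh04sh} at $\wb=\zerb$ then gives~\eqref{hkrfkv1b} and~\eqref{hkrfkv1c} directly.

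For the left-gyrotranslation gyrocovariance~\eqref{hkrfkv2a}--\eqref{hkrfkv2d} I would simply specialize Theorem~\ref{mainthm1sh} at $\wb = X$: identities~\eqref{hugh02sh}, \eqref{hugh03sh}, \eqref{hugh04sh} and~\eqref{hugh05sh}, which Theorem~\ref{mainthm1sh} asserts for \emph{all} $\wb\in\Rsn$, become~\eqref{hkrfkv2a}, \eqref{hkrfkv2b}, \eqref{hkrfkv2c} and~\eqref{hkrfkv2d} verbatim once $\vb_k = A_k$ and $\vb_0 = P$. The one point worth flagging here is that the representation constant is the same in~\eqref{hkrfkv1d} and~\eqref{hkrfkv2d}: this is the left-gyrotranslation invariance of $m_0$, which in Theorem~\ref{mainthm1sh} rests on Lemma~\ref{dagimi}, hence ultimately on the Left Gyrotranslation Theorem~\ref{thm5d8} and the gyration norm-invariance~\eqref{eqwiuj01}, which yield $\gamma_{\om(X\op A_j)\op(X\op A_k)}^{\phantom{O}} = \gamma_{\om A_j\op A_k}^{\phantom{O}}$.

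For the rotational gyrocovariance~\eqref{hkrfkv3a}--\eqref{hkrfkv3d} Theorem~\ref{mainthm1sh} does not apply directly, so here I would argue from the structural properties of $R\in\son$ recorded in~\eqref{dkrn1bein}: $R$ is a linear self-map of $\Rn$ that respects Einstein addition, $R(\ab\op\bb)=R\ab\op R\bb$, and preserves the Euclidean norm, $\|R\ab\|=\|\ab\|$. The norm invariance gives $\gamma_{RA_k}^{\phantom{O}} = \gamma_{A_k}^{\phantom{O}}$ and, combined with $R(\om A_j\op A_k) = \om(RA_j)\op(RA_k)$, also $\gamma_{\om(RA_j)\op(RA_k)}^{\phantom{O}} = \gamma_{\om A_j\op A_k}^{\phantom{O}}$; in particular $\sum_k m_k\gamma_{RA_k}^{\phantom{O}} = \sum_k m_k\gamma_{A_k}^{\phantom{O}}\ne 0$, so the hypothesis persists and $\mP$ is unchanged, which is~\eqref{hkrfkv3d}. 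Applying the linear map $R$ to the numerator and denominator of~\eqref{hkrfkv1a}, and to~\eqref{hkrfkv1b}--\eqref{hkrfkv1c}, and then substituting these invariances, yields~\eqref{hkrfkv3a}, \eqref{hkrfkv3b} and~\eqref{hkrfkv3c}.

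I do not expect a genuine obstacle: the content of the theorem is a repackaging of Theorem~\ref{mainthm1sh} in the language of gyrobarycentric coordinates. The only steps needing care are the bookkeeping identifications $\vb_k\leftrightarrow A_k$, $\vb_0\leftrightarrow P$, $m_0\leftrightarrow\mP$, the observation that~\eqref{hkrfkv1a} is literally the $\vb_0$ produced by Lemma~\ref{pmlfdsh3}, and the supplementary (but elementary) rotation argument, whose one new ingredient is that a rotation $R\in\son$ respects Einstein addition. One should also note, as in Remark~\ref{remksud2}, that $P$ may lie in $\Rn$ outside the ball $\Rsn$, so the gamma factor $\gamma_P^{\phantom{O}}$ in~\eqref{hkrfkv1b}--\eqref{hkrfkv1c} is to be read via the domain extension of Remark~\ref{ogfnd}; this does not affect any of the algebra above.
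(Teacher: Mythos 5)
Your proposal is correct and follows essentially the same route as the paper's proof: both reduce the statement to Theorem~\ref{mainthm1sh} via the identifications $\vb_k\leftrightarrow A_k$, $\vb_0\leftrightarrow P$, $m_0\leftrightarrow\mP$ (with $\wb=\zerb$ for \eqref{hkrfkv1b}--\eqref{hkrfkv1d} and $\wb=X$ for \eqref{hkrfkv2a}--\eqref{hkrfkv2d}), and both dispose of the rotational case by the linearity, gyrolinearity, and norm preservation of $R\in\son$. Your added remarks on the invariance of $\mP$ under left gyrotranslations and on the domain extension of $\gamma_P^{\phantom{O}}$ only make explicit points the paper delegates to Lemma~\ref{dagimi} and Remark~\ref{ogfnd}.
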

\begin{proof}
The pair $(\mP,P)$ is a solution of the $(n+1)$-vector equation
\begin{equation} \label{hugh01sd}
\sum_{k=1}^{N} m_k
\begin{pmatrix}  \gamma_{A_k}^{\phantom{O}} \\[6pt]
\gamma_{A_k}^{\phantom{O}} A_k
\end{pmatrix}
=
\mP
\begin{pmatrix}  \gamma_{P}^{\phantom{O}} \\[6pt]
\gamma_{P}^{\phantom{O}} P
\end{pmatrix}
\end{equation}
as we see from Theorem \ref{mainthm1sh}, p.~\pageref{mainthm1sh}.

Hence, by Theorem \ref{mainthm1sh} with $\wb=\zerb$,
$\gamma_{P}^{\phantom{O}}$ and $\gamma_{P}^{\phantom{O}}P$
are given by \eqref{hkrfkv1b} -- \eqref{hkrfkv1c}.
Furthermore, by Theorem \ref{mainthm1sh}, the pair $(\mP,P)$,
and $\gamma_{P}^{\phantom{O}}$ and $\gamma_{P}^{\phantom{O}}P$
are gyrocovariant under left gyrotranslations,
thus proving \eqref{hkrfkv2a} -- \eqref{hkrfkv2d}.

Finally, the proof of \eqref{hkrfkv3a} -- \eqref{hkrfkv3d}
follows immediately from the linearity and gyrolinearity of $R\inn\son$,
noting that $R$ preserves the norm; see
\eqref{dkrn1b}, p.~\pageref{dkrn1b} and \eqref{dkrn1bein}, p.~\pageref{dkrn1bein}.
\end{proof}

\begin{rremark}\label{uptic}
{\it
Gyrocovariance of a real or a purely imaginary number means that the number
is invariant under gyromotions, that is,
under left gyrotranslations and rotations. Hence, in particular, the gyrocovariance of
gamma factors, like $\gamma_{P}^{\phantom{O}}$, and
representation constants, like $\mP$, in Theorem \ref{thmfkvme},
as well as any gyrobarycentric coordinate $m_k$, means that each of these
is invariant under gyromotions.
}
\end{rremark}
\index{gyrocovariance, invariance}

The unique solution of \eqref{hugh01sh} that
Theorem \ref{mainthm1sh}, p.~\pageref{mainthm1sh}, provides,
implies immediately the following corollary about gyrobarycentric representations.

\begin{ccorollary}\label{cuvnsp}
Let $P\inn\Rn$ be a point that possesses the
gyrobarycentric representation
\begin{equation} \label{hvbc04ein}
P = \frac{
\sum_{k=1}^{N} m_k \gamma_{A_k}^{\phantom{O}} A_k
}{
\sum_{k=1}^{N} m_k \gamma_{A_k}^{\phantom{O}}
}
\end{equation}
with respect to a gyrobarycentrically independent set
$S=\{ A_1,\ldots,A_N\}\subset\Rsn\subset\Rn$
in an Einstein gyrovector space $(\Rsn,\op,\od)$.
Then, either
\begin{enumerate}
\item
$P$ lies in $\Rsn$, or
\item
$P$ lies on the boundary $\partial\Rsn$ of $\Rsn$, or
\item
$P$ does not lie in the closure $\overline{\Rsn}$ of $\Rsn$
or, equivalently, $P$ lies beyond $\overline{\Rsn}$,
\end{enumerate}
if and only if, respectively, either
\begin{enumerate}
\item
$\gamma_{P}^{\phantom{O}}$ is real, or
\item
$\gamma_{P}^{\phantom{O}} = \infty$, or
\item
$\gamma_{P}^{\phantom{O}}$ is purely imaginary,
\end{enumerate}
or, equivalently, if and only if, respectively, either
\begin{enumerate} \itemsep-8pt
\item
$m_{_P}^2 > 0$ (so that without loss of generality we can select $\mP>0$), or \\[0.0pt]
\item
$m_{_P}^2 = 0$ (so that $\mP=0$), or \\[0.0pt]
\item
$m_{_P}^2 < 0$ (so that $\mP$ is purely imaginary).
\end{enumerate}
\end{ccorollary}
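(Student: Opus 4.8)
The plan is to obtain the Corollary as an essentially immediate specialization of the Resultant Relativistically Invariant Mass Theorem \ref{mainthm1sh}, p.~\pageref{mainthm1sh}, to the case $\wb=\zerb$. First I would verify that the hypotheses of Theorem \ref{mainthm1sh} are in force. Since $S=\{A_1,\dots,A_N\}$ is gyrobarycentrically independent and the scalars $m_k$ are gyrobarycentric coordinates of $P$ with respect to $S$, Definition \ref{defhkbdeein} guarantees that $\sum_{k=1}^{N} m_k \gamma_{A_k}^{\phantom{O}}\ne0$, which is precisely condition \eqref{fastdn3} with $\vb_k$ renamed $A_k$ and $\vb_0$ renamed $P$. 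Consequently the $(n+1)$-vector equation \eqref{htkdbc01ein} has, by Theorem \ref{mainthm1sh}, a \emph{unique} solution $(\mP,P)$, in which $\mP$ is the number displayed in \eqref{htkdbc02ein4} and $\gamma_{P}^{\phantom{O}}$ is given by \eqref{hkrfkv1b}; in particular the radicand $m_{_P}^{2}$ of \eqref{htkdbc02ein4} is a well-defined real number.

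Next I would read off Items \eqref{kdmr01rm}, \eqref{kdmr02rm} and \eqref{kdmr03rm} of Theorem \ref{mainthm1sh}, taken with $\wb=\zerb$. They assert, respectively, that $m_{_P}^{2}>0$ implies $\mP$ is a nonzero real, $\gamma_{P}^{\phantom{O}}$ is real, and $P\in\Rsn$; that $m_{_P}^{2}<0$ implies $\mP$ is purely imaginary, $\gamma_{P}^{\phantom{O}}$ is purely imaginary, and $P$ lies beyond $\overline{\Rsn}$; and that $m_{_P}^{2}=0$ implies $\mP=0$, $\gamma_{P}^{\phantom{O}}=\infty$, and $P$ lies on $\partial\Rsn$. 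The two halves of each of these conclusions go hand in hand because of the defining relation $\gamma_{P}^{\phantom{O}} = 1/\sqrt{1-\|P\|^{2}/s^{2}}$: under the convention for square roots of negative numbers adopted in Remark \ref{ogfnd}, p.~\pageref{ogfnd}, this quantity is real precisely when $\|P\|<s$, equals $\infty$ precisely when $\|P\|=s$, and is purely imaginary precisely when $\|P\|>s$. Thus each case of the sign trichotomy for $m_{_P}^{2}$ forces the matching case of the location trichotomy for $P$ and of the trichotomy for $\gamma_{P}^{\phantom{O}}$.

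Finally I would promote these three one-directional implications to the biconditionals required by the Corollary. Since $m_{_P}^{2}$ is a real number, the alternatives $m_{_P}^{2}>0$, $m_{_P}^{2}=0$, $m_{_P}^{2}<0$ are mutually exclusive and jointly exhaustive; the same holds for the three locations of $P$ relative to the $s$-ball and for the three possibilities for $\gamma_{P}^{\phantom{O}}$. A complete set of implications from the cases of one such trichotomy to the matching cases of another automatically furnishes all the converses, so every ``if and only if'' in the statement follows. I do not expect any genuine obstacle here: the entire substance of the Corollary is already contained in Theorem \ref{mainthm1sh}, and the only step requiring a moment's care is checking, via Definition \ref{defhkbdeein}, that the nonvanishing hypothesis \eqref{fastdn3} holds --- which is what makes the uniqueness assertion of Theorem \ref{mainthm1sh}, and hence the identification of the representation constant $\mP$ with the mass $m_{0}$ of that theorem, available.
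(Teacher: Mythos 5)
Your proposal is correct and follows essentially the same route as the paper: the paper's own proof is the one-line remark that the Corollary ``follows immediately from the definition of gamma factors and from Theorem \ref{thmfkvme},'' where Theorem \ref{thmfkvme} is itself just the specialization of Theorem \ref{mainthm1sh} to this setting, exactly the reduction you carry out. Your added care in checking the nonvanishing hypothesis \eqref{fastdn3} via Definition \ref{defhkbdeein} and in upgrading the three one-directional implications to biconditionals by the mutual-exclusivity/exhaustiveness of the trichotomies merely makes explicit what the paper leaves implicit.
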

\begin{proof}
The proof of the Corollary follows immediately from the definition
of gamma factors and from Theorem \ref{thmfkvme}.
\end{proof}

Additionally, the point $P$ in Corollary \ref{cuvnsp}
lies in the interior of gyrosimplex $A_1\ldots A_N$
if and only if the gyrobarycentric coordinates of $P$
are all positive or all negative.

\index{gyrobarycentric representation existence}
\index{existence, gyrobarycentric representation}
\begin{ttheorem}\label{tumingen}
{\bf (Gyrobarycentric Representation Existence).}
Let $S=\{A_1,\dots,A_N\}$ be a gyrobarycentrically independent set
of $N$ points in an Einstein gyrovector space $(\Rsn,\op,\od)$, $n\ge N-1$.
Then, $P\inn\Rn$ possesses a gyrobarycentric representation
\begin{equation} \label{htkdbc03sen}
P = \frac{
\sum_{k=1}^{N} m_k \gAk A_k }{\sum_{k=1}^{N} m_k\gAk}
\end{equation}
with respect to $S$, with homogeneous gyrobarycentric coordinates
$m_k\inn\Rb$, $k=1,2,\dots,N$, that satisfy the condition
\begin{equation} \label{gshmceucten}
\sum_{k=1}^{N} m_k \gAk ~\ne~0
\end{equation}
if and only if
\begin{equation} \label{tmreden}
P \in \AAb_{N,1}
\,,
\end{equation}
where $\AAb_{N,1}$ is the $(N-1)$-gyroflat
\begin{equation} \label{tmredef}
\AAb_{N,1} = A_1 \op {\rmspan}\{\om A_1 \op A_2,\dots, \om A_1 \op A_N\}
\subset \Rn
\,.
\end{equation}
\end{ttheorem}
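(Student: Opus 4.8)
The plan is to follow the proof of the Euclidean Barycentric Representation Existence Theorem~\ref{tuming}, systematically replacing vector addition by Einstein addition, translations by left gyrotranslations, and the Barycentric Representation Covariance Theorem~\ref{thmfkvne} by the Gyrobarycentric Representation Gyrocovariance Theorem~\ref{thmfkvme}. As a first step I would introduce the abbreviation $B_k := \om A_1 \op A_k$, $k=1,\dots,N$, so that $B_1 = \zerb$, and record two preliminary observations. First, since left gyrotranslations are gyroisometries of $\Rsn$ (Theorem~\ref{thmdmvdein}), each $B_k$ actually lies in the ball $\Rsn$, hence each $\gamma_{B_k}$ is a real number in $[1,\infty)$ --- finite and nonzero, so division by it is legitimate. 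Second, by the left cancellation law \eqref{eq01b} we have $A_1\op B_k = A_k$ and $A_1\op(\om A_1\op P)=P$, while by Definition~\ref{defptein} the set $\{B_1,\dots,B_N\}$ is itself gyrobarycentrically independent, its defining vectors $\om B_1\op B_k = B_k$ coinciding with those of $S$.

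For the ``if'' direction I would assume \eqref{tmreden}, i.e.\ $P\in\AAb_{N,1}=A_1\op\rmspan\{B_2,\dots,B_N\}$. Left-gyrotranslating by $\om A_1$ and invoking \eqref{eq01b}, this is equivalent to $\om A_1\op P\in\rmspan\{B_2,\dots,B_N\}$, so there are scalars $\lambda_2,\dots,\lambda_N\in\Rb$ with $\om A_1\op P=\sum_{k=2}^N\lambda_k B_k$ as a vector identity in $\Rn$. Since $B_1=\zerb$ I would adjoin the trivial term with coefficient $\lambda_1:=1-\sum_{k=2}^N\lambda_k$, so that $\sum_{k=1}^N\lambda_k=1$, and then set $m_k:=\lambda_k/\gamma_{B_k}$, obtaining
\[
\om A_1\op P=\frac{\sum_{k=1}^N\lambda_k B_k}{\sum_{k=1}^N\lambda_k}=\frac{\sum_{k=1}^N m_k\gamma_{B_k}B_k}{\sum_{k=1}^N m_k\gamma_{B_k}},
\qquad \sum_{k=1}^N m_k\gamma_{B_k}=1\ne0,
\]
which exhibits $\om A_1\op P$ as a gyrobarycentric combination of $\{B_1,\dots,B_N\}$ with coordinates $m_k$. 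Applying the left-gyrotranslation gyrocovariance identity \eqref{hkrfkv2a} of Theorem~\ref{thmfkvme} with $X=A_1$, together with $A_1\op B_k=A_k$ and $A_1\op(\om A_1\op P)=P$, then yields the desired representation \eqref{htkdbc03sen}; and the companion identity \eqref{hkrfkv2b} gives $\sum_{k=1}^N m_k\gamma_{A_k}=\mP\gamma_P$ (the representation constant $\mP$ being the same for the $A_k$- and the $B_k$-representations, by left-gyrotranslation invariance of $\mP$), whence I would invoke Theorem~\ref{mainthm1sh}, equivalently Corollary~\ref{cuvnsp}, to conclude that this quantity is nonzero whether $P$ lies inside, on, or beyond the ball --- which is \eqref{gshmceucten}.

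The converse is the easy, essentially formal direction, exactly as in the Euclidean case. Starting from a representation \eqref{htkdbc03sen} with $\sum_{k=1}^N m_k\gamma_{A_k}\ne0$, I would apply the gyrocovariance identity \eqref{hkrfkv2a} with $X=\om A_1$, again using \eqref{eq01b}, to get
\[
\om A_1\op P=\frac{\sum_{k=1}^N m_k\gamma_{B_k}B_k}{\sum_{k=1}^N m_k\gamma_{B_k}}\in\rmspan\{B_2,\dots,B_N\},
\]
the membership holding because the $k=1$ term vanishes ($B_1=\zerb$). Hence $P=A_1\op(\om A_1\op P)\in A_1\op\rmspan\{B_2,\dots,B_N\}=\AAb_{N,1}$, which is \eqref{tmreden}.

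The only genuinely delicate point --- the one I would flag as the main obstacle --- is the nonvanishing of the denominator $\sum_{k=1}^N m_k\gamma_{A_k}$ in the ``if'' direction. Unlike the Euclidean situation, where the representation constant is simply $\sum m_k$, here $\sum_k m_k\gamma_{A_k}=\mP\gamma_P$, and both $\mP$ and $\gamma_P$ can degenerate when $P$ sits on the boundary sphere ($\mP=0$, $\gamma_P=\infty$), so the product must be read through the extended arithmetic of Remark~\ref{ogfnd}. It is precisely the case analysis on $m_0^2$ in Theorem~\ref{mainthm1sh} (mirrored in Corollary~\ref{cuvnsp}) that licenses $\mP\gamma_P\ne0$ in all three regimes, so the argument genuinely rests on having that theorem available; the remaining manipulations --- the rescaling of coordinates by the $\gamma_{B_k}$ and the normalization $\sum\lambda_k=1$ --- are routine.
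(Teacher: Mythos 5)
Your proposal is correct and follows essentially the same route as the paper: left-gyrotranslate by $\om A_1$, expand $\om A_1\op P$ over the span of the $\om A_1\op A_k$ with a normalization fixing the free coefficient $m_1$, and return to $P$ via the gyrocovariance identity \eqref{hkrfkv2a} of Theorem \ref{thmfkvme}; the converse is the same application of \eqref{hkrfkv2a} with $X=\om A_1$. Your intermediate $\lambda_k$ and the rescaling $m_k=\lambda_k/\gamma_{B_k}$ are only a cosmetic repackaging of the paper's direct expansion \eqref{sumrf1en}, and your closing discussion of why $\sum_k m_k\gAk\ne0$ survives the left gyrotranslation makes explicit what the paper leaves to Lemma \ref{dagimi} and homogeneity.
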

\begin{proof}
Assuming \eqref{tmreden}, we have
by the left cancellation law \eqref{eq01b}, p.~\pageref{eq01b}, of
Einstein addition,
\begin{equation} \label{rkdns}
\om A_1 \op P \in {\rmspan}\{\om A_1 \op A_2,\dots, \om A_1 \op A_N\}
\,.
\end{equation}
Hence, there exist scalars $m_k\inn\Rb$, $k=2,\dots,N$,
such that
\begin{equation} \label{sumrf1en}
\om A_1 \op P =
\sum_{k=2}^{N} m_k \gamma_{\om A_1 \op A_k}^{\phantom{O}}
(\om A_1 \op A_k)
=
\sum_{k=1}^{N} m_k \gamma_{\om A_1 \op A_k}^{\phantom{O}}
(\om A_1 \op A_k)
\subset \Rn
\end{equation}
for any scalar $m_1\inn\Rb$.
The arbitrariness of $m_1$ follows from $\om A_1\op A_1=\zerb$
for $k=1$ in \eqref{sumrf1en}.
Owing to the gyrobarycentrically independence of $S$,
the coefficients $m_k$, $k=2,\dots,N$, in \eqref{sumrf1en} are determined uniquely
by the gyrovector $\om A_1 \op P$ and by the gamma factors
$\gamma_{\om A_1 \op A_k}^{\phantom{O}}$.
Here, the gyrovectors $\om A_1\op P$ and $\om A_1\op A_k$
are considered as vectors in $\Rn\supset\Rsn$.

We now select the special scalar $m_1$ that is uniquely determined
by the normalization condition
\begin{equation} \label{sumrf2en}
\sum_{k=1}^{N} m_k \gamma_{\om A_1 \op A_k}^{\phantom{O}} ~=~1
\,,
\end{equation}
that is,
\begin{equation} \label{sumrf2d5en}
m_1 = 1 - \sum_{k=2}^{N} m_k \gamma_{\om A_1 \op A_k}^{\phantom{O}}
\,.
\end{equation}
Then, \eqref{sumrf1en} can be written as
\begin{equation} \label{sumrf3en}
\om A_1\op P
= \frac{
\sum_{k=1}^{N} m_k \gamma_{\om A_1 \op A_k}^{\phantom{O}}
(\om A_1 \op A_k)
}{
\sum_{k=1}^{N} m_k \gamma_{\om A_1 \op A_k}^{\phantom{O}}
}
\,.
\end{equation}
Following Identity \eqref{hkrfkv2a}, p.~\pageref{hkrfkv2a}, of the
Gyrobarycentric Representation Gyrocovariance Theorem \ref{thmfkvme},
with $X=\om A_1$, \eqref{sumrf3en} yields
\begin{equation} \label{sumrf4en}
P = \frac{
\sum_{k=1}^{N} m_k \gAk A_k }{\sum_{k=1}^{N} m_k\gAk}
\,.
\end{equation}

Owing to the homogeneity of the coordinates $m_k$ in \eqref{sumrf4en},
the representation of $P$ in \eqref{sumrf4en} remains valid if we replace
the normalization condition \eqref{sumrf2en} by the weaker condition
\eqref{gshmceucten}, thus obtaining in \eqref{sumrf4en}
the desired gyrobarycentric representation
\eqref{htkdbc03sen}\,--\,\eqref{gshmceucten} of $P$.

Conversely, assuming \eqref{htkdbc03sen}\,--\,\eqref{gshmceucten},
we have by Result \eqref{hkrfkv2a} of Theorem \ref{thmfkvme}, with $X=\om A_1$,
\begin{equation} \label{hkv5a}
\om A_1 \op  P = \frac{
\sum_{k=1}^{N} m_k \gamma_{\om A_1 \op   A_k}^{\phantom{O}} (\om A_1 \op   A_k)
}{
\sum_{k=1}^{N} m_k \gamma_{\om A_1 \op   A_k}^{\phantom{O}}
}
\end{equation}
implying \eqref{tmreden}, as desired.
\end{proof}

\begin{llemma}\label{emtbein}
Let $S=\{A_1,A_2,~\ldots~,A_N\}$, $N\ge2$, be a
gyrobarycentrically independent set of $N$ points in an
Einstein gyrovector space $(\Rsn,\op,\od)$, $n\ge N-1$, and let
\begin{equation} \label{hamda1en}
\AAb_{N,k} =
A_k \op \rmspan\,\{\om A_k\op A_1,\om A_k\op A_2,\dots,\om A_k\op A_N\} \subset \Rn
\end{equation}
for each $k$, $1\le k\le N$.

Then, $\AAb_{N,k} := \AAb_{N}$
is independent of $k$.
\end{llemma}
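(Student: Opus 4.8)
The plan is to follow the Euclidean Lemma~\ref{emtbeuc} step for step, with Einstein addition in place of vector addition and the gyrobarycentric versions of the tools used there. Fix two distinct indices $k_1,k_2\inn\{1,\dots,N\}$; by symmetry it suffices to prove $\AAb_{N,k_1}\subset\AAb_{N,k_2}$, so let $P$ be a point of $\AAb_{N,k_1}$. By the left cancellation law \eqref{eq01b}, membership $P\inn\AAb_{N,k_1}$ means $\om A_{k_1}\op P\inn\rmspan\{\om A_{k_1}\op A_1,\dots,\om A_{k_1}\op A_N\}$, hence (after absorbing the harmless nonzero gamma factors into the coefficients and noting that the $k=k_1$ term is $\bz$ and so contributes nothing) that $\om A_{k_1}\op P$ is a gyrobarycentric combination of the points $\om A_{k_1}\op A_k$, $k=1,\dots,N$, with some coordinates $m_k\inn\Rb$ that may be normalized so that $\sum_k m_k\gamma_{\om A_{k_1}\op A_k}^{\phantom{O}}\ne0$.

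Next I would recover a gyrobarycentric representation of $P$ itself with respect to $S$. Applying Identity \eqref{hkrfkv2a} of the Gyrobarycentric Representation Gyrocovariance Theorem~\ref{thmfkvme} to the base set $\{\om A_{k_1}\op A_k\}_{k=1}^{N}$ and the left gyrotranslation by $A_{k_1}$, and using the cancellations $A_{k_1}\op(\om A_{k_1}\op A_k)=A_k$ and $A_{k_1}\op(\om A_{k_1}\op P)=P$, gives
\begin{equation*}
P=\frac{\sum_{k=1}^{N} m_k\gamma_{A_k}^{\phantom{O}}A_k}{\sum_{k=1}^{N} m_k\gamma_{A_k}^{\phantom{O}}}\,,
\end{equation*}
that is, exactly the gyrobarycentric representation \eqref{hkrfkv1a} of $P$ with the same coordinates $m_k$. (The base set $\{\om A_{k_1}\op A_k\}$ is again gyrobarycentrically independent because, by the Left Gyrotranslation Theorem~\ref{thm5d8}, its difference gyrovectors are obtained from those of $S$ by a gyration, hence by an invertible linear map of $\Rn$; moreover Identity \eqref{hkrfkv2a} itself rests only on Theorem~\ref{mainthm1sh}, so no independence is even strictly needed here.) Now apply \eqref{hkrfkv2a} to $S$ and $P$ once more, this time with the left gyrotranslation $X=\om A_{k_2}$, which yields
\begin{equation*}
\om A_{k_2}\op P=\frac{\sum_{k=1}^{N} m_k\gamma_{\om A_{k_2}\op A_k}^{\phantom{O}}(\om A_{k_2}\op A_k)}{\sum_{k=1}^{N} m_k\gamma_{\om A_{k_2}\op A_k}^{\phantom{O}}}\inn\rmspan\{\om A_{k_2}\op A_1,\dots,\om A_{k_2}\op A_N\}\,.
\end{equation*}
By the definition \eqref{hamda1en} this says precisely $P\inn\AAb_{N,k_2}$, and interchanging $k_1$ and $k_2$ then gives $\AAb_{N,k_1}=\AAb_{N,k_2}=:\AAb_N$.

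The conceptual content is entirely carried by the gyrocovariance identity \eqref{hkrfkv2a}, which already appeared (with $X=\om A_1$) in the proof of the Gyrobarycentric Representation Existence Theorem~\ref{tumingen}; indeed a more economical write-up would simply observe that Theorem~\ref{tumingen} holds verbatim with the base vertex $A_1$ replaced by $A_{k_1}$ and then quote \eqref{hkrfkv2a} with $X=\om A_{k_2}$, exactly as in Lemma~\ref{emtbeuc}. I expect the only real points needing care to be bookkeeping ones: first, the point $P$ need not lie in $\Rsn$ (it may lie in $\Rn$ beyond the closed $s$-ball, cf.\ Corollary~\ref{cuvnsp} and Remark~\ref{ogfnd}), so every expression $\om A_{k_i}\op P$ and every cancellation involving $P$ must be read in the extended-domain sense of Einstein addition; and second, one must justify explicitly that \eqref{hkrfkv2a} may be invoked with an arbitrary gyrobarycentrically independent base set and an arbitrary left gyrotranslation — including translation \emph{by} $A_{k_1}$ rather than by $\om A_{k_1}$ — which is where the Left Gyrotranslation Theorem~\ref{thm5d8} and the base-point insensitivity of gyrobarycentric independence enter.
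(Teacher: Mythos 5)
Your proposal is correct and follows essentially the same route as the paper: the paper's proof simply cites the Gyrobarycentric Representation Existence Theorem~\ref{tumingen} (with the base vertex relabeled from $A_1$ to $A_{k_1}$) to get a representation of $P$ with respect to $S$, then applies the gyrocovariance identity \eqref{hkrfkv2a} with $X=\om A_{k_2}$ and the left cancellation law to conclude $P\in\AAb_{N,k_2}$, finishing by symmetry. In other words, the ``more economical write-up'' you describe at the end is exactly the paper's proof; your main text merely re-derives Theorem~\ref{tumingen} inline for the base vertex $A_{k_1}$, which is harmless.
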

\begin{proof}
Let $k_1$ and $k_2$ be two distinct integers, $1\le k_1,k_2\le N$,
and let $P\in\AAb_{N,k_1}$. Then,
by the Gyrobarycentric Representation Existence Theorem \ref{tumingen},
$P$ possesses a gyrobarycentric representation
\begin{equation} \label{hamda2en}
P = \frac{
\sum_{k=1}^{N} m_k \gamma_{A_k}^{\phantom{O}} A_k
}{
\sum_{k=1}^{N} m_k \gamma_{A_k}^{\phantom{O}}
}
\,,
\end{equation}
$\sum_{k=1}^{N} m_k \gamma_{A_k}^{\phantom{O}} ~\ne~0$.

Applying the
Gyrobarycentric Representation Gyrocovariance Theorem \ref{thmfkvme},
p.~\pageref{thmfkvme},
with
$X=\om A_{k_2}$ to \eqref{hamda2en}, we obtain the equation
\begin{equation} \label{hamda3en}
\om A_{k_2} \op P = \frac{
\sum_{k=1}^{N} m_k \gamma_{\om A_{k_2} \op   A_k}^{\phantom{O}} (\om A_{k_2} \op   A_k)
}{
\sum_{k=1}^{N} m_k \gamma_{\om A_{k_2} \op   A_k}^{\phantom{O}}
}
\,.
\end{equation}

Hence,
\begin{equation} \label{hamda4en}
\om A_{k_2} \op P \in \rmspan\,\{\om A_{k_2}+A_1,\ldots,\om A_{k_2}+A_N\} \subset \Rn
\,,
\end{equation}
so that, by means of the
left cancellation law \eqref{eq01b}, p.~\pageref{eq01b},
$P\in\AAb_{N,k_2}$.
Hence, $\AAb_{N,k_1} \subset \AAb_{N,k_2}$.
The proof of the reverse inclusion is similar
(just interchange $k_1$ and $k_2$), so that
$\AAb_{N,k_1} = \AAb_{N,k_2}$, as desired.
\end{proof}

\section{Uniqueness of Gyrobarycentric Representations} \label{gncdhc}
\index{gyrobarycentric representations, uniqueness}

\begin{rremark}\label{uptid}
{\bf (The Index Notation).}
{\it
It will prove useful to use the index notation for indexed points $A_k$,
$k\in\Nb$,
in Einstein gyrovector spaces $(\Rsn,\op,\od)$ as follows:\index{index notation}
\begin{equation} \label{indexnotation}
\ab_{ij} = \om A_i \op A_j \,,
\hspace{1.2cm}
a_{ij}=\|\ab_{ij}\|\,,
\hspace{1.2cm}
\gamma_{ij}^{\phantom{O}} = \gamma_{\ab_{ij}}^{\phantom{O}} =\gamma_{a_{ij}}^{\phantom{O}}
\,,
\end{equation}
noting that $a_{ij}=a_{ji}$, $\gamma_{ij}^{\phantom{O}}=\gamma_{ji}^{\phantom{O}}$,
$\ab_{ii}=\zerb$, $a_{ii}=0$ and $\gamma_{ii}^{\phantom{O}}=1$.
}
\end{rremark}

\begin{ttheorem}\label{thmfwmde}
\index{gyrobarycentric representation uniqueness}
{\bf (Gyrobarycentric\,Representation\,Uniqueness).}
A gyrobarycentric representation of a point in an
Einstein gyrovector space $(\Rsn,\op,\od)$ with respect to a
gyrobarycentrically independent set\index{barycentric independence}
$S=\{A_1,\ldots,A_N\}$ is unique.
\end{ttheorem}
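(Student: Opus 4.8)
The plan is to use the left gyrotranslation covariance of gyrobarycentric representations, namely Identity \eqref{hkrfkv2a} of the Gyrobarycentric Representation Gyrocovariance Theorem \ref{thmfkvme}, to recover the homogeneous coordinates $m_k$ from the single gyrovector $\om A_1\op P$, up to one overall nonzero scalar factor. Concretely, suppose $P\inn\Rn$ possesses two gyrobarycentric representations with respect to the gyrobarycentrically independent set $S=\{A_1,\dots,A_N\}$, with coordinate vectors $(m_1,\dots,m_N)$ and $(m_1^\prime,\dots,m_N^\prime)$. Adopting the index notation of Remark \ref{uptid}, I write $\ab_{1k}=\om A_1\op A_k$, regarded as a vector in $\Rn$, and $\gamma_{1k}^{\phantom{O}}=\gamma_{\ab_{1k}}^{\phantom{O}}$, so that $\ab_{11}=\zerb$ and $\gamma_{11}^{\phantom{O}}=1$.

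First I would apply \eqref{hkrfkv2a} with $X=\om A_1$ to each of the two representations. Because $\om A_1\op A_1=\zerb$, the $k=1$ term drops out of the numerator, leaving
\begin{equation} \label{propuniq1}
\om A_1\op P \;=\; \frac{1}{\mu}\sum_{k=2}^{N} m_k\,\gamma_{1k}^{\phantom{O}}\,\ab_{1k}\,,
\qquad
\mu:=\sum_{k=1}^{N} m_k\,\gamma_{1k}^{\phantom{O}}\,,
\end{equation}
together with the analogous identity for the primed coordinates, whose denominator is $\mu^\prime:=\sum_{k=1}^{N} m_k^\prime\gamma_{1k}^{\phantom{O}}$. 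The key preliminary point is that $\mu\ne0$ and $\mu^\prime\ne0$: by Lemma \ref{dagimi}, applied with $\wb=\om A_1$, the condition $\sum_{k} m_k\gamma_{1k}^{\phantom{O}}\ne0$ is equivalent to the defining non-degeneracy condition $\sum_{k} m_k\gamma_{A_k}^{\phantom{O}}\ne0$ of a gyrobarycentric representation, which holds for both coordinate vectors by hypothesis.

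Next, since $S$ is gyrobarycentrically independent, the vectors $\ab_{12},\dots,\ab_{1N}$ are linearly independent in $\Rn$. Equating the two instances of \eqref{propuniq1} and comparing the coefficients of each $\ab_{1k}$, $k=2,\dots,N$, gives $m_k\gamma_{1k}^{\phantom{O}}/\mu=m_k^\prime\gamma_{1k}^{\phantom{O}}/\mu^\prime$, and since $\gamma_{1k}^{\phantom{O}}\ge1$ this yields $m_k/\mu=m_k^\prime/\mu^\prime$ for $k=2,\dots,N$. For the index $k=1$, the identity $m_1=\mu-\sum_{k=2}^{N} m_k\gamma_{1k}^{\phantom{O}}$ (and its primed analogue), which is just the definition of $\mu$ rearranged, then forces $m_1/\mu=1-\sum_{k=2}^{N}(m_k/\mu)\gamma_{1k}^{\phantom{O}}=1-\sum_{k=2}^{N}(m_k^\prime/\mu^\prime)\gamma_{1k}^{\phantom{O}}=m_1^\prime/\mu^\prime$. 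Hence $m_k^\prime=(\mu^\prime/\mu)\,m_k$ for all $k$, so the two coordinate vectors are proportional; this is exactly the uniqueness of the homogeneous gyrobarycentric coordinates $(m_1\!:\dots\!:\!m_N)$, and under a normalization such as $\sum_k m_k=1$ the factor $\mu^\prime/\mu$ is pinned to $1$, giving literal uniqueness. I expect no deep obstacle here; the only care needed is the bookkeeping around \eqref{propuniq1} — checking that the $k=1$ summand vanishes under $X=\om A_1$, and, more substantively, justifying $\mu\ne0$, for which invoking Lemma \ref{dagimi} (the left-gyrotranslation invariance of the non-degeneracy condition) is precisely the tool provided by the earlier development. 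Everything that follows is elementary linear algebra in $\Rn$ resting on the gyrobarycentric independence of $S$.
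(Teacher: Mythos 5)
Your proposal is correct and follows essentially the same route as the paper's proof: left gyrotranslate both representations by $\om A_1$ via Identity \eqref{hkrfkv2a} of Theorem \ref{thmfkvme}, and use the linear independence of the gyrovectors $\om A_1\op A_k$ guaranteed by gyrobarycentric independence to force proportionality of the coordinate vectors. Your treatment is in fact slightly more careful than the paper's on two points it leaves implicit -- recovering the coefficient $m_1$ from the denominator relation rather than from the (vanishing) $k=1$ term of the numerator, and justifying via Lemma \ref{dagimi} that the left-gyrotranslated normalization sums are nonzero.
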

\begin{proof}
Let
\begin{equation} \label{ksmr01}
P = \frac{
\sum_{k=1}^{N} m_k \gamma_{  A_k}^{\phantom{O}}   A_k
}{
\sum_{k=1}^{N} m_k \gamma_{  A_k}^{\phantom{O}}
}
=
\frac{
\sum_{k=1}^{N} m_k^\prime \gamma_{  A_k}^{\phantom{O}}   A_k
}{
\sum_{k=1}^{N} m_k^\prime \gamma_{  A_k}^{\phantom{O}}
}
\in\Rn
\end{equation}
be two gyrobarycentric representations of a point $P$,
\begin{equation} \label{tsreden}
P \in \AAbN \subset \Rn
\end{equation}
with respect to a
gyrobarycentrically independent set $S=\{ A_1,\ldots,A_N\}\subset\Rsn$
in an Einstein gyrovector space $(\Rsn,\op,\od)$.

Then, by
Theorem \ref{thmfkvme} with $X=\om A_j$ in \eqref{hkrfkv2a},
along with the convenient index notation \eqref{indexnotation},
we have from \eqref{ksmr01}
\begin{equation} \label{ksmr03}
\begin{split}
\om A_j \op  P &= \frac{
\sumneq        m_k \gamma_{\om A_j \op A_k}^{\phantom{O}} (\om A_j \op A_k)
}{
\sum_{k=1}^{N} m_k \gamma_{\om A_j \op A_k}^{\phantom{O}}
}
\\[8pt] &=
\frac{
\sumneq        m_k \gamma_{jk}^{\phantom{O}} \ab_{jk}
}{
\sum_{k=1}^{N} m_k \gamma_{jk}^{\phantom{O}}
}
\\[8pt] &=
\frac{
\sumneq        m_k^\prime \gamma_{jk}^{\phantom{O}} \ab_{jk}
}{
\sum_{k=1}^{N} m_k^\prime \gamma_{jk}^{\phantom{O}}
}
\end{split}
\end{equation}
for any $A_j$, $1\le j\le N$.
Note that when $k=j$ in \eqref{ksmr03}, $\ab_{jk}=\om A_j\op A_k=\zerb$ and
$\gamma_{jk}^{\phantom{O}}=\gamma_{\ab_{jk}}^{\phantom{O}} = \gamma_{\zerb}^{\phantom{O}} = 1$.

The set $S=\{ A_1,\ldots,A_N\}\subset\Rsn\subset\Rn$ is gyrobarycentrically independent.
Hence, by Def.~\ref{defptein}, the set of gyrovectors
$\ab_{jk} = \om A_j\op A_k$, $k=1,\ldots,N$, $k\ne j$,
considered as vectors in $\Rn$, forms a set
of $N-1$ linearly independent vectors for each $j$.
Owing to this linear independence,
\begin{equation} \label{ksmr04}
m_k^\prime = cm_k
\end{equation}
for all $k=1,\ldots,N$, where $c$ is a nonzero constant.
Since gyrobarycentric coordinates are homogeneous,
the nonzero common factor, $c$, of the gyrobarycentric coordinates of a
gyrobarycentric representation
is irrelevant. Hence, the two gyrobarycentric representations of
$P$ in \eqref{ksmr01} coincide, so that the
gyrobarycentric representation \eqref{ksmr01} of $P$
with respect to a given gyrobarycentrically independent set is unique.
\end{proof}

\section{Gyrovector Gyroconvex Span} \label{dvekim}
\index{gyrovector gyroconvex span}

Let $P\in\Rsn$ be a point in an Einstein gyrovector space $(\Rsn,\op,\od)$
that possesses a gyrobarycentric representation,
\begin{equation} \label{knas01}
P = \frac{
\sum_{k=1}^{N} m_k \gamma_{  A_k}^{\phantom{O}}   A_k
}{
\sum_{k=1}^{N} m_k \gamma_{  A_k}^{\phantom{O}}
}
\,,
\end{equation}
with respect to a
gyrobarycentrically independent set $S=\{ A_1,\ldots,A_N\}\subset\Rsn$.

Then, by Identity \eqref{hkrfkv2a} of
the Gyrobarycentric Representation Gyrocovariance Theorem \ref{thmfkvme}
with $X=\om A_0$, the gyrobarycentric representation \eqref{knas01}
gives rise to Identity \eqref{knas02}
that we employ in the following Definition.

\begin{ddefinition}\label{defconvexvc}
{\bf (Gyrovector Gyroconvex Span).}\index{gyrovector gyroconvex span, def.}
{\it
The Identity
\begin{equation} \label{knas02}
\om A_0 \op  P = \frac{
\sum_{k=1}^{N} m_k \gamma_{\om A_0 \op A_k}^{\phantom{O}} (\om A_0 \op A_k)
}{
\sum_{k=1}^{N} m_k \gamma_{\om A_0 \op A_k}^{\phantom{O}}
}
\end{equation}
in an Einstein gyrovector space $(\Rsn,\op,\od)$ represents the
gyrovector $\om A_0\op P$ as a
gyrovector gyroconvex span\index{gyrovector gyroconvex span}
of the $N$ gyrovectors $\om A_0\op A_k$, $k=1,\ldots,N$.
}
\end{ddefinition}

The geometric significance of gyrovector gyroconvex spans
is established in the following theorem.
\index{gyrovector gyroconvex span, gyrocovariance}
\begin{ttheorem}\label{thmdvhms}
{\bf (Gyrovector Gyroconvex Span Gyrocovariance).}
The representation \eqref{knas02} of a gyrovector as a gyrovector gyroconvex span
in an Einstein gyrovector space $(\Rsn,\op,\od)$ is gyrocovariant in form.
\end{ttheorem}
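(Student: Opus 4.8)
The plan is to show that equation \eqref{knas02} retains its form under both left gyrotranslations and rotations, i.e., under all gyromotions, exactly as formalized in Definition \ref{defdknb}. Since \eqref{knas02} expresses the gyrovector $\om A_0 \op P$ in terms of the $N$ gyrovectors $\om A_0 \op A_k$ weighted by $m_k \gamma_{\om A_0 \op A_k}^{\phantom{O}}$, "gyrocovariance in form" here means that if we replace each vertex $A_k$ and the point $P$ by $X \op A_k$ and $X \op P$ (respectively by $RA_k$ and $RP$), the resulting identity is again a valid instance of \eqref{knas02}, now with base point $X \op A_0$ (respectively $RA_0$) in place of $A_0$.

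First I would treat left gyrotranslations. Fix $X \in \Rsn$. The key observation is the Left Gyrotranslation Theorem \ref{thm5d8}: $\om(X\op A_0)\op(X\op A_k) = \gyr[X,A_0](\om A_0 \op A_k)$ for every $k$, and likewise $\om(X\op A_0)\op(X\op P) = \gyr[X,A_0](\om A_0 \op P)$. Moreover, by the gyration norm invariance \eqref{eqwiuj01}, $\|\om(X\op A_0)\op(X\op A_k)\| = \|\om A_0 \op A_k\|$, hence $\gamma_{\om(X\op A_0)\op(X\op A_k)}^{\phantom{O}} = \gamma_{\om A_0 \op A_k}^{\phantom{O}}$, so the coefficients $m_k \gamma_{\om A_0 \op A_k}^{\phantom{O}}$ are unchanged. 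Therefore the right-hand side of the "translated" version of \eqref{knas02} equals
\begin{equation*}
\frac{\sum_{k=1}^{N} m_k \gamma_{\om A_0 \op A_k}^{\phantom{O}}\, \gyr[X,A_0](\om A_0 \op A_k)}{\sum_{k=1}^{N} m_k \gamma_{\om A_0 \op A_k}^{\phantom{O}}} = \gyr[X,A_0]\!\left( \frac{\sum_{k=1}^{N} m_k \gamma_{\om A_0 \op A_k}^{\phantom{O}} (\om A_0 \op A_k)}{\sum_{k=1}^{N} m_k \gamma_{\om A_0 \op A_k}^{\phantom{O}}} \right) = \gyr[X,A_0](\om A_0 \op P),
\end{equation*}
where the first equality uses that $\gyr[X,A_0]$ is a linear map of $\Rn$ (Remark \ref{ogfndi}) and thus pulls through sums and scalar quotients, and the last equality is \eqref{knas02} itself. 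But $\gyr[X,A_0](\om A_0 \op P) = \om(X\op A_0)\op(X\op P)$ by Theorem \ref{thm5d8} again, so the translated identity reads $\om(X\op A_0)\op(X\op P) = \big(\sum m_k \gamma_{\om(X\op A_0)\op(X\op A_k)}^{\phantom{O}}(\om(X\op A_0)\op(X\op A_k))\big)\big/\big(\sum m_k \gamma_{\om(X\op A_0)\op(X\op A_k)}^{\phantom{O}}\big)$, which is precisely \eqref{knas02} with $A_0$ replaced by $X\op A_0$. This handles the first line of \eqref{kadum03}.

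For rotations $R \in \son$, the argument is simpler: $R$ is linear and respects Einstein addition, $R(\om A_0 \op A_k) = \om RA_0 \op RA_k$ and $R(\om A_0 \op P) = \om RA_0 \op RP$, and $R$ preserves the norm, so $\gamma_{\om RA_0 \op RA_k}^{\phantom{O}} = \gamma_{\om A_0 \op A_k}^{\phantom{O}}$; applying $R$ to both sides of \eqref{knas02} and using linearity gives the rotated instance of \eqref{knas02} with $A_0 \mapsto RA_0$, establishing the second line of \eqref{kadum03}. The main (mild) obstacle is bookkeeping: one must be careful that the base point $A_0$ is itself transformed along with the $A_k$ and $P$, so that the "form" being preserved is the functional expression of a gyrovector $\om A_0 \op P$ in terms of $\om A_0 \op A_k$ — once this is recognized, everything reduces to the linearity of gyrations and rotations together with the norm-invariance facts already available, with no genuine computation required.
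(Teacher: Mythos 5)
Your proof is correct and follows essentially the same route as the paper: for left gyrotranslations you invoke the Left Gyrotranslation Theorem \ref{thm5d8} to convert each $\om(X\op A_0)\op(X\op A_k)$ into $\gyr[X,A_0](\om A_0\op A_k)$, use norm invariance to keep the gamma-factor coefficients fixed, pull the gyration through the sum by linearity, and apply Theorem \ref{thm5d8} once more; for rotations you use linearity, respect for Einstein addition, and norm preservation of $R\in\son$ — exactly the paper's chain \eqref{knas04} and its rotation computation \eqref{knas03}, merely in the opposite order.
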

\begin{proof}
Rotations $R$, $R\in\son$, of $\Rsn$ are linear maps of $\Rsn$ onto itself expandable
to linear maps of $\Rn$ onto itself, which respect both
Einstein addition in $\Rsn$ (see the first equation in
\eqref{dkrn1bein}, p.~\pageref{dkrn1bein})
and vector addition in $\Rn$, and which keep the norm invariant.
Hence, following \eqref{knas02} we have
\begin{equation} \label{knas03}
\begin{split}
\om RA_0\op RP &= R(\om A_0\op P)
\\[4pt]
&=
R \frac{
\sum_{k=1}^{N} m_k \gamma_{\om A_0 \op A_k}^{\phantom{O}} (\om A_0 \op A_k)
}{
\sum_{k=1}^{N} m_k \gamma_{\om A_0 \op A_k}^{\phantom{O}}
}
\\[4pt]
&=
\frac{
\sum_{k=1}^{N} m_k \gamma_{\om RA_0 \op RA_k}^{\phantom{O}} (\om RA_0 \op RA_k)
}{
\sum_{k=1}^{N} m_k \gamma_{\om RA_0 \op RA_k}^{\phantom{O}}
}
\end{split}
\end{equation}
for all rotations $R\inn\son$.
Hence, \eqref{knas02} remains invariant in form under rotations.

In the following chain of equations,
which are numbered for subsequent explanation, we complete the proof
by demonstrating that
\eqref{knas02} remains invariant in form under left gyrotranslations
as well.
\begin{equation} \label{knas04}
\begin{split}
&\frac{
\sum_{k=1}^{N} m_k \gamma_{\om (X\op A_0) \op (X\op A_k)}^{\phantom{O}} (\om (X\op A_0) \op (X\op A_k))
}{
\sum_{k=1}^{N} m_k \gamma_{\om (X\op A_0) \op (X\op A_k)}^{\phantom{O}}
}
\\[8pt]&
\overbrace{=\!\!=\!\!=}^{(1)} \hspace{0.2cm}
\frac{
\sum_{k=1}^{N} m_k \gamma_{\om A_0 \op A_k}^{\phantom{O}} \gyr[X,A_0] (\om A_0 \op A_k)
}{
\sum_{k=1}^{N} m_k \gamma_{\om A_0 \op A_k}^{\phantom{O}}
}
\\[8pt]&
\overbrace{=\!\!=\!\!=}^{(2)} \hspace{0.2cm}
\gyr[X,A_0]
\frac{
\sum_{k=1}^{N} m_k \gamma_{\om A_0 \op A_k}^{\phantom{O}} (\om A_0 \op A_k)
}{
\sum_{k=1}^{N} m_k \gamma_{\om A_0 \op A_k}^{\phantom{O}}
}
\\[8pt]&
\overbrace{=\!\!=\!\!=}^{(3)} \hspace{0.2cm}
\gyr[X,A_0] (\om A_0 \op P)
\\[8pt]&
\overbrace{=\!\!=\!\!=}^{(4)} \hspace{0.2cm}
\om (X\op A_0) \op (X\op P)
\,.
\end{split}
\end{equation}
Derivation of the numbered equalities in \eqref{knas04} follows:
\begin{enumerate}
\item\label{pamotu1}
The left-hand side of the first equation in \eqref{knas04}
is recognized as the left gyrotranslation by $X\inn\Rsn$
of the right-hand side of \eqref{knas02}.
The right-hand side of the first equation in \eqref{knas04}
follows from
the left-hand side of the first equation in \eqref{knas04}
by the Left Gyrotranslation Theorem \ref{thm5d8}, p.~\pageref{thm5d8},
noting \eqref{matik2}, p.~\pageref{matik2}.
\item\label{pamotu2}
Follows from \eqref{pamotu1} since
gyrations of $\Rsn=(\Rsn,\op,\od)$ form a subset of $\son$
so that, as such, gyrations are linear maps of $\Rn$
(see Remark \ref{ogfndi}, p.~\pageref{ogfndi}).
\item\label{pamotu3}
Follows from \eqref{pamotu2} by \eqref{knas02}.
\item\label{pamotu4}
Follows from \eqref{pamotu3}
by the Left Gyrotranslation Theorem \ref{thm5d8}, p.~\pageref{thm5d8},
thus obtaining the desired expression, which is recognized as the
left gyrotranslation by $X\inn\Rsn$
of the left-hand side of \eqref{knas02}.
\end{enumerate}
\end{proof}

Being invariant in form under hyperbolic motions, that is,
under both rotations and left gyrotranslations, Identity
\eqref{knas02} is gyrocovariant in form according to
Def.~\ref{defdknb}, p.~\pageref{defdknb}.

\section{Gyrosegments} \label{seceinseg}

A study of the gyrosegment is presented here as an example that
illustrates a simple use of gyrobarycentric coordinates in a way
analogous to the study of the segment in Sect.~\ref{seceucseg}.

Let $A_1,A_2\in\Rstwo$ be two distinct points of the
Einstein gyrovector plane $\Rstwo=(\Rstwo,\op,\od)$,
and let $P\in\AAb_{N,1}$, where $\AAb_{N,1}$ is the 1-gyroflat (gyroline)
\begin{equation} \label{dupmden}
\AAb_{N,1} = A_1\op \rmspan\,\{\om A_1\op A_1,\om A_1\op A_2\}
=A_1\op \rmspan\,\{\om A_1\op A_2\} \subset \Rtwo
\,.
\end{equation}
Then, by Theorem \ref{tumingen}, $P$ possesses a gyrobarycentric representation
\begin{equation} \label{hugein01}
P = \frac{
m_1\gamma_{A_1}^{\phantom{O}}A_1 + m_2\gamma_{A_2}^{\phantom{O}}A_2
}{
m_1\gamma_{A_1}^{\phantom{O}}  + m_2\gamma_{A_2}^{\phantom{O}}
}
\end{equation}
with respect to the gyrobarycentrically independent set $S=\{ A_1,A_2\}$,
with gyrobarycentric coordinates $m_1$ and $m_2$ satisfying
$m_1\gamma_{A_1}^{\phantom{O}}  + m_2\gamma_{A_2}^{\phantom{O}}\ne0$.
In particular:
\begin{enumerate}
\item\label{dker1}
If $m_1=0$, then $P=A_2$.
\item\label{dker2}
If $m_2=0$, then $P=A_1$.
\item\label{dker3}
If $m_1,m_2>0$, or $m_1,m_2<0$,
then $P$ lies on the interior of gyrosegment $A_1A_2$, that is,
between $A_1$ and $A_2$.
\item
If $m_1$ and $m_2$ are nonzero and have opposite signs, then $P$ lies on the
exterior of gyrosegment $A_1A_2$.
\end{enumerate}

Owing to the homogeneity of gyrobarycentric coordinates, these can be normalized
by the condition
\begin{equation} \label{hugein02}
m_1+m_2=1
\,,
\end{equation}
so that, for instance, we can parametrize $m_1$ and $m_2$ by a parameter $t$
according to the equations
$m_1=t$ and $m_2=1-t$, $0\le t\le1$.
Then, the point $P$ possesses the special gyrobarycentric representation
\begin{equation} \label{hugein03}
P = \frac{
t\gamma_{A_1}^{\phantom{O}}A_1 + (1-t)\gamma_{A_2}^{\phantom{O}}A_2
}{
t\gamma_{A_1}^{\phantom{O}}  + (1-t)\gamma_{A_2}^{\phantom{O}}
}
\,.
\end{equation}

Following the Gyrobarycentric Representation Gyrocovariance Theorem \ref{thmfkvme},
p.~\pageref{thmfkvme},
the gyrobarycentric representation
\eqref{hugein03} of $P$ obeys the identity
\begin{equation} \label{hugein04}
X\op P = \frac{
t\gamma_{X\op A_1}^{\phantom{O}}(X\op A_1) + (1-t)\gamma_{X\op A_2}^{\phantom{O}}(X\op A_2)
}{
t\gamma_{X\op A_1}^{\phantom{O}}  + (1-t)\gamma_{X\op A_2}^{\phantom{O}}
}
\end{equation}
for all $X\inn\Rstwo$.
Unlike its Euclidean counterpart \eqref{hugeuc04}, which is trivial,
Identity \eqref{hugein04} is, indeed, far away from being trivial.

Gyrobarycentric coordinates form an incisive tool that has proved
amenable to the extension of classical geometric concepts in
Euclidean geometry to the hyperbolic geometry setting
\cite{barycentric09,incenter08,mybook06,mybook05}.

\section{Gyromidpoint} \label{gnmidp}
\index{gyromidpoint}

The use of gyrobarycentric coordinates is demonstrated here
by determining the gyromidpoints of gyrosegments.
Let $A_1A_2$ be a gyrosegment
in an Einstein gyrovector space $(\Rsn,\op,\od)$, $n\ge1$,
formed by two distinct points $A_1,A_2\in\Rsn$.
The gyromidpoint $M_{12}\inn A_1\op\rmspan\{\om A_1\op A_2\}$
of gyrosegment $A_1A_2$, shown in Fig.~\ref{fig268bm},
is the point of the gyrosegment that is
equigyrodistant from $A_1$ and $A_2$, that is,
\begin{equation} \label{feksd1}
\| \om A_1 \op M_{12}\| = \| \om A_2 \op M_{12}\|
\,.
\end{equation}

In order to determine the gyromidpoint $M_{12}$ of gyrosegment $A_1A_2$, let
$M_{12}$ be given by its gyrobarycentric representation
\eqref{eq13ersw01ein} with respect to the set $S=\{A_1,A_2\}$,
\begin{equation} \label{feksd2}
M_{12} = \frac{
m_1\gamma_{A_1}^{\phantom{O}} A_1 + m_2\gamma_{A_2}^{\phantom{O}} A_2
}{
m_1\gamma_{A_1}^{\phantom{O}}     + m_2\gamma_{A_2}^{\phantom{O}}
}
\,,
\end{equation}
where the gyrobarycentric coordinates $m_1$ and $m_2$ are
to be determined in \eqref{fknsd} below.

The constant $\mMab$ of the gyrobarycentric representation
\eqref{feksd2} of $M_{12}$ is given by the equation
\begin{equation} \label{tkden}
\mMab \phantom{i} = \phantom{i} \sqrt{ (m_1+m_2)^2 + 2m_1m_2(\gammaab-1) }
~~,
\end{equation}
according to \eqref{htkdbc02ein4}.

 
\begin{figure}[t]  
 \centering         
 \psfrag{A1}{$A_1$}
 \psfrag{A2}{$A_2$}
 \psfrag{M12}{$M_{12}$}
 \psfrag{formula01}[]{$\ab_{12} := \om A_1\op A_2,~~~a_{12}:=\|\ab_{12}\|=\|\om A_1\op A_2\|$}
 \psfrag{formula02}[][][0.9]{$\hspace{-1.2cm}\om A_1 \op M_{12} = \half\od\ab_{12}, ~~~~
\|\om A_1 \op M_{12}\| = \|\om A_2 \op M_{12}\| = \half\od a_{12}$}
 \psfrag{formula03}[]{$M_{12} = \displaystyle\frac{
\gamma_{A_1}^{\phantom{O}} A_1 + \gamma_{A_2}^{\phantom{O}} A_2
}{
\gamma_{A_1}^{\phantom{O}}     + \gamma_{A_2}^{\phantom{O}}
}
$}
 \psfrag{formula04}[]{$M_{12}= \half\od(A_1\sqp A_2)$}
 \psfrag{formula05}[]{$M_{12}= A_1 \op (\om A_1 \op A_2)\od\half$}
 \psfrag{formula06}[]{$\gamma_{\half\od\ab_{12}}^{\phantom{O}} (\half\od\ab_{12})
= \displaystyle\frac{
\gammaab
}{
\sqrt{2}\sqrt{1+\gammaab}
}
\ab_{12}$}
 \psfrag{formula07}[]{$\gamma_{\half\od\ab_{12}}^{\phantom{O}}
= \displaystyle\frac{\sqrt{1+\gammaab}}{\sqrt{2}}$}
 \psfrag{formula08}[]{$$}
 \psfrag{formula09}[]{$$}
 \includegraphics[width=10cm]{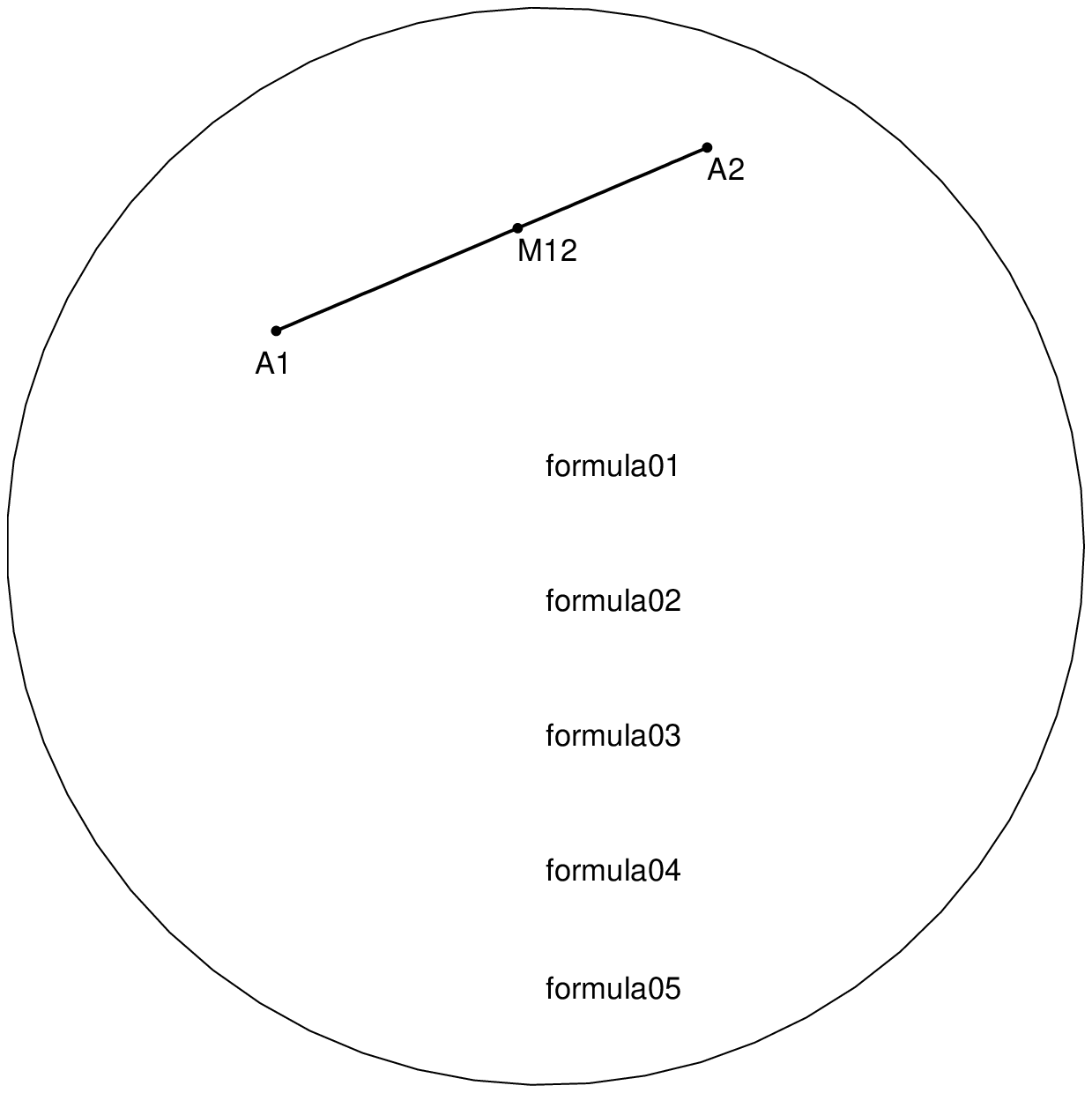}
\caption[The gyromidpoint]{
The Einstein Gyromidpoint.
The Einstein gyromidpoint $M_{12}$ of a gyrosegment $A_1A_2$ in an
Einstein gyrovector space $(\Rsn,\op,\od)$ is shown for $n=2$,
along with several useful identities each of which determines the
gyromidpoint.
\label{fig268bm}}
\end{figure}

Following the  Gyrobarycentric Representation Gyrocovariance
Theorem \ref{thmfkvme}, p.~\pageref{thmfkvme},
we have from \eqref{hkrfkv2a} with
$X=\om A_1$ and $X=\om A_2$, respectively,
\begin{equation} \label{feksd3}
\begin{split}
\om A_1 \op M_{12} &= \frac{
m_1\gamma_{\om A_1\op A_1}^{\phantom{O}} (\om A_1 \op A_1) +
m_2\gamma_{\om A_1\op A_2}^{\phantom{O}} (\om A_1 \op A_2)
}{
m_1\gamma_{\om A_1\op A_1}^{\phantom{O}} + m_2\gamma_{\om A_1\op A_2}^{\phantom{O}}
}
\\[8pt] &=
\frac{m_2\gammaab \ab_{12}} {m_1+m_2\gammaab}
\\[12pt]
\om A_2 \op M_{12} &= \frac{
m_1\gamma_{\om A_2\op A_1}^{\phantom{O}} (\om A_2 \op A_1) +
m_2\gamma_{\om A_2\op A_2}^{\phantom{O}} (\om A_2 \op A_2)
}{
m_1\gamma_{\om A_2\op A_1}^{\phantom{O}} + m_2\gamma_{\om A_2\op A_2}^{\phantom{O}}
}
\\[8pt] &=
 \frac{m_1\gammaab \ab_{21}} {m_1 \gamma_{21}^{\phantom{O}} +m_2}
\end{split}
\end{equation}
where, as indicated in Fig.~\ref{fig268bm}, we use the convenient
index notation \eqref{indexnotation},
noting that $a_{12} = a_{21}$ and $\gammaab=\gamma_{21}^{\phantom{O}}$,
while, in general,
$\ab_{12}\ne\ab_{21}$ since, by the gyrocommutative law,
$\ab_{21}=\gyr[\om A_2,A_1]\ab_{12}$.

In each of the two equations in \eqref{feksd3} we employ the
frequently used trivial identities
\begin{equation} \label{frequentrivial}
\begin{split}
\om A\op A &= \zerb
\\
\gamma_{\om A\op A}^{\phantom{O}} &= \gamma_{\zerb}^{\phantom{O}} = 1
\end{split}
\end{equation}
for all $A\in\Rsn$.

Taking magnitudes of the extreme sides of each of the two equations in
\eqref{feksd3}, we have
\begin{equation} \label{krwmdbv}
\begin{split}
\| \om A_1 \op M_{12}\| &= \frac{m_2}{m_1+m_2\gammaab} \gammaab a_{12}
\\[8pt]
\| \om A_2 \op M_{12}\| &= \frac{m_1}{m_1\gammaab+m_2} \gammaab a_{12}
\,,
\end{split}
\end{equation}
so that by \eqref{krwmdbv} and \eqref{feksd1} we have
\begin{equation} \label{krwmb}
\frac{m_1}{m_1\gammaab+m_2} = \frac{m_2}{m_1+m_2\gammaab}
\,,
\end{equation}
implying $m_1=\pm m_2 \ne0$.

For $m_1 = m_2 =: m$,
the constant $\mMab$ of the gyrobarycentric representation
\eqref{feksd2} of $M_{12}$ is given by
\begin{equation} \label{ekdsn01}
m_{M_{12}}^2 = (m_1+m_2)^2 + 2m_1m_2(\gammaab-1) = 2m^2(\gammaab+1) ~>~0
\hspace{0.8cm} (Accepted)
\,,
\end{equation}
so that, being positive, $m_{M_{12}}^2$ is acceptable since it implies,
by Corollary \ref{cuvnsp} that $M_{12} \in \Rsn$.

In contrast, for $m_1 = - m_2 =: m$,
the constant $\mMab$ of the gyrobarycentric representation
\eqref{feksd2} of $M_{12}$ is given by
\begin{equation} \label{ekdsn02}
m_{M_{12}}^2 = (m_1+m_2)^2 + 2m_1m_2(\gammaab-1) = -2m^2(\gammaab-1) ~<~0
\hspace{0.8cm} (Rejected)
\,,
\end{equation}
so that, being negative, $m_{M_{12}}^2$ is rejected since it implies,
by Corollary \ref{cuvnsp}, that $M_{12} \notin \Rsn$.
Hence, the solution $m_1=-m_2$ of \eqref{krwmb} is rejected, allowing
the unique solution $m_1=m_2$.

The unique solution for the gyrobarycentric coordinates
of the midpoint $M_{12}$
(modulo a nonzero multiplicative scalar)
is, therefore, $(m_1:m_2)=(m:m)$ or,
equivalently,
\begin{equation} \label{fknsd}
(m_1:m_2) = (1:1)
\,.
\end{equation}

Substituting the gyrobarycentric coordinates \eqref{fknsd} into
\eqref{feksd2} we, finally, obtain the gyromidpoint $M_{12}$ in terms of its
vertices $A_1$ and $A_2$ by the {\it gyromidpoint equation}\index{gyromidpoint equation}
\begin{equation} \label{midpooint36}
M_{12} = \frac{
\gamma_{A_1}^{\phantom{O}} A_1 + \gamma_{A_2}^{\phantom{O}} A_2
}{
\gamma_{A_1}^{\phantom{O}}     + \gamma_{A_2}^{\phantom{O}}
}
\,.
\end{equation}

Following \eqref{ekdsn01} and \eqref{fknsd}\,--\,\eqref{midpooint36},
the constant $\mMab$ of the
gyrobarycentric representation of the gyromidpoint
$M_{12}$ in \eqref{midpooint36} is
\begin{equation} \label{eudh0}
\mMab = \sqrt{2}\sqrt{\gammaab+1}
\,.
\end{equation}

Hence, by the Gyrobarycentric Representation Gyrocovariance
Theorem \ref{thmfkvme}, p.~\pageref{thmfkvme},\index{gyrocovariance}
the gyromidpoint $M_{12}$ possesses the three identities
\begin{subequations} \label{kudbh}
 \begin{equation} \label{kudbha}
X\op M_{12} = \frac{
\gamma_{X\op A_1}^{\phantom{O}} (X\op A_1) +
\gamma_{X\op A_2}^{\phantom{O}} (X\op A_2)
}{
\gamma_{X\op A_1}^{\phantom{O}} +
\gamma_{X\op A_2}^{\phantom{O}}
}
 \end{equation}
 \begin{equation} \label{kudbhb}
\gamma_{X\op M_{12}}^{\phantom{O}} = \frac{
\gamma_{X\op A_1}^{\phantom{O}} +
\gamma_{X\op A_2}^{\phantom{O}}
}{
\sqrt{2}\sqrt{\gammaab+1}
}
 \end{equation}
 \begin{equation} \label{kudbhc}
\gamma_{X\op M_{12}}^{\phantom{O}} (X\op M_{12}) = \frac{
\gamma_{X\op A_1}^{\phantom{O}} (X\op A_1) +
\gamma_{X\op A_2}^{\phantom{O}} (X\op A_2)
}{
\sqrt{2}\sqrt{\gammaab+1}
}
 \end{equation}
\end{subequations}
for all $X\in\Rsn$, where \eqref{kudbhc} follows immediately from
\eqref{kudbha}\,--\,\eqref{kudbhb}.

Following \eqref{kudbha} with $X=\om A_1$, by Einstein half
\eqref{ehalf53}, p.~\pageref{ehalf53}, we have
\begin{equation} \label{dhrbk1}
\om A_1\op M_{12} = \frac{
\gamma_{\om A_1\op A_2}^{\phantom{O}} (\om A_1\op A_2)
}{
1+\gamma_{\om A_1\op A_2}^{\phantom{O}}
}
= \frac{\gammaab}{1+\gammaab} \ab_{12} = \half\od\ab_{12}
\,,
\end{equation}
so that by the scaling property $V(5)$ of Einstein gyrovector spaces
(Def.~\ref{defgyrovs}),
\begin{equation} \label{dhrbk2}
\|\om A_1\op M_{12}\| = \|\half\od\ab_{12}\| = \half\od\|\ab_{12}\| = \half\od a_{12}
\,.
\end{equation}

Similarly, following \eqref{kudbhb}\,--\,\eqref{kudbhc} with $X=\om A_1$ we have
\begin{equation} \label{dhrbk3}
\gamma_{\om A_1\op M_{12}}^{\phantom{O}} = \frac{
1 + \gamma_{\om A_1\op A_2}^{\phantom{O}}
}{
\sqrt{2}\sqrt{1+\gammaab}
}
= \frac{1 + \gammaab} {\sqrt{2}\sqrt{1+\gammaab}}a
= \frac{\sqrt{1+\gammaab}}{\sqrt{2}}
\end{equation}
and
\begin{equation} \label{dhrbk4}
\gamma_{\om A_1\op M_{12}}^{\phantom{O}} (\om A_1\op M_{12}) = \frac{
\gamma_{\om A_1\op A_2}^{\phantom{O}} (\om A_1\op A_2)
}{
\sqrt{2}\sqrt{1+\gammaab}
}
= \frac{\gammaab\ab_{12}} {\sqrt{2}\sqrt{1+\gammaab}}
\,.
\end{equation}
Hence, by \eqref{dhrbk1} and \eqref{dhrbk3},
\begin{equation} \label{dhrbk5}
\gamma_{\half\od\ab_{12}}^{\phantom{O}} = \frac{\sqrt{1+\gammaab}}{\sqrt{2}}
\end{equation}
and, by \eqref{dhrbk1} and \eqref{dhrbk4},
\begin{equation} \label{dhrbk6}
\gamma_{\half\od\ab_{12}}^{\phantom{O}} (\half\od\ab_{12}) = \displaystyle\frac{
\gammaab \ab_{12}
}{
\sqrt{2}\sqrt{1+\gammaab}
}
\,.
\end{equation}

The squared gyrolength of $\half\od\ab_{12}$ is, by \eqref{dhrbk5} and
\eqref{rugh1ds}, p.~\pageref{rugh1ds},
\begin{equation} \label{hegd1}
\|\half\od\ab_{12}\|^2 = s^2 \displaystyle\frac{
\gamma_{(1/2)\od\ab_{12}}^2 -1}
{\gamma_{(1/2)\od\ab_{12}}^2}
= s^2 \displaystyle\frac{\gammaab-1}{\gammaab+1}
\end{equation}
and the squared gyrolength of $\ab_{12}$ is, by \eqref{rugh1ds}, p.~\pageref{rugh1ds},
\begin{equation} \label{hegd2}
\|\ab_{12}\|^2 = \|\om A_1\op A_2\|^2 = s^2\frac{\gamma_{12}^2-1}{\gamma_{12}^2}
\,.
\end{equation}
Indeed, by \eqref{hegd1},
\begin{equation} \label{hegd3}
\begin{split}
\|\ab_{12}\| &= \|\half\od\ab_{12}\|\op\|\half\od\ab_{12}\|
\\[4pt] &=
s\sqrt{
\frac{\gammaab-1}{\gammaab+1} }
~\op~
s\sqrt{
\frac{\gammaab-1}{\gammaab+1} }
\\[4pt] &=
s\frac{\sqrt{\gamma_{12}^2-1}}{\gamma_{12}}
~~,
\end{split}
\end{equation}
as expected from \eqref{hegd2}.

\section{Gyroline Boundary points} \label{endpts28}
\index{gyroline boundary points}

A gyroline in an Einstein gyrovector space $(\Rsn,\op,\od)$ approaches the
boundary of the ball $\Rsn$ of its space at its two {\it boundary points},
as shown in Fig.~\ref{fig321einm}, p.~\pageref{fig321einm}.

Let $A_1,A_2\in\Rsn$ be two distinct points of
an Einstein gyrovector space $(\Rcn,\op,\od)$,
and let $P$ be a generic point on the gyroline, \eqref{eqcurve94}, p.~\pageref{eqcurve94},
\begin{equation} \label{grut00}
\Lab = A_1\op(\om A_1\op A_2)\od t
\,,
\end{equation}
$t\in\Rb$, that passes through these two points. Furthermore, let
\begin{equation} \label{grut01}
P = \frac{
m_1 \gamma_{  A_1}^{\phantom{O}} A_1 + m_2 \gamma_{  A_2}^{\phantom{O}} A_2
}{
m_1 \gamma_{  A_1}^{\phantom{O}} + m_2 \gamma_{  A_2}^{\phantom{O}}
}
\end{equation}
be a gyrobarycentric representation of a generic point $P$ on
the gyroline $\Lab$ with respect
to the gyrobarycentrically independent set $S=\{A_1,A_2\}$. We wish to determine
the gyrobarycentric coordinates $m_1$ and $m_2$ in \eqref{grut01} for which
the point $P$ is a boundary point of the gyroline $\Lab$.

Owing to the homogeneity of gyrobarycentric coordinates, we can select $m_2=-1$,
obtaining from \eqref{grut01} the gyrobarycentric representation
\begin{equation} \label{grut02}
P = \frac{
m   \gamma_{  A_1}^{\phantom{O}} A_1 -\gamma_{  A_2}^{\phantom{O}} A_2
}{
m   \gamma_{  A_1}^{\phantom{O}} -\gamma_{  A_2}^{\phantom{O}}
}
\,.
\end{equation}

According to Def.~\ref{defhkbdeein} of the gyrobarycentric representation
of $P$ in \eqref{eq13ersw01ein} and its constant $\mP$ in \eqref{htkdbc02ein4},
the constant $\mP$ of the gyrobarycentric representation of $P$
in \eqref{grut01}\,--\, \eqref{grut02}
satisfies the equation
\begin{equation} \label{grut03}
\begin{split}
m_P^2 &= m_1^2 + m_2^2 + 2m_1m_2 \gamma_{\om  A_1 \op A_2}^{\phantom{O}}
\\&= m^2+1-2m\gammaab
\,,
\end{split}
\end{equation}
where we use the index notation \eqref{indexnotation}.

By Corollary \ref{cuvnsp}, p.~\pageref{cuvnsp},
the point $P$ lies on the
boundary of the ball $\Rsn$ if and only if $\mP=0$, that is by \eqref{grut03},
if and only if
\begin{equation} \label{grut04}
m^2-2m\gammaab+1=0
\,.
\end{equation}
Indeed, the two solutions of \eqref{grut04}, which are
\begin{equation} \label{grut05}
\begin{split}
m &= \gammaab + \sqrt{\gamma_{12}^2-1}
\\[8pt]
m &= \gammaab - \sqrt{\gamma_{12}^2-1}
\,,
\end{split}
\end{equation}
correspond to the two boundary points of gyroline $\Lab$,
as shown in Fig.~\ref{fig321einm}.

The substitution into \eqref{grut02} of each of the two solutions \eqref{grut05}
gives the two boundary points $E_{^{A_1}}$ and $E_{^{A_2}}$ of the
gyroline $\Lab$ in \eqref{grut00},
\begin{equation} \label{grut06}
\begin{split}
E_{^{A_1}} &= \frac{
(\gammaab+\sqrt{\gamma_{12}^2-1}) \gamma_{A_1}^{\phantom{O}}A_1 - \gamma_{A_2}^{\phantom{O}}A_2
}{
(\gammaab+\sqrt{\gamma_{12}^2-1}) \gamma_{A_1}^{\phantom{O}}    - \gamma_{A_2}^{\phantom{O}}
}
\\[8pt]
E_{^{A_2}} &= \frac{
(\gammaab-\sqrt{\gamma_{12}^2-1}) \gamma_{A_1}^{\phantom{O}}A_1 - \gamma_{A_2}^{\phantom{O}}A_2
}{
(\gammaab-\sqrt{\gamma_{12}^2-1}) \gamma_{A_1}^{\phantom{O}}    - \gamma_{A_2}^{\phantom{O}}
}
\,.
\end{split}
\end{equation}

Being points on the boundary of the $s$-ball $\Rsn$, the points
$E_{^{A_1}}$ and $E_{^{A_2}}$ are not in the
Einstein gyrovector space $(\Rsn,\op,\od)$ and their gamma factors are undefined,
\begin{equation} \label{undef}
\gamma_{E_{A_1}}^{\phantom{1}} =
\gamma_{E_{A_2}}^{\phantom{1}} = \infty
\,.
\end{equation}
Yet, their left gyrotranslation by any $X\in\Rsn$, shown in Fig.~\ref{fig321einm},
are well-defined.
Thus, for instance, the left gyrotranslation $X\op E_{A_1}$ of the
boundary point $E_{A_1}$ by any $X\in\Rsn$, which involves the
gamma factor of $X$, does not involve the undefined gamma factor of $E_{A_1}$,
as we see from the definition of Einstein gyrosums
in \eqref{eq01}, p.~\pageref{eq01}.
\index{boundary point, left gyrotranslation of}

The magnitude of a boundary point of $\Rsn$ is $s$, and, conversely,
a point of $\Rn$ with magnitude $s$ is a boundary point of $\Rsn$.
Furthermore, the magnitude
of any left gyrotranslated boundary point remains $s$, as indicated
in \eqref{domu8} below and in Fig.~\ref{fig321einm}.
Hence, a left gyrotranslated boundary point remains a boundary point.
A left gyrotranslation of a boundary point thus results in the
rotation of the boundary point about the origin of its
$s$-ball $\Rsn$, as shown in Fig.~\ref{fig321einm}.

 
\begin{figure}[t]  
 \centering         
\psfrag{A1}{$\hspace{-0.1cm}A_1$}
\psfrag{A2}{$A_2$}
\psfrag{XA1}{$X\op A_1$}
\psfrag{XA2}{$X\op A_2$}
\psfrag{E1}{$E_{^{A_1}}$}
\psfrag{E2}{$E_{^{A_2}}$}
\psfrag{XE1}{$X\op E_{^{A_1}}$}
\psfrag{XE2}{$X\op E_{^{A_2}}$}
 \includegraphics[width=9cm]{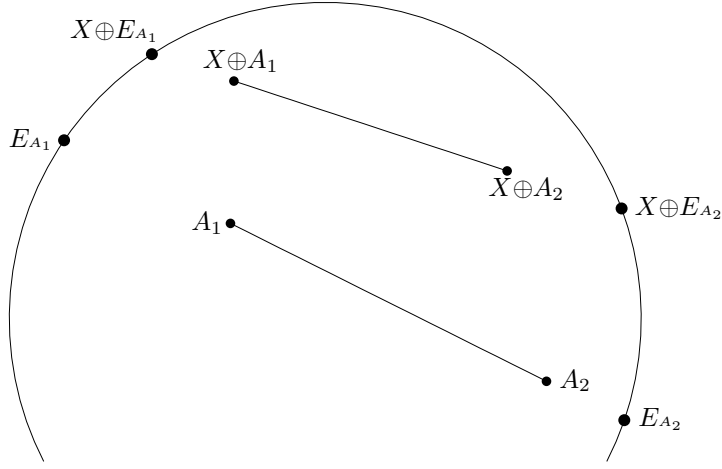}
\caption[Gyroline Boundary Points]{
Gyroline Boundary Points.\index{boundary points, gyroline}
Any gyroline in an Einstein gyrovector space $\Rsn$ approaches
the boundary $\partial\Rsn$ of the $s$-ball $\Rsn$ at two points, called
the boundary points of the gyroline. Here, a gyroline $L_{^{A_1A_2}}$ and
its two boundary points $E_{^{A_1}}$ and $E_{^{A_2}}$
in an Einstein gyrovector space $(\Rsn,\op,\od)$, $n=2$,
are shown along with their
left gyrotranslation by some $X\in\Rsn$.
It is indicated that
a gyroline and its boundary points are gyrocovariant (vary together,
in the hyperbolic geometric sense)
with respect to left gyrotranslations.
\label{fig321einm}}
\end{figure}

The left gyrotranslated boundary points
$\om A_1\op E_{^{A_1}}$ and $\om A_1\op E_{^{A_2}}$
that follow from \eqref{grut06} by means of the
gyrocovariance identity \eqref{hkrfkv2a} in Theorem \ref{thmfkvme} are particularly
elegant.
Indeed, by the gyrocovariance identity \eqref{hkrfkv2a} with $X=\om A_1$, applied to
each of the two equations in \eqref{grut06}, we have
\begin{equation} \label{gurimd}
\begin{split}
\om A_1 \op E_{^{A_1}} &= \frac{
(\gammaab + \sqrt{\gamma_{12}^2-1})
\gamma_{\om A_1 \op A_1}^{\phantom{O}} (\om A_1 \op A_1)
-
\gamma_{\om A_1 \op A_2}^{\phantom{O}} (\om A_1 \op A_2)
}{
(\gammaab + \sqrt{\gamma_{12}^2-1})
\gamma_{\om A_1 \op A_1}^{\phantom{O}}
-
\gamma_{\om A_1 \op A_2}^{\phantom{O}}
}
\\[4pt]
&= \frac{-\gammaab\ab_{12}}{(\gammaab+\sqrt{\gamma_{12}^2-1})-\gammaab}
 = \om\frac{\gammaab\ab_{12}}{\sqrt{\gamma_{12}^2-1}}
\\[8pt]
\om A_1 \op E_{^{A_2}}
&= \phantom{\om}\frac{\gammaab\ab_{12}}{\sqrt{\gamma_{12}^2-1}}
\,,
\end{split}
\end{equation}
where we use the index notation \eqref{indexnotation},
noting \eqref{frequentrivial}, p.~\pageref{frequentrivial}.

Note that by \eqref{gurimd} and \eqref{rugh1ds}, p.~\pageref{rugh1ds},
\begin{equation} \label{domu8}
\|\om A_1 \op E_{^{A_1}}\|^2 = \|\om A_1 \op E_{^{A_2}}\|^2 = \frac{
\gamma_{12}^2 a_{12}^2 }{\gamma_{12}^2-1}
=s^2
\,.
\end{equation}
Hence, the gyrodistance between $E_k$ and $A_1$, $k=1,2$, is $s$,
as expected, since boundary points of gyrolines are located on the boundary
of the $s$-ball of their Einstein gyrovector space.

The equations in \eqref{gurimd} imply, by means of the
left cancellation law \eqref{eq01b}, p.~\pageref{eq01b},
that the boundary points $E_{^{A_1}}$ and $E_{^{A_2}}$
of the gyroline $\Lab$
that passes through the points $A_1$ and $A_2$
are given by the equations
\begin{equation} \label{gurime}
\begin{split}
 E_{^{A_1}} &= A_1 \om\, \frac{\gammaab\ab_{12}}{\sqrt{\gamma_{12}^2-1}}
\\
 E_{^{A_2}} &= A_1 \op\, \frac{\gammaab\ab_{12}}{\sqrt{\gamma_{12}^2-1}}
\,,
\end{split}
\end{equation}
as shown graphically in Fig.~\ref{fig321einm}.

Interesting applications of gyrobarycentric coordinates in hyperbolic geometry
are found in \cite{mybook06,mybook05,ungar12s}.

It is well-known, as emphasized in \cite{bengtsson06}, that
Euclidean barycentric coordinates prove useful in the geometry of quantum states.
Barycentric coordinate systems underlie the study of
convex analysis \cite{rockafellar70},
and Convexity considerations are important in non-relativistic quantum mechanics
where mixed states are positive barycentric combinations
of pure states, and where barycentric coordinates are interpreted
as probabilities \cite[p.~11]{rockafellar70}.
The success in \cite{bengtsson06} and \cite{chiribella10}
of the study of the geometry of quantum states in terms of
barycentric coordinates suggests that
relativistic barycentric coordinates can
prove useful in the geometry of relativistic quantum states as well.

\end{document}